\newcommand{\blind}{1}
\setlist[itemize]{leftmargin=*}
\DeclareMathOperator*{\argmin}{arg\,min}  
\definecolor{mygreen}{RGB}{28,172,0} 
\definecolor{mylilas}{RGB}{170,55,241}
\newcommand\given[1][]{\:#1\vert\:}
\newcommand{\independent}{\mathrel{\text{\scalebox{1.07}{$\perp\mkern-10mu\perp$}}}}
\newcommand{\notindependent}{\cancel{\mathrel{\text{\scalebox{1.07}{$\perp\mkern-10mu\perp$}}}}}
\newenvironment{talign*}
{\let\displaystyle\textstyle\csname align*\endcsname}
{\endalign}
\newenvironment{talign}
{\let\displaystyle\textstyle\align}
{\endalign}
  \g@addto@macro \normalsize{%
   \setlength{\abovedisplayskip}{5pt}%
\setlength{\belowdisplayskip}{5pt}%
\setlength{\abovedisplayshortskip}{5pt}%
\setlength{\belowdisplayshortskip}{5pt}}%
\newtheorem{theorem}{Theorem}[section]
\newtheorem{lemma}{Lemma}[section]
\newtheorem{definition}{Definition}[theorem]
\newtheorem{proposition}{Proposition}[theorem]
\newtheorem{remark}{Remark}[section]
\newtheorem{assumption}{Assumption}[theorem]
\newtheorem{standing assumption}{Standing Assumption}[theorem]
\newtheorem{corollary}{Corollary}[theorem]
\newcommand{\X}{\mathcal{X}}
\newcommand{\Y}{\mathcal{Y}}
\newcommand{\Z}{\mathcal{Z}}
\let\P\relax
\DeclareMathOperator{\P}{\mathbb{P}}
\newcommand{\Pn}{\mathbb{P}_n}
\DeclareMathOperator{\R}{\mathbb{R}}
\newcommand{\Hk}{\mathcal{H}_k}
\DeclareMathOperator{\PTheta}{\mathcal{P}_{\Theta}}
\DeclareMathOperator{\simiid}{\overset{iid}{\sim}}
\DeclareMathOperator{\Q}{\mathbb{Q}}
\DeclareMathOperator{\E}{\mathbb{E}}
\DeclareMathOperator{\F}{\mathbb{F}}
\DeclareMathOperator{\GEM}{\operatorname{GEM}}
\DeclareMathOperator{\MMD}{\operatorname{MMD}}
\DeclareMathOperator{\Dir}{\operatorname{Dir}}
\DeclareMathOperator{\DP}{\operatorname{DP}}
\DeclareMathOperator{\HkX}{\mathcal{H}_{k_X}}
\DeclareMathOperator{\HkY}{\mathcal{H}_{k_Y}}
\DeclareMathOperator{\TLS}{\operatorname{TLS}}
\newcommand\Tstrut{\rule{0pt}{2.6ex}}        
\newcommand\Bstrut{\rule[-0.9ex]{0pt}{0pt}}   
\begin{document}

\def\spacingset#1{\renewcommand{\baselinestretch}%
{#1}\small\normalsize} \spacingset{1}


\if1\blind
{
  \title{\bf Robust Bayesian Inference for Measurement Error Misspecification: The Berkson and Classical Cases}
  \author{
  Charita Dellaporta \thanks{Department of Statistics, University of Warwick / correspondance to \texttt{c.dellaporta@warwick.ac.uk}},
    \hspace{.2cm} 
    Theodoros Damoulas \thanks{ Department of Statistics \& Department of Computer Science, University of Warwick}
    }
    \date{}
  \maketitle
} \fi

\if0\blind
{
  \bigskip
  \bigskip
  \bigskip
  \begin{center}
    {\LARGE\bf Robust Bayesian Inference for Measurement Error Models}
\end{center}
  \medskip
} \fi

\begin{abstract} 
Measurement error occurs when a covariate influencing a response variable is corrupted by noise. 
This can lead to misleading inference outcomes, particularly in problems where accurately estimating the relationship between covariates and response variables is crucial, such as causal effect estimation. 
Existing methods for dealing with measurement error often rely on strong assumptions such as knowledge of the error distribution or its variance and availability of replicated measurements of the covariates. 
We propose a Bayesian Nonparametric Learning framework that is robust to misspecification of these assumptions and does not require replicate measurements. 
This approach gives rise to a general framework that is suitable for both Classical and Berkson error models via the appropriate specification of the prior centering measure of a Dirichlet Process (DP). 
Moreover, it offers flexibility in the choice of loss function depending on the type of regression model. 
We provide bounds on the generalisation error based on the Maximum Mean Discrepancy (MMD) loss which allows for generalisation to non-Gaussian distributed errors and nonlinear covariate-response relationships. 
We showcase the effectiveness of the proposed framework versus prior art in real-world problems containing either Berkson or Classical measurement errors.
\end{abstract}

\noindent%
{\it Keywords:} Errors-in-variables, Input uncertainty, Maximum mean discrepancy, Total least squares, Bayesian nonparametric learning

\spacingset{1.5} 
\section{Introduction}
\label{sec:intro}

An often encountered problem in statistical inference is when one or more covariate variables are subject to measurement error (ME). This is known as the ME, errors-in-variables, or input uncertainty problem and often arises in medical, economic, and natural sciences when it is hard or impossible to measure quantities in the real world exactly.
If ignored, ME in the covariates can result in biased parameter estimates \citep[see][]{carroll1995measurement}. If the goal is solely to make predictions, ME may not be a significant issue as long as observations continue to be realizations of the same random variable (RV). However, in many cases, we are interested in estimating the true parameter that explains the relationship between two RVs. For example, suppose we want to estimate a parameter that represents patients' condition based on the prevalence of a specific protein in their blood. This information can help clinicians determine the appropriate drug dosage that adjusts the protein levels. However, the available blood test may only provide an approximate measurement of the actual protein percentages. If this ME is ignored, the parameter estimates might give a misleading representation of the effect of the protein levels on the condition, leading to administering incorrect drug doses which could potentially be harmful to patients.

The ME problem has been extensively researched in both frequentist and Bayesian contexts. However, many proposed methods rely on knowing the ME distribution or variance, strong distributional assumptions, or the availability of replicate measurements and instrumental variables. Our aim is to develop a broadly applicable Bayesian approach, robust to ME on covariates when the model doesn't explicitly address ME and replicate measurements are unavailable. The focus is to provide informative, well-calibrated posterior densities for the parameters of interest while incorporating any available expert prior beliefs on ME, without assuming that these are well-specified. 
We introduce a Bayesian Nonparametric Learning (NPL) framework suitable for both Classical and Berkson error settings to achieve this.

\subsection{Paper outline}
The paper is outlined as follows. Section \ref{sec-related} introduces the Berkson and Classical measurement error and related work. Section \ref{sec-npl} provides the necessary background on the Bayesian NPL framework. In Section \ref{sec-method} we present the methodology for robust parametric inference in the ME setting (Robust-MEM) with two loss functions based on the TLS and the MMD. In Section \ref{sec-theory} we provide theoretical results of the generalisation error using the MMD-based loss function.  Finally, in Section \ref{sec:sim} we demonstrate strong empirical results on simulations and mental health and dietary datasets.

\subsection{Notation}
Consider two topological spaces $\mathcal{X}$ and $\mathcal{Y}$ equipped with Borel $\sigma$-algebras $\mathfrak{S}_{\X}$ and $\mathfrak{S}_{\mathcal{Y}}$ respectively and let $\mathcal{P}_{\X}$ and $\mathcal{P}_{\Y}$ be the space of Borel distributions on $\X$ and $\Y$ respectively. Let also $\mathcal{Z} := \mathcal{X} \times \mathcal{Y}$ equipped with $\sigma$-algebra $\mathfrak{S}_{\mathcal{Z}}$ and $\mathcal{P}_{\Z}$ the space of Borel distributions on $\Z$.
Throughout the paper we use the notation $\times$ for products of measures such that \(\P = \P_X \times \P_{Y\mid X}\) denotes the joint law on \(\Z \)
built from the marginal \(\P_X\) on \(\X\) and the conditional law \(\P_{Y\mid X}\)
on \(\Y\), given \(X\). Concretely, for measurable sets \(A\in \mathfrak{S}_{\X}\) and
\(B\in \mathfrak{S}_{\Y}\),
$$
\P(A\times B)\;=\;\int_A \P_{Y\mid X}(x,B)\,\P_X(dx).
$$
The notation $\delta_{x}$ is used to denote the Dirac measure at some point $x$. A Reproducible Kernel Hilbert space with associated kernel $k$ is denoted as $\Hk$ and its norm and inner product are denoted as $\| \cdot \|_{\Hk}$ and $\left<\cdot, \cdot \right>_{\Hk}$ respectively. Moreover, we denote by $\mu_{\P} := \E_{Z \sim \P}[k(Z,\cdot)] \in \Hk$ the kernel mean embedding of a probability measure $\P$. We further use $\independent$ and $\notindependent$ to denote independence and dependence between RVs, respectively. Lastly, the pushforward measure of a probability measure $\mu \in \mathcal{P}_{\X}$ by a function $f: \mathcal{P}_{\X} \rightarrow \Y$ is denoted as $f^{\#}(\mu) \in \mathcal{P}_{\Y}$.

\subsection{Berkson and Classical Measurement Error} \label{sec-related}
 Assume that we have observed data $\{(w_i, y_i)\}_{i=1}^n \in \mathcal{Z}$ and wish to perform parametric inference for a model family $\mathcal{P}_\Theta := \{\P_{g(\theta, x)} : x \in \X, \theta \in \Theta\} \subset \mathcal{P}_{\Y}$ with regression function $g: \Theta \times \X \rightarrow \mathbb{R}$ and parameter space $\Theta$ such that $x \rightarrow \P_{g(\theta,x)}(A)$ is $\mathfrak{S}_{\X}$-measurable for all $A \in \mathfrak{S}_{\Y}$ and all $\theta \in \Theta$. Further assume that we have RVs $W, X, N \in \X$ such that observations $\{w_i\}_{i=1}^n$ from $W$ are equal to the unobserved true values $\{x_i\}_{i=1}^n$ from $X$ corrupted with additive error $\{\nu_i\}_{i=1}^n$ from $N$. We assume that there exists unknown $\theta_0 \in \Theta$ such that: 
\begin{talign}
     Y &= g(\theta_0, X) + E, \quad E \sim \F^0_{E} \label{eq-eps-dist}
\end{talign}
for unknown probability measure $\F_{E}^0$  $ \in\mathcal{P}_{\Y}$ with $\mathbb{E}_{\F_E^0}[E] = 0$. We further assume that both $N$ and $E$ are homoscedastic, i.e. they have constant variance across observations and we denote their variances as $\sigma_\nu^2 := \text{Var}_{\mathbb{F}_N^0}[N]$ and $\sigma_\epsilon^2 := \text{Var}_{\F_E^0}[E]$ respectively. In this work, we consider the two most common types of ME, namely Berkson \citep{berkson1950there} and Classical. The main difference lies in the assumptions of dependence between the true and noisy covariates with the measurement error. Mathematically: 
\begin{itemize}
    \item {\textcolor{blue}{\tt{Berkson}}: $X = W + N, \quad W \independent N, \quad X \notindependent N, \quad N \sim \F_N^0$}
    \item \textcolor{red}{\tt{Classical}}: $W = X + N, \quad W \notindependent N, \quad X \independent N, \quad N \sim \F_N^0$
\end{itemize}
\begin{figure}[ht!]
    \centering
    \begin{subfigure}{0.4\textwidth}
    \centering
        \includegraphics[width=0.5\textwidth]{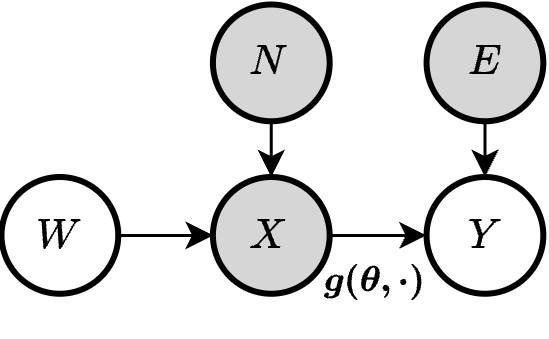}
        \caption{\textcolor{blue}{\texttt{Berkson}}}
        \label{fig:berkson}
    \end{subfigure}
    \hfill
    \begin{subfigure}{0.4\textwidth}
    \centering
        \vspace{0.1cm}
        \includegraphics[width=0.7\textwidth]{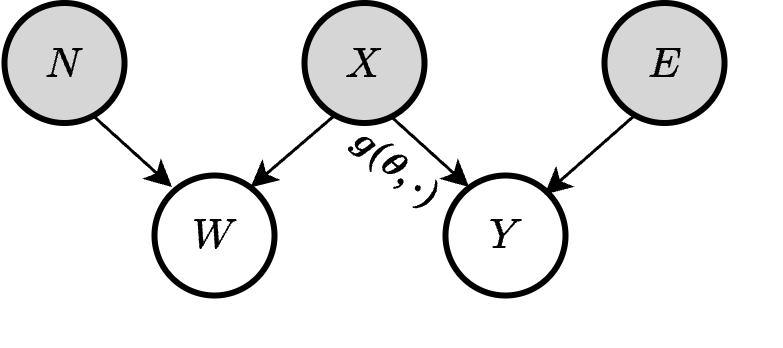}
        \hfill
        \caption{\textcolor{red}{\texttt{Classical}}}
        \label{fig:classical}
    \end{subfigure}
    \caption{Graphical representation of Berkson and Classical ME. Directed arrows indicate conditional dependency. The regression function explaining the relationship of $Y \given X$ is denoted by $g(\theta, \cdot)$. The two models differ on the dependency relationships between $N,W$ and $X$.}
    \label{fig:two_subfigures}
\end{figure}
Berkson ME occurs when the user assigns the observed covariates and they are hence independent of the ME. An example is when a group average ($W$) of the quantity of interest is assigned to each individual of the group, whereas the true individual levels ($X$) differ by an additive error. Another example occurs when a health practitioner prescribes a fixed dose of a drug ($W$) whereas the actual quantity of the drug substance absorbed by each body ($X$) differs between individuals with the same prescription. In contrast, classical ME often occurs when the ME is a source of inaccurate recording methods such as a faulty instrument. As a result, observed measurements ($W$) randomly fluctuate around the true value ($X$) and is hence correlated with the ME. For example, Classical ME often occurs in effect estimation problems met in health and epidemiology sciences. In these problems, we are interested in exposure-response relations \citep[see][]{brakenhoff2018measurement}. Exposures, often measured with error,
can include both self-reported quantities, such as patient lifestyle changes, and laboratory measurements such as serum cholesterol or the levels of aluminium in the brain \citep[][]{campbell2002potential}.

The difference between Berkson and Classical ME can also be understood from a graphical representation as illustrated in Figs. \ref{fig:berkson} and \ref{fig:classical}. In the Berkson case, $N$ directly affects $X$ whereas in the Classical case dependence flows from $N$ to $W$. 
In both cases, we assume that $N$ is random and $\F_N^0$ might not be precisely known. We aim to perform parametric inference on $\theta \in \Theta$, ensuring robustness to ME and potential misspecifications of a priori beliefs about its distribution. Below, we outline key literature on ME, emphasising its relevance to our work (see Appendix \ref{app:related} for further review).

\paragraph{Total Least Squares for ME} \label{sec-relw-tls}
The TLS method \citep{golub1979total, golub1980analysis} has been widely used to approach ME in the covariate. This is a generalisation of Ordinary Least Squares (OLS) wherein the residuals of both the response and covariate variables are minimised. These methods give useful computational tools for the ME problem; however, their use in nonlinear regression problems can be challenging. 
From a statistical point of view, TLS assumes Gaussian-distributed errors, corresponding to maximising the joint log-likelihood in the Classical ME case under the assumption that $N$ and $E$ follow zero-mean Gaussians. More details and a solution to the TLS problem can be found in Appendix \ref{app-tls-gaus}, \ref{app-tls-sol}.

\paragraph{Frequentist Methods for ME}
Most methods require knowledge of the distribution of the ME, or its variance. Deming regression \citep{deming1943statistical}, introduced for Classical ME, assumes that the ratio of the regression and measurement error variances, i.e. $\sigma^2_\epsilon/\sigma^2_{\nu}$, is known. For Berkson ME, \cite{huwang2000errors} suggested an iterative re-weighted least squares for polynomial regression, assuming Gaussian distributed error. \cite{wang2003estimation, wang2004estimation} considered the more general case of nonlinear regression with Gaussian and non-Gaussian distributed errors through a minimum distance estimation approach.  

The simulation-extrapolation method (SIMEX), suggested by \cite{cook1994simulation}, assumes knowledge of the ME variance or the ability to estimate it. SIMEX simulates additional ME to estimate the relationship between covariate and response variables. It has been studied for both Classical and Berkson error models \citep{althubaiti2016non} as well as mixtures of Classical and Berkson errors \citep{carroll2006measurement, misumi2018simulation}, while it has also been generalised to semi-parametric models \citep{apanasovich2009simex}. Since the ME variance is rarely known in practice, alternative methods for its estimation are often employed. For example, \cite{kekecc2022estimation} suggested using an expansion of Bernstein polynomials to identify the variance of Gaussian, additive, Classial ME. 

Alternative methods for Classical ME, assume access to repeated measurements of the mismeasured RV or instrumental variables (IV) \citep[e.g. see][]{liu1992efficacy,li2002robust, bowden1990instrumental}. In the former case, true covariates are treated as latent, and replicate measurements are used to estimate them. In contrast, IV methods use the IV variable, which is often difficult to identify, to overcome the bias induced by Classical ME.  
In the nonparametric literature, addressing classical ME often involves deconvolution kernel methods \citep{fan1993nonparametric, wang2011deconvolution} for kernel-based estimation of regression functions. Two key challenges lie in selecting the deconvolving kernel, which typically lacks a closed form and can be sensitive to bandwidth choice and the reliance on sufficiently smooth functions. 

\paragraph{Bayesian Approaches to ME}
Most Bayesian methods relax the assumptions discussed above but are often reliant on strong parametric assumptions. In the Classical error case, \cite{berry2002bayesian} restrict the distributions of the covariates, response error and ME to Gaussian distributions and model the regression function with smoothing splines and penalized mixtures of truncated polynomial splines.  \cite{sarkar2014bayesian} proposed the use of mixtures of B-splines and mixtures induced by Dirichlet Processes (DPs) to target the more general case of heteroscedastic errors. More recently, \cite{brown2024geometric} researched the geometric ergodicity of Gibbs samplers for multivariate Berkson and Classical ME, however the treatment is limited to Gaussian ME. More flexible models based on DPs and Gaussian mixtures have also been proposed by \cite{muller1997bayesian, carroll1999flexible}. While these overcome the simplicity of parametric forms, they rely on computationally expensive MCMC techniques. Recently, neural networks have been used in nonparametric regression with ME \citep[][]{hu2022measurement} to handle more generic regression functions in large-scale scenarios. 
However, knowledge of the ME distribution type remains necessary. In the nonparametric Bayesian literature, Gaussian Processes (GPs) have been used to address ME on the input covariate. For example, \cite{cervone2015gaussian} used GP modelling for location error on the input, employing a Hybrid Monte Carlo algorithm. Recently, \cite{zhou2023gaussian} proposed a fully Bayesian approach based on GP and Dirichlet process Gaussian mixture priors. 
GPs have also been applied in Berkson error scenarios by \cite{zhang2017modeling} in their study on tunnel deformation detection, but a general, fully Bayesian treatment of Berkson ME has not yet been offered. 

Our framework presents a Bayesian approach for regression under Berkson or Classical ME, eliminating the need for covariate replicate measurements or instrumental variables. Unlike existing methods, it doesn't assume precise ME or covariate distribution specifications.  It can incorporate prior beliefs about the ME distribution, such as its variance, and the degree of confidence in these, through a nonparametric prior on the unknown conditional distribution of $X \given W$.

\section{Bayesian Nonparametric Learning} \label{sec-npl}
A flexible way of approaching model misspecification in a Bayesian framework is through the Bayesian NPL setting \citep{lyddon2018nonparametric, fong2019scalable}. 
By deriving NPL, rather than standard Bayesian, posteriors we perform Bayesian inference under misspecified error distributions whilst fully characterising model uncertainty: prior beliefs about $X \given W$ are incorporated through a nonparametric prior. The resulting framework offers robustness against parametric distributional assumptions and facilitates the inclusion of prior beliefs regarding the error distribution.

NPL relaxes the standard Bayesian assumption that a parametric model family $\PTheta$ can correctly describe the Data-Generating Process (DGP) $\P^0$, i.e. that $\P^0 \in \PTheta$. 
The approach to incorporating uncertainty in the parameter of interest involves placing uncertainty directly on $\P^0$ and propagating it to $\theta \in \Theta$. It is accomplished using a conjugate nonparametric prior and associated posterior on $\P^0$, which then induces a posterior on the parameter space $\Theta$. This differs from traditional Bayesian inference, which conditions on the parameters rather than the DGP. This framework leads to a parallelizable computational framework to sample from this posterior on $\theta$, called the Posterior Bootstrap. 

The DGP is modelled using a DP prior, as proposed by  \cite{lyddon2018nonparametric} and \cite{fong2019scalable}: $\P \sim \DP(c, \F)$. 
Here, $c > 0$ is a concentration parameter and $\F \in \mathcal{P}_{\mathcal{Z}}$ is a centring measure. In the absence of ME, conditionally on the observations $\{(x_i,y_i\}_{i=1}^n \simiid \P^0$, we can obtain the posterior distribution of $\P$ by conjugacy:
\begin{talign} 
\begin{split}
\P \given \{(x_i,y_i)\}_{i=1}^n \sim \DP\left(c', \F' \right), \quad c' = c+n,\qquad  \F' = \frac{c}{c + n} \F + \frac{n}{c + n} \Pn \label{eq-DPPosterior}
\end{split}
\end{talign}
where $\Pn := \frac{1}{n} \sum_{i=1}^n \delta_{(x_i, y_i)}$ and $\delta_{(x,y)}$ is a Dirac measure at $(x,y) \in \mathcal{Z}$. The value of $c$ indicates the level of confidence we have in the prior centring measure $\F$. 
The NPL framework, in its standard form, is not sufficient to account for ME, since it assumes that we have access to realisations from the DGP. We introduce an approach, suitable for this setting, based on the NPL principles. 

\section{Methodology} \label{sec-method}
We present a Bayesian framework for ME models (MEMs) by means of a nonparametric prior on the conditional distributions $\{X \given W = w_i\}_{i=1}^n$. This is achieved through a flexible DP distribution, allowing the prior definition for both Berkson and Classical error models. We explore two loss functions within this framework: one based on the TLS objective and the other on the MMD. The latter extends the NPL-MMD framework \citep{dellaporta2022robust} to the ME context, enabling the handling of non-Gaussian and nonlinear models.
\subsection{NPL for Measurement Error (Robust-MEM)} \label{sec-npl-input-unc}
The standard NPL setting (Sec. \ref{sec-npl}) assumes the availability of a finite number of observations from the DGP and allows for inference on a potentially misspecified model. In the MEM setting, observations from the DGP are not available since we have observations of the pair of random variables $(W,Y)$, instead of $(X,Y)$. In this case, we are interested in setting the uncertainty directly on $X \given W$ and propagating this on $\theta$ through an NPL posterior. 
Recall the observed data $\{(w_i, y_i)\}_{i=1}^n$, such that realizations of $X$ are observed with some additive, zero mean, and constant variance error, i.e. $w_i = x_i + \nu_i$ in the Classical case or $x_i = w_i + \nu_i$ in the Berkson case, for all $i = 1, \dots, n$. Denote by $\P_{X \given w_i} \in \mathcal{P}_\X$ the distribution of $X \given W = w_i$. We set a nonparametric prior on $\P_{X \given w_i}$ through a DP:
\begin{talign} \label{eq-dp-prior}
    \P_i \sim \text{DP}(c, \F_{w_i})
\end{talign}
The prior centering measure $\F_{w_i}$ is indexed by $w_i$ to indicate the dependence on the observed covariate $w_i$. $\F_{w_i}$ encodes our prior beliefs about the true and noisy covariate relationship through distributional assumptions on the ME and the true covariate while respecting the Berkson or Classical structure of the error. We further discuss the choices of $\F_{w_i}$ and $c$ in Sections \ref{sec-prior} and \ref{sec-prior-elicit} respectively. 
Since we only have one data point $(w_i, y_i)$ for each conditional $X \given W = w_i$, the conjugate posterior DP is:
\begin{talign}  \label{eq-dp-posterior}
    \P_{i} \given w_i \sim \text{DP}\left(c + 1, \F_{w_i}'\right), \quad \F_{w_i}' = \frac{1}{c+1} \delta_{w_i} + \frac{c}{c+1} \F_{w_i}.
\end{talign}
\begin{remark}
    Although we focus on the general case when we only have one data point $(w_i, y_i)$ for each conditional $X \given W = w_i$, in the case of Berkson ME, it is common to assume that there are multiple observations associated with the same observed covariate $w_i$ but different true covariate value $x_i$. For example, this is the case when a fixed dose $w_i$ is assigned to multiple subjects of a trial \citep[see e.g.][]{burr1988errors}. In this case, the posterior DP conditioned on all the observations from the same set $i$ would be defined as $\text{DP}\left(c + n_j, \F_{w_i}'\right)$, where $n_j$ is the number of observations in set $i$ and $\F_{w_i}' = \frac{n_j}{c+n_j} \delta_{w_i} + \frac{c}{c+n_j} \F_{w_i}.$
\end{remark} 
Let $\P^{0} = \P^0_X \times \P^0_{Y \given X} \in \mathcal{P}_\mathcal{Z}$ be the true joint DGP 
where $\P^0_X \in \mathcal{P}_\mathcal{X}$ denotes the distribution of $X$ and $\P^0_{Y \given X}$ denotes the true conditional distribution of $Y \given X$. 
The target for any loss function $l: \mathcal{X} \times \mathcal{Y} \times \Theta \rightarrow \R$ is:
\begin{talign} \label{eq-target-theta}
\theta^0_l = \argmin_{\theta \in \Theta} \mathbb{E}_{(x,y) \sim \P^{0}} \left[l(x,y,\theta) \right].
\end{talign}
Of course, $\P^{0}$ is unknown and we do not have access to observations from $\P_X^0$ and consequently $\P^0$. We instead have a set of posterior distributions for each of the conditional distributions $X \given W = w_i$. The goal is to use the nonparametric posterior of (\ref{eq-dp-posterior}) to obtain a posterior on $\theta \in \Theta$. This is naturally defined through the pushforward measure: 
\begin{talign*}
    (\theta^\star_l)^{\#}\left(\mu_1, \dots, \mu_n \right)
\end{talign*}
where for probability measures $\P^1 \sim \mu_1, \dots, \P^n \sim \mu_n$ on $\mathcal{P}_\X$ and $\theta^\star_l : \mathcal{P}_\X^n \rightarrow \Theta$ we define:
\begin{talign} 
\P^i \sim \mu_i &:= \DP(c + 1,\F_{w_i}') \quad \forall i = 1, \dots n, \label{eq-mu} \\
\theta^\star_l(\P^1, \dots, \P^n)  &:= \argmin_{\theta \in \Theta} \mathbb{E}_{(x,y) \sim  \frac{1}{n} \sum_{i=1}^n \P^i \times}\delta_{y_i} \left[l(x,y,\theta) \right]. \label{eq-opt-step}
\end{talign}
Sampling from $(\theta^{\star}_l)^{\#}\left(\mu_1, \dots, \mu_n \right)$ can be performed through the principles of the Posterior Bootstrap \citep{fong2019scalable} where at iteration j:
\begin{enumerate}
    \item For $i = 1,\dots, n: \quad$ 
    sample $\P^{(i,j)} \sim \mu_i$. 
    \item $\theta^{\star}_l(\P^{(1,j)}, \dots, \P^{(n,j)}) = \argmin_{\theta \in \Theta} \mathbb{E}_{(x,y) \sim \frac{1}{n} \sum_{i=1}^n \P^{(i,j)} \times}\delta_{y_i} \left[l(x,y,\theta) \right]$.
\end{enumerate}
Exact sampling from a DP, as in step 1, is generally not feasible. The stick-breaking process \citep{sethuraman1994constructive} offers an exact representation of a DP which can be approximated using the Dirichlet approximation of the stick-breaking process \citep{muliere1996bayesian,ishwaran2002exact} which we use in this work. We denote by $\hat{\mu_i}$, the approximation of $\mu_i$, meaning that $\P^{(i,j)} \sim \hat{\mu_i}$ is such that:
\begin{talign} \label{eq-dp-approx}
    \P^{(i,j)} = \sum_{k=1}^T \xi_k^{i} \delta_{\tilde{x}^i_k} + \xi^i_{T+1} \delta_{w_i}, \quad \tilde{x}^i_{1:T} \overset{iid}{\sim} \F_{w_i}, \quad\xi^i_{1:T+1} \sim \Dir\left(\frac{c}{T}, \dots, \frac{c}{T}, 1\right).
\end{talign}
The intuition of this sampling procedure lies in the fact that different marginal distributions of $\P^{(i,j)}$ lead to different joint distributions of the form $$\frac{1}{n} \sum_{i=1}^n \P^{(i,j)} \times \delta_{y_i}.$$ Uncertainty stems from the marginal distribution of $X$ due to ME, hence by bootstrapping over different marginal distributions of $X$ we induce robustness to the inference procedure for the joint distribution of $(X,Y)$.
The taxonomy of the Robust-MEM framework for different choices of loss functions and prior DP specifications is outlined in Fig. \ref{fig:taxonomy}.

\subsection{Choice of Prior Centering Measure} \label{sec-prior}
The choice of the prior centering measure $\F_{w_i}$ in (\ref{eq-dp-prior}) represents our prior beliefs about the distribution of the true covariate conditionally on an observed noisy covariate $w_i$. This choice should respect the Berkson and Classical dependence assumptions.
\begin{figure}[ht!]
    \centering
  \includegraphics[width=0.75\textwidth]{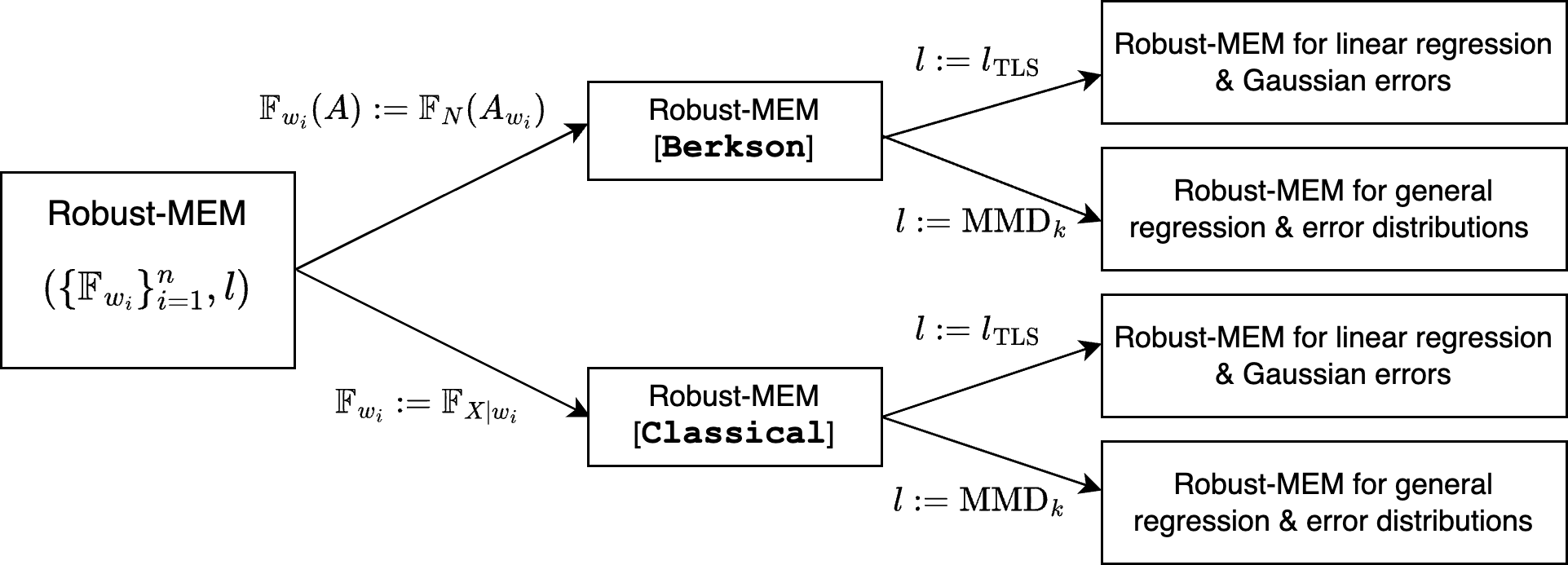}  
  \caption{Summary of different Robust-MEM frameworks based on the choices of prior centering measures $\{\F_{w_i}\}_{i=1}^n$ and loss function $l$.}
  \label{fig:taxonomy}
\end{figure}

In the Berkson ME case,  $X \notindependent N$ whereas $W \independent N$. Hence, it is sufficient to define a prior on the distribution of $X \given W = w_i$ through the prior distributional assumptions on $N$. Suppose that $\F_{N}$ denotes the prior distribution on $N$, then following the Berkson error structure we can define $\F_{w_i}$ in $(\ref{eq-dp-prior})$ such that $x \sim \F_{w_i}$ implies that $x = w_i + \nu$ where $\nu \sim \F_{N}$. More formally, for every set $A \in \sigma_{\X}$,  $\F_{w_i}(A) := \F_N(A_{w_i})$ where $A_{w_i} := \{\nu: w_i + \nu \in A \}$. 

In the Classical error case, $X \independent N$ so prior beliefs should be considered about both the distribution of $N$ and the marginal distribution of $X$. The Classical ME assumption tells us that $W \given X = x_i$ is centered at $x_i$ and is distributed according to the distribution of $N$. Hence, given a prior distribution $\F_X$ on $X$ with density $f_X$, a prior distribution $\F_{W \given X}$ on $W \given X$ such that $\E[W \given X = x] = x $ with density $f_{W \given X}$ and the observation $w_i$, we can use Bayes rule to obtain the density of $X \given W = w_i$ through:
\begin{talign} \label{eq-bayesian-post}
    f_{X \given W}(x \given w_i) = \frac{f_{W \given X}(w_i \given x)f_X(x)}{\int f_{W \given X}(w_i \given x)f_X(x) dx} 
\end{talign}
The prior centring measure can then be defined as the probability measure with corresponding density $f_{X \given W}(\cdot \given w_i)$, which we denote by $\F_{X \given w_i}$. Here, if $f_X$ is conjugate to $f_{W \given X}$, then $f_{X \given W}$ can be obtained in closed form and sampling from the corresponding probability measures $\F_{X \given w_i}$ becomes trivial. For example, if $\F_X := \mathcal{N}(\mu_x, \sigma^2_x)$ and $\F_{W \given X} := \mathcal{N}(X, \sigma^2_{\nu})$ then by conjugacy $\F_{X \given w_i} = \mathcal{N}\left(\frac{\sigma_x^2}{\sigma_{\nu}^2 + \sigma_x^2} w_i +  \frac{\sigma_{\nu}^2}{\sigma_{\nu}^2 + \sigma_x^2} \mu_x, (\frac{1}{\sigma_x^2} + \frac{1}{\sigma_\nu^2})^{-1}\right)$.
\begin{remark}
    The Classical ME structure naturally arises in the study of DP mixture models. Since $X \independent N$ one could choose to set uncertainty directly to the distribution of $X$ through a DP and define a mixture model as follows: 
    \begin{talign*}
        w_i \given x_i \sim F(x_i), \quad 
        x_i \given G &\sim G, \quad
        G \sim \DP(c, \F),
    \end{talign*}
    where $F$ denotes the distribution of the error conditionally on $x_i$ and $G$ is the mixing distribution over $x$. However, such a structure does not lead to a conjugate posterior for $G$ and more complicated sampling procedures should be used \citep{neal2000markov}. 
\end{remark}

Understanding the different prior centering measures in each scenario is facilitated by Algorithms \ref{alg:berk} and \ref{alg:class} of the Robust-MEM framework. In the Berkson case (Algorithm \ref{alg:berk}), sampling from $\F_{w_i}$ involves directly sampling from the prior centering distribution for $N$, denoted by $\F_{N}$. In the Classical case (Algorithm \ref{alg:class}), sampling is conducted directly through the probability measure $\F_{X \given w_i}$ defined via the density $f_{X \given W}(\cdot \given w_i)$ in (\ref{eq-bayesian-post}). Notably, the probability measures $\F_{N}$ in the Berkson case and $\F_{W \given X}$ and $\F_X$ in the Classical are the same across all $i = 1, \dots, n$ representing prior beliefs about the distributions of $X$ and $N$. Hence, although different DPs are set for each conditional $X \given W = w_i$, some information is shared across all of them via the prior centering specification. 

\begin{minipage}{0.45\textwidth}
\begin{algorithm}[H] 
\SetAlgorithmName{Algorithm}{}{}
\SetAlgoLined
\SetKwInOut{Input}{input}
 \Input{$\{(w_i,y_i)\}_{i=1}^n$, $T$, $B$, $c$, $l$, $\F_{N}$}
 \For{$j\gets1$ \KwTo $B$}{
 \For{$i\gets1$ \KwTo $n$}{
 Sample $\tilde{\nu}_{1:T}^{(i,j)} \simiid \F_{N}$ \\
  Set $\tilde{x}_{1:T}^{(i,j)} = w_i + \tilde{\nu}_{1:T}^{(i,j)}$ \\
 Sample $\xi_{1:(T+1)}^{(i,j)} \sim \Dir\left(\frac{c}{T}, \dots, \frac{c}{T}, 1\right)$ \\
 Set $\P^{(i,j)} := \sum_{k=1}^T \xi^{(i,j)}_k \delta_{\tilde{x}^{(i,j)}_k} + \xi^{(i,j)}_{T+1} \delta_{w_i}$ 
 }
$\theta^{(j)} = \theta^\ast_{l}(\P^{(1,j)}, \dots, \P^{(n,j)})$ in (\ref{eq-opt-step})  \\
 }
 \Return{Posterior sample $\theta^{(1:B)}$}
 \caption{Robust-MEM (\textcolor{blue}{\texttt{Berkson}})}
 \label{alg:berk}
\end{algorithm}
\end{minipage}
\begin{minipage}{0.46\textwidth}
\begin{algorithm}[H]
\SetAlgorithmName{Algorithm}{}{}
\SetAlgoLined
\SetKwInOut{Input}{input}
 \Input{$\{(w_i,y_i)\}_{i=1}^n$, $T$, $B$, $c$, $l$, $\F_{N}$, $\F_{X}$}
 \vspace{0.07cm}
 \For{$j\gets1$ \KwTo $B$}{
 \For{$i\gets1$ \KwTo $n$}{
  Obtain $\F_{X \given w_i}$ from (\ref{eq-bayesian-post}) \\ 
 Set $\tilde{x}_{1:T}^{(i,j)} \simiid \F_{X \given w_i}$ \\
 Sample $\xi_{1:(T+1)}^{(i,j)} \sim \Dir\left(\frac{c}{T}, \dots, \frac{c}{T}, 1\right)$ \\
 Set $\P^{(i,j)} := \sum_{k=1}^T \xi^{(i,j)}_k \delta_{\tilde{x}^{(i,j)}_k} + \xi^{(i,j)}_{T+1} \delta_{w_i}$ \\
 }
$\theta^{(j)} = \theta^\ast_{l}(\P^{(1,j)}, \dots, \P^{(n,j)})$ in (\ref{eq-opt-step}) 
 }
 \Return{Posterior sample $\theta^{(1:B)}$}
 \caption{Robust-MEM (\textcolor{red}{\texttt{Classical}})}
 \label{alg:class}
\end{algorithm}
\end{minipage}
\subsection{Prior Elicitation} 
\label{sec-prior-elicit}
We now focus on the concentration parameter of the DP prior in (\ref{eq-dp-prior}). It is clear from the posterior in (\ref{eq-dp-posterior}) that $c = 0$ corresponds to the prior belief that there is no ME present on the covariates resulting in all the mass of the posterior centering measure concentrating at $\delta_{w_i}$. 
This is in contrast to the standard NPL setting (Sec. \ref{sec-npl}) where $c=0$ corresponds to a noninformative prior. Moreover, it is evident that $c=1$ results in splitting the weight of the DP posterior centering measure $\F_{w_i}'$ equally between the prior centering measure $\F_{w_i}$ and the Dirac measure on the observation $w_i$. 
Lastly, when $c \rightarrow \infty$ we see that all the weight is placed in the prior belief. 
The interplay between parameter $c$, ME variance, and prior specification can be further demonstrated with an example of polynomial regression with Berkson, normally distributed ME in Fig. \ref{fig:prior-elicit}.
\begin{figure}[ht!]
    \centering
  \includegraphics[width=0.95\textwidth]{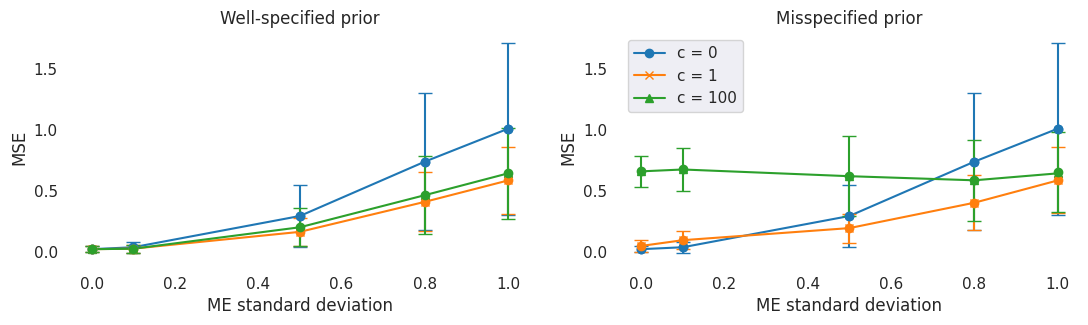}  
  \caption{Average MSE in polynomial regression with Berkson, Gaussian ME over 50 replications. Left: Robust-MEM method with a well-specified prior. Right: Robust-MEM method with a misspecified (Student-t with 3 degrees of freedom) prior.}
  \label{fig:prior-elicit}
\end{figure}
Here, the average MSE over the three parameters of interest is reported for the three limiting choices of $c$, for an increasing ME variance. In the case of a well-specified prior (left), when the ME variance is close to zero, i.e. when there is no ME, all three $c$ values perform similarly since the prior correctly indicates no ME, making inference data-driven. The choices of $c > 0$ perform similarly, yielding a lower MSE for nonzero ME variance. Conversely, in the case where the true ME distribution is Gaussian and the prior centering measure $\F_{N}^0$ is a Student-t distribution with 3 degrees of freedom (right plot), $c = 1$ corresponds to the lowest average MSE, as expected. A setting of $c = 1$ appropriately balances prior beliefs and observations, while $c = 0$ wrongly assumes no ME, and $c = 100$ overly weights the misspecified prior. In general, overlapping standard deviations in most cases indicate the robustness of the method to $c$ values. In the absence of strong prior beliefs on the ME distribution, setting $c = 1$ splits the weight sensibly between prior beliefs and observations.
\subsection{Choice of Loss Function}
In this section, we explore the adaptability of the Robust-MEM framework to various prevalent cases of ME. We propose two distinct loss functions suitable to address these scenarios comprehensively. The initial choice stems from the TLS objective, offering rapid and robust solutions, particularly tailored for scenarios such as linear regression and Gaussian-distributed ME. To ensure broad applicability across diverse model families and regression relationships, we introduce the MMD-based loss function as a further approach.
\subsubsection{Robust-MEM with the Total Least Squares} \label{sec-npl-tls}
We first suggest a loss function based on the TLS objective function. Consider a linear regression model. Then we can define the loss function $l_{\TLS}:\Z \times \Theta \rightarrow \mathbb{R}$ as
$l_{\TLS}(x_i,y_i,\theta) := \|\nu_i\|_2^2 +  (y_i - \theta^T (x_i + \nu_i))^2. $
Using this loss function in (\ref{eq-opt-step}), the minimization objective for each set of DP samples $(\P^{(1,j)}, \dots, \P^{(n,j)})$ as in (\ref{eq-dp-approx}) is
\begin{talign} 
    &\theta^{\star}_{TLS}(\P^{(1,j)}, \dots, \P^{(n,j)}) \nonumber\\
    &\quad \quad = \argmin_{\theta \in \Theta } \mathbb{E}_{(x,y) \sim \frac{1}{n} \sum_{i=1}^n \P^{(i,j)} \times \delta_{y_i}} \left[l_{\TLS}(x,y,\theta) \right] \nonumber\\ 
    &\quad \quad = \argmin_{\theta \in \Theta} 
    \frac{1}{n} \sum_{i=1}^n \left[ \sum_{t=1}^T \xi_t^{(i,j)} l_{\TLS}(\tilde{x}_t^{(i,j)}, y_i, \theta) + \xi_{T+1}^{(i,j)} l_{\TLS}(w_i,y_i,\theta) \right] \label{eq-tls-obj2}
    \end{talign}
where in (\ref{eq-tls-obj2}), optimisation over $\nu$ is omitted since the parameter of interest is $\theta$. This loss function is equivalent to minimizing the joint likelihood between $(W,Y)$ conditionally on $X$ in the case of Gaussian distributed errors, i.e. when $\F_{E}^0 \equiv N(0, \sigma^2_{\epsilon})$ and $\F_{N}^0 \equiv N(0, \sigma^2_{\nu})$ (see Appendix \ref{app-tls-gaus}). 
The pseudo-code for the Robust-MEM with the TLS objective is outlined in Algorithm \ref{alg:TLS_posterior_bootstrap} (Appendix \ref{app-exp-dets}).
This framework allows for flexible Bayesian parameter inference in MEMs while accounting for prior beliefs on the ME. However, it only allows for inference on Gaussian distributed errors and linear regression functions, which can be limiting and prone to misleading outcomes if the assumptions are misspecified. To address this limitation, we introduce a loss function based on the MMD.

\subsubsection{Robust-MEM with the Maximum Mean Discrepancy} \label{sec-npl-mmd-input-unc}
The use of distance-based loss functions in the NPL framework was suggested by \cite{dellaporta2022robust} to perform robust inference for simulator-based models with the Maximum Mean Discrepancy (MMD). The MMD is defined in a Reproducing Kernel Hilbert Space (RKHS) $\mathcal{H}_k$ with associated kernel $k: \mathcal{Z} \times \mathcal{Z} \rightarrow \mathbb{R}$ and norm $\| \cdot \|_{\mathcal{H}_k}$. The MMD has already been used in the context of universal robust regression by \cite{alquier2024universal} and is defined through the notion of kernel mean embeddings.
For a function $f \in \mathcal{H}_k$, the kernel mean embedding $\mu_{\P} \in \mathcal{H}_k$ with respect to probability measure $\P \in \mathcal{P}_{\mathcal{Z}}$ is defined as $\mu_{\P}(\cdot) := \mathbb{E}_{Z \sim \P}[k(Z,\cdot)] \in \mathcal{H}_k$.
The MMD between the probability measures $\P$ and $\Q$ is defined as $\MMD_k(\P, \Q) = \| \mu_{\P} - \mu_{\Q} \|_{\mathcal{H}_k}$. For a parametric model family $\{\P_{\theta} : \theta \in \Theta\}$ and a sample $\{z_i\}_{i=1}^n \simiid \P^0$, the squared MMD can be used as a loss function in (\ref{eq-target-theta}), resulting in a minimum MMD estimator \citep{briol2019statistical} between $\hat{\P}_n = \frac{1}{n} \sum_{i=1}^n \delta_{z_i}$ and $\P_{\theta}$, given by:
\begin{talign*}
\theta^n_{\MMD} &:= \arg \inf_{\theta \in \Theta} \MMD_k^2\left(\hat{\P}_n, \P_{\theta} \right) \\
    &= \arg \inf_{\theta \in \Theta} \int_{\mathcal{Z} \times \mathcal{Z}} k(z_1, z_2) \P_{\theta}(dz_1) \P_{\theta}(dz_2) - \frac{2}{n} \sum_{i=1}^n \int_{\mathcal{Z}} k(z_i, z_1) \P_{\theta}(dz_1).
\end{talign*}
In practice, this is approximated using a U-statistic and optimized with gradient-based methods. 
This loss function is suitable for general regression models and hence allows for the specification of nonlinear regression functions $g(\theta, x)$ as well as non-Gaussian distributed errors for both $X$ and $Y$.

Recall that $\P^0$ denotes the joint DGP of $X$ and $Y$, such that $\P^0 = \P^0_X \times \P^0_{Y \given X}$. Then the target parameter of interest is:
$$\theta^0_{\MMD} := \arg\inf_{\theta \in \Theta} \MMD_k(\P^{0}, \P_X \times \P_{g(\theta, \cdot)}),$$
representing the parameter value that best explains the relationship between $X$ and $Y$. 
The uncertainty is twofold: firstly $\P^{0}$ is unknown and secondly, we do not have access to samples from $\P_X$, as we only observe realizations $\{(w_i, y_i)\}_{i=1}^n$ of:
$
\P^n := \frac{1}{n} \sum_{i=1}^n \delta_{w_i} \times \P_{Y \given w_i}. 
$
We follow the same nonparametric framework for each of the distributions $\P_{X \given w_i}$ as in (\ref{eq-dp-prior}) and (\ref{eq-dp-posterior}). This leads to the NPL posterior density for $\theta$ through the push-forward measure $(\theta^{\star}_{\MMD})^{\#}\left(\mu_1, \dots, \mu_n \right)$ with $\mu_i$ as in (\ref{eq-mu}). The Posterior bootstrap algorithm (Algorithm \ref{alg:TLS_posterior_bootstrap_MMD}) facilitates sampling from this measure by iteratively optimizing the following objective:
\begin{talign} \label{eq-mmd-obj}
    &\theta^{\star}_{\MMD}\left( \P^{(1,j)}, \dots, \P^{(n,j)} \right) \nonumber\\
    &\quad \quad = \arg\inf_{\theta \in \Theta} \MMD_k \left( \frac{1}{n} \sum_{i=1}^n \P^{(i,j)} \times \delta_{y_i}, \frac{1}{n} \sum_{i=1}^n \P^{(i,j)} \times\P_{g(\theta, \cdot)} \right) 
\end{talign}
where $\P^{(i,j)} \sim \hat{\mu_i}$ for each $i = 1, \dots, n$ and if $(X,Y) \sim \P^{(i,j)} \times \P_{g(\theta, \cdot)}$ then $X \sim \P^{(i,j)}$ and $Y \given X = x \sim \P_{g(\theta, x)}$ mirroring the definition in \cite{alquier2024universal}. 

\section{Generalization Error of Robust-MEM with the MMD}  \label{sec-theory}
In this section, we present generalisation error bounds for the Robust-MEM framework with the MMD. We define the target distribution for the Berkson and Classical cases and outline key technical assumptions in section \ref{sec-theory-target}. The main results are presented in section \ref{sec-theory-main}. Finally, we analyze the theoretical results under common ME assumptions, such as known error distribution and Gaussian-distributed ME, in Section \ref{sec-theory-cor}. Before we start the theretical assesment we summarise the notation used for each error structure in Table \ref{tab:prob-meas-defs}. 
\begin{table}[h]
\small
    \centering
    \caption{Summary of notation for Berkson and Classical error models.}
    \label{tab:prob-meas-defs}
    \begin{tabular}{ccc}
    \hline 
         & \textcolor{blue}{\texttt{Berkson}} & \textcolor{red}{\texttt{Classical}} \\
        \hline
        Observed distribution of $W$ & $\P_W^n := \frac{1}{n} \sum_{i=1}^n \delta_{w_i}$ & $\P_W^n := \frac{1}{n} \sum_{i=1}^n \delta_{w_i}$\Tstrut\Bstrut \\
        Target distribution of $X$ & $\P^{0,n}_X := \frac{1}{n} \sum_{i=1}^n \P^0_{X \given w_i}$ & $\P_X^0$\Tstrut\Bstrut \\
        Target distribution of $(X,Y)$ & $\P^0 := \P^{0,n}_X \times \P^0_{Y \given X}$ & $\P^0 := \P^0_X \times\P^0_{Y \given X}$\Tstrut\Bstrut \\
        Joint model of $(X, Y)$ & $\P^0_{\theta} := \P^{0,n}_X \times \P_{g(\theta, \cdot)}$ & $\P^0_{\theta} := \P^{0}_X \times \P_{g(\theta, \cdot)}$\Tstrut\Bstrut \\
        DP posterior for $\P_{X \given w_i}$ & $\mu_i := \DP(c + 1, \F_{w_i}')$ & $\mu_i := \DP(c + 1, \F_{w_i}')$\Tstrut\Bstrut \\
        Prior centering measure & $\F_{w_i}(A) := \F_N(A_{w_i})$  & $\F_{w_i} := \F_{X \given w_i}$\Tstrut\Bstrut \\
        \hline
    \end{tabular}
\end{table}

\subsection{Target Distribution \& Assumptions} \label{sec-theory-target}
The goal is to bound the expected error in terms of the MMD between the true joint distribution of $X, Y$, namely $\P^0$, and the estimated model. 
The Berkson and Classical error cases lead to different forms of $\P^0$, due to the dependence structures of $X$, $N$, $W$.

Recall that in the Berkson error case, $N \notindependent X$ but $N \independent W$. Since the focus is on the true relationship between $X$ and $Y$, we consider the fixed design case by treating the observations $\{w_i\}_{i=1}^n$ as fixed points. This is a common assumption in the Berkson case as the values are often assigned by an expert \citep[see e.g.][]{delaigle2006nonparametric}. Recall that by assumption, $Y \independent W \given X$ and let $\P^0_{X \given W}$ and $\P^0_{Y \given X}$ denote the true conditional distributions of $X \given W$ and $Y \given X$, respectively. Then the true joint distribution for any sets $dx, dw \in \mathfrak{S}_{\X}$ and $dy \in \mathfrak{S}_{\Y}$ is:
$
    \P^0(W \in dw, X \in dx, Y \in dy) = \frac{1}{n} \sum_{i=1}^n \delta_{w_i} (dw) \times \P^0_{X \given w_i}  (dx) \times\P^0_{Y \given X} (dy).
$
By marginalizing out $W$ we obtain the target joint distribution of $X,Y$ as: 
\begin{talign} 
   \P^0(X \in dx, Y \in dy) &:= \P^{0,n}_X (dx) \times \P^0_{Y \given X} (dy),  \label{eq:target-berk-joint}\\ 
  \text{where} \quad \P_X^{0,n} (dx) &:= \frac{1}{n} \sum_{i=1}^n \P^0_{X \given w_i} (dx). \label{eq:target-berk-x}
\end{talign}
Further notice that by the assumption of the Berkson error, for each $i = 1, \dots n$, $\P^0_{X \given w_i}$ is such that $x_i \sim \P^0_{X \given w_i}$ means that $x_i = w_i + \nu_i$ and $\nu_i \sim \F_{N}^0$ for the true error distribution $\F_N^0$ or equivalently for any $A \in \sigma_{\X}$ we have $\P^0_{X \given w_i}(A) = \F_N^0(A_{w_i})$ where $A_{w_i} := \{\nu: w_i + \nu \in A\}$. 

In the Classical error case, $X \independent N$ hence it is not sufficient to define the joint distribution via the conditional distribution of $X \given W$. Instead, we need to go beyond the fixed design case and consider the true marginal distribution of $X$. In this case, the joint distribution of $W, X, Y$ is $\P^0(X \in dx, W \in dw, Y \in dy) = \P^0_{X} (dx) \times \P^0_{W \given X} (dw) \times\P^0_{Y \given X} (dy) $
whereas the target joint distribution of $X,Y$ is simply 

\begin{talign}
    \P^0(X \in dx, Y \in dy) = \P^0_{X} (dx) \times \P^0_{Y \given X} (dy).
\end{talign}
For random samples $\{\P^{i} \sim \mu_i\}_{i=1}^n$, with $\mu_i$ as in (\ref{eq-mu}) define $\P := (\P^1, \dots, \P^n)$. The estimated model $\P_{\theta^{\star}(\P)}$, relative to the noise-free covariates, for the Berkson and Classical cases is:  
\begin{talign} \label{eq:estimates-model}
  \text{\tt (Berkson)} \quad \P^{0}_{\theta^{\star}(\P)} (dx \times dy) &:= \P^{0,n}_X (dx) \times\P_{g(\theta^{\star}(\P), \cdot)} (dy) \quad \text{or} \\ \text{\tt (Classical)} \quad \P^{0}_{\theta^{\star}(\P)} (dx \times dy) &:= \P^0_X (dx) \times\P_{g(\theta^{\star}(\P), \cdot)} (dy)
\end{talign}
The goal is to examine how well $\P^{0}_{\theta^{\star}(\P)}$ approximates $\P^{0}$ in expectation if we had access to the distribution of the true covariates. That is, we check if the resulting model captures the relationship between the true covariates and the response variable.

In what follows, we consider the exact DP case for clarity. 
This is represented through the stick-breaking process \citep{sethuraman1994constructive} such that for each $\P^i \sim \mu_i := \DP(c+1,\F_{w_i}') $:
\begin{talign} \label{eq:p-i}
\P^i = \sum_{k=1}^{\infty} \xi_k^i \delta_{\tilde{x}_k^i}, \quad \tilde{x}^i_{1:\infty} \simiid \F'_{w_i}, \quad \xi^i_{1:\infty} \sim \GEM(c+1).
\end{talign}
Using this notation, we also define the joint model under samples from the DP posteriors:
\begin{talign}\label{eq-bar-p-theta-exact}
    \P_{\theta} := \frac{1}{n} \sum_{i=1}^n \P^i  \times\P_{g(\theta, \cdot)} = \frac{1}{n} \sum_{i=1}^n \sum_{k=1}^{\infty} \xi_k^i \delta_{\tilde{x}_k^i} \times \P_{g(\theta, \tilde{x}_k^i)}.
\end{talign}
Recall that $\mathcal{X}$ and  $\mathcal{Y}$ are two topological   spaces, equipped respectively with the Borel $\sigma$-algebra $\mathfrak{S}_\mathcal{X}$ and $\mathfrak{S}_\mathcal{Y}$, and let $\mathcal{Z}= \mathcal{X}\times \mathcal{Y}$, $\mathfrak{S}_\mathcal{Z}=\mathfrak{S}_\mathcal{X}\otimes \mathfrak{S}_\mathcal{Y}$ and $\mathcal{P}_\mathcal{Z}$ be the set of  probability distributions on $(\mathcal{Z},\mathfrak{S}_\mathcal{Z})$. 
We let $k:\mathcal{Z}^2 \rightarrow \mathbb{R}$ be a reproducing kernel on $\mathcal{Z}$ and  denote by $\mathcal{H}_k$   the reproducing kernel Hilbert space (RKHS) over $\mathcal{Z}$ with $k$. 
Examples of kernels include the \textit{Radial Basis Function (RBF)} or \textit{Gaussian} kernel defined as $k(x, y) := \exp\left\{-\frac{1}{2 l^2} \|x-y\|^2 \right\}$ for lengthscale $l$, the \textit{polynomial} kernel of degree $d$ defined as $k(x,y) := (\left<x,y \right> + c)^d$ for some constant $c$ and the \textit{exponential} kernel defined as $k(x,y) := \exp\{\left<x,y \right>\}$. Apart from the reproducing property, there are two other useful properties of kernels that we use in this paper. First, \textit{characteristic} kernels are kernels that induce an injection between probability measures and kernel mean embeddings, more formally:
\begin{definition}[\textbf{Characteristic kernel}. \cite{sriperumbudur2011universality}]
A bounded and measurable kernel $k$ is called characteristic if the map $f: \mathcal{P}_{\Z} \rightarrow \mathcal{H}_k$ defined as 
$
    f(\P) = \mu_{\P}
$
is an injection. 
\end{definition}
This property guarantees that the MMD is a divergence, meaning that 
\begin{talign*}
\MMD_k(\P, \Q) = 0 \iff \P \equiv \Q.
\end{talign*}
Another useful property that a kernel may possess is \textit{translation invariance}. This occurs when a linear perturbation of both inputs does not change the kernel value. 
\begin{definition}[\textbf{Translation invariant kernel}.]
The kernel $k$ is translation invariant if there exists a bounded, continuous, real-valued, positive definite function $\psi$  on $\X \subseteq \R^d$ such that $k(x,y) = \psi(x-y)$ for all $x,y \in \X$.
\end{definition}

We assume  throughout this work that the following standard condition on $k$ holds:
\begin{assumption}
 \label{asm:k}
 The kernel $k: \Z \times \Z \rightarrow \R$ is $\mathfrak{S}_\mathcal{Z}$-measurable and satisfies $|k(z,z^\prime)|\leq 1$ for any $z,z^\prime \in \Z$, where $|\cdot|$ denotes the euclidean norm . Moreover, $\phi: \mathcal{Z} \rightarrow \mathcal{H}_k$ is such that $k((x,y),(x',y')) = \left<\phi(x,y), \phi(x',y') \right>_{\mathcal{H}_k}$.
\end{assumption}
Next,  we let $k_X$ be a kernel on $\mathcal{X}$, $k_Y$ be a kernel  on $\mathcal{Y}$ and we denote by $k_X\otimes k_Y$  the product kernel on $\mathcal{Z}$ such that $k_X\otimes k_Y((x,y),(x',y')) = k_X(x,x') k_Y (y,y')$ for all $(x,y),(x',y')\in \mathcal{Z}$.
With this notation in place, we further assume:
\begin{assumption}
 \label{asm:kxy}
 There is a continuous kernel $k_X: \X \times \X \rightarrow \R$ and a kernel $k_Y: \Y \times \Y \rightarrow \R$ such that $|k_X(x, x^\prime)|\leq 1 $ for all $ x,x^\prime \in \X$, $|k_Y(y, y^\prime)|\leq 1 $ for all $ y,y^\prime \in \Y$ and $k=k_X\otimes k_Y$.
\end{assumption}
As noted in \cite{alquier2024universal}, from Theorems 3-4 in \cite{szabo2017characteristic}  it follows that under Assumption \ref{asm:kxy}, $k$ is characteristic if  $k_X$ and $k_Y$   are characteristic, continuous, translation invariant and bounded. When $\X=\R^{d_x}$ and $\mathcal{Y}=\R^{d_y}$ examples of such kernels $k_X$ and $k_Y$ include the widely-used RBF, exponential kernel and  Mat\'ern kernels. Here we restrict our results to the case where $k_X$ is a translation-invariant kernel:
\begin{assumption} \label{asm-kx}
    There exists a bounded, continuous, real-valued, positive-definite function $\psi$  on $\mathbb{R}^{d_x}$ such that $k_X(x,x') = \psi(x-x')$ for all $x,x^\prime \in \R^{d_\X}$. Here positive-definite means that for every points
$x_1,\dots,x_n\in\mathbb{R}^{d_\X}$, and coefficients $c_1,\dots,c_n\in\mathbb{R}$,
\[
\sum_{i=1}^n\sum_{j=1}^n c_i c_j\,\psi(x_i-x_j)\;\ge\;0.
\]
\end{assumption}
While this assumption is not strictly necessary for deriving generalisation error bounds, it demonstrates the error decomposition relative to the MMD between the true ME distribution and the prior centering measure. This choice also makes intuitive sense for the case of additive ME, since under a translation invariant kernel the distance between any two points $x + \nu_1$ and $x + \nu_2$ is characterized by the distance between $\nu_1$ and $\nu_2$.

Lastly, we make an assumption about the existence of a bounded conditional mean operator for the model $(\P_{g(\theta,x)})_{x \in \mathcal{X}}$, which was introduced in \cite{alquier2024universal}. Define $\P_X := \frac{1}{n} \sum_{i=1}^n \P^i$ for $\P^i$ as in (\ref{eq:p-i}). The conditional mean embedding operator is used to bound the MMD between $\P_{\theta} := \P_X \times \P_{g(\theta, \cdot)}$ as defined in (\ref{eq-bar-p-theta-exact}) and ${\P^0_{\theta}}^{\textcolor{red}{\texttt{Classical}}}  = \P^0_X \times \P_{g(\theta,\cdot)}$ or  ${\P^0_{\theta}}^{\textcolor{blue}{\texttt{Berkson}}} = \P^{0,n}_X \times\P_{g(\theta,\cdot)}$, in terms of the MMD between $\P_X$ and $\P_X^0$ or $\P_X$ and $\P_X^{0,n}$ respectively. Under Assumptions \ref{asm:k} and \ref{asm:kxy}, the conditional mean embedding operator of a probability measure $\P = \P_X \times\P_{Y \given X} \in \mathcal{P}_\mathcal{Z}$ is a mapping $\mathcal{C}_{Y \given X} : \HkX \rightarrow \HkY$ such that $\mathcal{C}_{Y \given X} k_{X}(x, \cdot) = \mu_{\P_{Y \given X=x}}$ where $\mu_{\P_{Y \given X=x}}$ is the kernel mean embedding of $\P_{Y \given X = x}$. 
Using this definition we make the following assumption:
\begin{assumption} \label{asm-lambda}
    For all $\theta \in \Theta$, there exists a bounded linear conditional mean operator $\mathcal{C}_{\theta} : \mathcal{H}_{\mathcal{X}} \rightarrow \mathcal{H}_{\mathcal{Y}}$ for $(\P_{g(\theta,x)})_{x \in \mathcal{X}}$ and $\Lambda := \sup_{\theta \in \Theta} \|\mathcal{C}_{\theta} \|_{o} < \infty$, for $\| \cdot \|_o$ the operator norm. 
\end{assumption}
A more detailed analysis of kernel mean embeddings can be found in \cite{klebanov2020rigorous}. We refer the reader to \cite{alquier2024universal} for a thorough discussion on sufficient conditions for the existence of a bounded linear conditional mean operator. 
Table \ref{tab:prob-meas-defs} summarises the notation for the target joint and marginal distributions alongside the equivalent joint model probability measures for each of the two types of ME. 

\subsection{Main results} \label{sec-theory-main}
We now bound the generalisation error of our method. Proofs for all theoretical results of this section are provided in Appendix \ref{app-proofs}. Note that even when the functional relationship $g$ is known, $\P_{g(\theta, \cdot)}$ might still be misspecified so the expected error is trivially lower bounded by $\inf_{\theta \in \Theta} \MMD_k(\P^{0}, \P_{\theta}^{0})$. Although this type of model misspecification is not directly targeted in this work, we expect to naturally remain robust to such misspecifications due to the robust nature of minimum MMD estimators as studied in \cite{briol2019statistical, cherief2022finite, dellaporta2022robust}. 
%

\begin{theorem}[\textcolor{blue}{\texttt{Berkson}}] \label{thm-main}
Let $k^2_X = k_X \otimes k_X$ and $\P^0 = \P_X^{0,n} \times \P_{Y \given X}^0$. Then under Assumptions \ref{asm:k}-\ref{asm-lambda} and $\Lambda$ as in Assumption \ref{asm-lambda} we have: 
  \begin{talign*} 
&\overbrace{\mathbb{E}_{(x_i,y_i)_{1:n} \sim \P^{0}} \left[ \E_{\P \sim \mu} \left[ \MMD_k(\P^{0}, \P_{\theta^\star(\P)}^{0}) \right] \right] - \inf_{\theta \in \Theta}  \MMD_k( \P^{0}, \P^{0}_{\theta})}^{\text{Generalisation error}}\\
    &\quad \leq \frac{1}{\sqrt{n}} \left(2 + \frac{2(1+\Lambda)}{\sqrt{c+2}} \right) 
    + \underbrace{\frac{2c}{c+1} \left( \Lambda  \MMD_{k^2_X}(\F_N, \F_N^0) + \sqrt{{\text{Var}}_{\mathcal{H}_{k_X}}[\F_N^0] + \MMD^2_{k_X}(\F_N^0, \F_N)}\right)}_{\text{Prior specification \& ME RKHS variance}} \\
    &\quad \quad  +\underbrace{\frac{2}{c+1} \left(\Lambda \MMD_{k^2_X}(\F_N^{0}, \delta_{0})  + \sqrt{2} \sqrt{\MMD_{k_X}(\F_N^0, \delta_0)}\right)}_{\text{ME deviation from 0}}.
\end{talign*}
where $\P := (\P^1, \dots, \P^n) \sim \mu$, $\E_{\P \sim \mu}$ denotes expectation under: $\{\P^i \sim \mu_i\}_{1:n}$ and $\text{Var}_{\mathcal{H}_{k_X}}[\F_N^0]$ denotes the RKHS variance of $\F_N^0$, i.e. ${\text{Var}}_{\mathcal{H}_{k_X}}[\F_N^0] := \E_{\nu \sim \F_N^0}[k_X(\nu, \nu)] - \left<\mu_{\F_N^0}, \mu_{\F_N^0} \right>_{\mathcal{H}_{k_X}}$.
\end{theorem}
Recall that for any set $A \in \sigma_{\X}$, $\P^0_{X \given w_i}(A) = \F_N^0(A_{w_i})$. As expected, the upper bound of the generalisation error depends on the DP prior both through the concentration parameter $c$ and the quality of the prior $\F_{N}$ with respect to $\F_N^0$ as well as the deviation of the true ME from zero and its RKHS variance. Theorem \ref{thm-main} highlights an error trade-off based on the selection of $c$ and the prior beliefs about $X \given W = w_i$, providing a mathematical interpretation of the prior elicitation discussion in Section \ref{sec-prior-elicit}. The three discussed cases can be understood as follows: when $c \rightarrow \infty$, all of the posterior weight in (\ref{eq-dp-posterior}) is placed on the prior centering measure $\F_{N}$ and consequently the generalisation error upper bound depends on the quality of $\F_{N}$ as well as a measure of dispersion of the true ME distribution. This is because if the ME is more dispersed, this leads to the distribution of $X \mid W$ being more dispersed, hence making the problem harder. When $c = 0$ the posterior centring measure is concentrated on the observed data, hence the generalisation error upper bound depends on the $\sqrt{n}$ term and the last term, which quantifies the distance of the true ME terms $\nu^0_{1:n}$ from zero. Finally, $c = 1$ corresponds to when the weight in (\ref{eq-dp-posterior}) has been equally divided between $\F_N$ and the observations. Similarly, in Theorem \ref{thm-main}, the last two terms are equally weighted, and indexed by the same constant.

In the Classical error case, we get a similar bound which we present in Theorem $\ref{thm-main-class}$. Unsurprisingly, the error bound now depends on the quality of $\F_{X \given w_i}$ with respect to $\P_X^0$.
\begin{theorem}[\textcolor{red}{\texttt{Classical}}] \label{thm-main-class}
Let $k^2_X = k_X \otimes k_X$, $\P^0 = \P_X^0 \times \P_{Y \given X}^0$ and $\F := \frac{1}{n} \sum_{i=1}^n \F_{X \given w_i}$. Then under Assumptions \ref{asm:k}-\ref{asm-lambda} and $\Lambda$ as in Assumption \ref{asm-lambda} we have: 
  \begin{talign*}               &\overbrace{\mathbb{E}_{(x,y)_{1:n} \sim \P^{0}} \left[ \E_{\P \sim \mu} \left[ \MMD_k(\P^{0}, \P_{\theta^\star(\P)}^{0}) \right] \right] - \inf_{\theta \in \Theta}  \MMD_k( \P^{0}, \P^{0}_{\theta})}^{\text{Generalisation error}}\\
    &  \leq\frac{1}{\sqrt{n}} \left(\frac{2 \Lambda}{c+1} + \frac{2 \Lambda + 2}{\sqrt{c+2}} \right) \\
    &\quad + \underbrace{\frac{2c}{c+1} \left(\Lambda  \MMD_{k^2_X}(\P_X^0, \F) + \frac{1}{n} \sum_{i=1}^n \sqrt{{\text{Var}}_{\mathcal{H}_{k_X}}[\P_X^0] + \MMD^2_{k_X}(\P_X^0, \F_{X \mid w_i})} \right)}_{\text{Prior specification \& RKHS variance of $X$}} \\
    &\quad + \underbrace{\frac{2}{c+1} \left(\MMD_{k^2_X}(\F_N^{0}, \delta_{0}) + \sqrt{2} \sqrt{\MMD_{k_X}(\F_N^0, \delta_0)} \right)}_{\text{ME deviation from 0}}.
\end{talign*}
where $\P := (\P^1, \dots, \P^n) \sim \mu$ and $\E_{\P \sim \mu}$ denotes expectation under: $\{\P^i \sim \mu_i\}_{1:n}$.
\end{theorem}
Similarly to the Berkson case, the bound on Theorem \ref{thm-main-class} depends on the prior specification. $\F$ incorporates prior beliefs about both the marginal distribution of $X$ through $\F_X$ and about the ME distribution through $\F_{W \given X}$. Hence, the quality of the prior depends on how well $\F_{X \given w_i}$ describes $\P_X^0$. Moreover, similarly to the Berkson case, the prior specification part of the upper bound also depends on an RKHS variance term. In this case, the error depends on the variance of the marginal distribution of $X$. Intuitively, this is unsurprising since a larger dispersion of the distribution of $X$ leads to a larger variance of the distribution of $W \mid X$ hence making the estimation problem harder. This is more clearly illustrated in the special case of the Gaussian example presented in the next section.

\subsection{Special cases} \label{sec-theory-cor}
Theorems \ref{thm-main} and \ref{thm-main-class} provide generalisation error bounds for any prior beliefs about the distributions of $X, X \given W$ or $N$. 
In many real-world cases \citep[see e.g.][]{berry2002bayesian, fung1999measurement}, the distribution of ME is known or can be estimated. In the Berkson error case, when $\F^0_{N}$ is known or can be estimated, we can obtain a tighter bound for the generalisation error, which vanishes as the data sample size grows large, provided that we set $c$ large enough. 
\begin{corollary}[\textcolor{blue}{\texttt{Berkson}}] \label{cor-known-dist}
   Let $\F_{N} \equiv \F_{N}^0$. Then, if $k_X$ is characteristic:
    \begin{talign*} 
    &\mathbb{E}_{(x,y)_{1:n} \sim \P^{0}} \left[ \E_{\P \sim \mu} \left[  \MMD_k(\P^{0}, \P_{\theta^\star(\P)}^{0}) \right] \right] - \inf_{\theta \in \Theta} \MMD_k(\P^{0}, \P^{0}_{\theta}) \nonumber \\
     &\quad \leq  \frac{1}{\sqrt{n}} \left(2 + \frac{2(1+\Lambda)}{\sqrt{c+2}} \right) + \frac{2c}{c+1} \sqrt{{\text{Var}}_{\mathcal{H}_{k_X}}[\F_N^0]}  \\
     &\quad + \frac{2}{c+1} \left(\Lambda \MMD_{k^2_X}(\F_N^{0}, \delta_{0})  + \sqrt{2} \sqrt{\MMD_{k_X}(\F_N^0, \delta_0)}\right). 
\end{talign*}
\end{corollary}
In the above corollary, $k_X$ being characteristic is needed to ensure that MMD is a probability metric, i.e. $\MMD_{k_{X}}(\P, \Q) = 0 \iff \P \equiv \Q$ \citep{sriperumbudur2010hilbert}. We note that if $\F_{N}^0$ is known a priori, then it would make sense to set $c$ large so that the nonparametric posterior is centered around the well-specified prior. Taking the limiting case of $c \rightarrow \infty$, we observe that the upper bound of the generalisation error approaches  $\frac{1}{\sqrt{n}} \left(2 + \frac{2(1+\Lambda)}{\sqrt{c+2}} \right) + 2 \sqrt{{\text{Var}}_{\mathcal{H}_{k_X}}[\F_N^0]} $ and hence that the error decreases at a $\sqrt{n}$ rate with an additional error due to the RKHS variance of the true ME distribution. In the Classical error case, since the target joint distribution involves the marginal distribution of $X$ a similar assumption has to be made about $\P_X^0$. 
\begin{corollary}[\textcolor{red}{\texttt{Classical}}] \label{cor-known-dist-class}
Let $\P_{X | w_i} \equiv \P_X^0 \forall i \in \{1, \dots, n\}$. Then, if $k_X$ is characteristic:
    \begin{talign*} 
    &\mathbb{E}_{(x,y)_{1:n} \sim \P^{0}} \left[ \E_{\P \sim \mu} \left[  \MMD_k(\P^{0}, \P_{\theta^\star(\P)}^{0}) \right] \right] - \inf_{\theta \in \Theta} \MMD_k(\P^{0}, \P^{0}_{\theta}) \nonumber \\
     &\quad     \leq\frac{1}{\sqrt{n}} \left(\frac{2 \Lambda}{c+1} + \frac{2 \Lambda + 2}{\sqrt{c+2}} \right) + \frac{2c}{c+1}  \sqrt{{\text{Var}}_{\mathcal{H}_{k_X}}[\P_X^0]} \\
    &\quad \quad + \frac{2}{c+1} \left(\MMD_{k^2_X}(\F_N^{0}, \delta_{0}) + \sqrt{2} \sqrt{\MMD_{k_X}(\F_N^0, \delta_0)} \right).
\end{talign*}
\end{corollary}
Another common assumption is that the ME or the true covariate follows a Gaussian distribution \citep[e.g.][]{spiegelman1991cost}. 
While not an assumption of our method, we explore this specific scenario when $k_{X}$ is the Gaussian kernel with parameter $l> 0$. 
This allows for a closed form of the bound in Theorem \ref{thm-main} for $\X \subseteq \R^{d_\X}$. 
\begin{corollary}[\textcolor{blue}{\texttt{Berkson}}] \label{cor-rbf-gauss}
   Let $k_{X}$ be the RBF kernel with lengthscale $l > 0$, $\F_{N}^0 = \mathcal{N}(0, \sigma_1^2 I)$ and $\F_{N} = \mathcal{N}(0, \sigma_2^2 I)$. Then, 
   \begin{talign*}
           &\mathbb{E}_{(x,y)_{1:n} \sim \P^{0}} \left[ \E_{\P \sim \mu} \left[  \MMD_k(\P^{0}, \P_{\theta^\star(\P)}^{0}) \right] \right] - \inf_{\theta \in \Theta}  \MMD_k(\P^{0}, \P^{0}_{\theta}) \\
       &\quad \quad \leq  \frac{1}{\sqrt{n}} \left(2 + \frac{2(1+\Lambda)}{\sqrt{c+2}} \right)  + \frac{2c}{c+1} \left( \Lambda  C_1 + C_2\right) + \frac{2}{c+1} \left(\Lambda C_3  + \sqrt{2} \sqrt{C_4}\right) 
\end{talign*}

where 
\begin{align*}
    C_1^2 &:= \left(\frac{l^2}{l^2 + 4\sigma_1^2}\right)^{\frac{d_\X}{2}} -  2\left(\frac{l^2}{l^2 + 2\sigma_1^2 + 2\sigma_2^2}\right)^{\frac{d_\X}{2}} + \left(\frac{l^2}{l^2 + 4\sigma_2^2}\right)^{\frac{d_\X}{2}} \\
    C_3^2 &:= \left(\frac{l^2}{l^2 + 4 \sigma^2_1} \right)^{\frac{d_\X}{2}} -2 \left(\frac{l^2}{l^2 + 2 \sigma^2_1} \right)^{\frac{d_\X}{2}} + 1 \\
    C_2^2 &:= 1 -  2\left(\frac{l^2}{l^2 + \sigma_1^2 + \sigma_2^2}\right)^{\frac{d_\X}{2}} + \left(\frac{l^2}{l^2 + 2\sigma_2^2}\right)^{\frac{d_\X}{2}} \\
    C_4^2 &:= \left(\frac{l^2}{l^2 + 2 \sigma^2_1} \right)^{\frac{d_\X}{2}} -2 \left(\frac{l^2}{l^2 +  \sigma^2_1} \right)^{\frac{d_\X}{2}} + 1. 
\end{align*}

\end{corollary}
As expected, in the absence of ME, that is when $\sigma_1 \approx 0$, we have $C_3 \approx 0$ and $C_4 \approx 0$. Similarly, the closer $\sigma_1$ is to $\sigma_2$, the smaller $C_1$ and $C_2$ are. This occurs because $\F_{N}$ is closer to the true ME distribution $\F_{N}^0$. Under the further assumption that the ME variance is known, i.e. $\sigma_1 = \sigma_2$, as made by various existing methods (see Section \ref{sec:intro}), we have $C_1 = 0$. In the Classical case, we get the following closed form when the true marginal distribution of $X$ is a Gaussian and the Gaussian priors are chosen for both the ME distribution and the marginal of $X$ as in the example in (\ref{eq-bayesian-post}).

\begin{corollary}[\textcolor{red}{\texttt{Classical}}] \label{cor-rbf-gauss-class}
    Let $k_X$ be the RBF kernel with lengthscale $l > 0$, $\F_{N}^0 = \mathcal{N}(0, \sigma_1^2 I)$, $\P_X^0 := \mathcal{N}(\mu_X, \Sigma_X)$ and $\F_{X \mid w_i} := \mathcal{N}(m_i, \sigma_{\text{post}}^2 I)$ obtained by the closed-form posterior in (\ref{eq-bayesian-post}) using the prior $X \sim \mathcal{N}(m_X, \sigma^2 I)$ and $w_i \mid X \sim \mathcal{N}(X, \sigma^2_{N}I)$, i.e. $m_i := \frac{\sigma^2}{\sigma^2 + \sigma^2_N} w_i + \left(1 - \frac{\sigma^2}{\sigma^2 + \sigma^2_N}  \right) m_X$ and $\sigma^2_{\text{post}} = \frac{\sigma^2 \sigma^2_N}{\sigma^2 + \sigma^2_N} $. Then, 
    \begin{talign*}
        &\mathbb{E}_{(x,y)_{1:n} \sim \P^{0}} \left[ \E_{\P \sim \mu} \left[  \MMD_k(\P^{0}, \P_{\theta^\star(\P)}^{0}) \right] \right] - \inf_{\theta \in \Theta} \MMD_k(\P^{0}, \P^{0}_{\theta})\\
        &\quad \leq \frac{1}{\sqrt{n}} \left(\frac{2 \Lambda}{c+1} + \frac{2 \Lambda + 2}{\sqrt{c+2}} \right) \\
        & \quad \quad + \frac{2c}{c+1} \left(\Lambda C_1 + \frac{1}{n} \sum_{i=1}^n \sqrt{C_2 + C_i^2} \right) + \frac{2}{c+1} \left(\Lambda C_3 + \sqrt{2} \sqrt{C_4}\right)
    \end{talign*}
    where 
    \begin{talign*}
        C_i^2 &:= \left| I + \tfrac{2\sigma_{\text{post}}^2}{\ell^2} I \right|^{-1/2}
+ \left| I + \tfrac{2}{\ell^2}\Sigma_X \right|^{-1/2}
\\
&\quad- 2 \left| I + \tfrac{1}{\ell^2}\big(\sigma_{\text{post}}^2 I + \Sigma_X\big) \right|^{-1/2}
\exp\!\left(
  -\tfrac{1}{2}\Delta_i^\top
  \big(\sigma_{\text{post}}^2 I + \Sigma_X + \ell^2 I\big)^{-1}
  \Delta_i
\right)
    \end{talign*}
for $\Delta_i = m_i - \mu_X$, 
\begin{talign*}
    C_2 &:= \left| I + \tfrac{2}{\ell^2}\Sigma_X \right|^{-1/2} \\
    C_1^2 &:= \frac{1}{n^2}\sum_{i=1}^n\sum_{j=1}^n \Big(1+\tfrac{4\sigma_{\text{post}}^2}{\ell^2}\Big)^{-d/2}
   \exp\!\left(-\tfrac{\|m_i-m_j\|^2}{4\sigma_{\text{post}}^2+\ell^2}\right)
\\ 
&\quad + \Big(1+\tfrac{4\sigma^2}{\ell^2}\Big)^{-d/2}
- \frac{2}{n}\sum_{i=1}^n
\Big(1+\tfrac{2(\sigma_{\text{post}}^2+\sigma^2)}{\ell^2}\Big)^{-d/2}
   \exp\!\left(-\tfrac{\|m_i - m_X\|^2}{2(\sigma_{\text{post}}^2+\sigma^2)+\ell^2}\right)
\end{talign*}
and $C_3, C_4$ are as in Corollary \ref{cor-rbf-gauss}
\end{corollary}

Finally, under the assumption of RBF kernels, we prove that the generalisation error bounds presented in Theorems \ref{thm-main} and \ref{thm-main-class} are minimax optimal in terms of the rate in $n$ by employing Le Cam's method \citep[see][Lemma 15.9]{Wainwright_2019}. Intuitively, this result follows from the fact that both bounds rely on a vanishing term of rate $\mathcal{O}(n^{- \frac{1}{2}})$ depending on the concentration rate of a minimum MMD estimator and two error terms which are independent of the sample size $n$ and dependent on the ME structure and the DP prior choice. As a result, we expect the $n$-rate to be at best the rate obtained in standard MMD estimation, without the added ME structure, which is $\mathcal{O}(n^{-1/2})$ \citep[as in][]{briol2019statistical, cherief2022finite, dellaporta2022robust}. Indeed, the minimax rate optimality of an MMD estimator under an RBF kernel has already been established by \cite{tolstikhin2016minimax}. Here, we adapt this result to our setting and show minimax optimality of our generalisation error bounds in the case of an RBF kernel. 
\begin{proposition}\label{prp:minimax}
   Let Assumption \ref{asm:kxy} hold and assume both $k_X$ and $k_Y$ are RBF kernels. Then for any inferential procedure that returns $\hat{\theta}_n$, given observations $\{(w_i, y_i)\}_{i=1}^n$ we have:
    \begin{align*}
        \inf_{\hat{\theta}_n} \sup_{\P\in \mathcal{P}_{\X\times \Z}} \E_{(w_i,x_i,y_i)_{1:n} \sim \P}[\MMD_k(\P^0, \P_{\hat{\theta}_n})] \gtrsim n^{-\frac{1}{2}}.
    \end{align*}
\end{proposition}
Note that here, $\P\in \mathcal{P}_{\X\times \Z}$ denotes any data distribution of the random variables $(X,W,Y)$ and hence contains all the possible ME structures, including the Berkson and Classical case. On the other hand, the DGP $\P_0$ of $(X,Y)$ is obtained from $\P$ after marginalising out $W$, similarly to the objective of Theorems \ref{thm-main} and \ref{thm-main-class}. 

\section{Empirical Assesment} \label{sec:sim}
In this section, we empirically evaluate the performance of our method in the presence of Berkson and Classical measurement errors with synthetic data. Code to reproduce all experiments can be found at \url{https://github.com/haritadell/me-models}. 
\paragraph{Linear regression} \label{sec-exp-lin-reg}
We first examine the performance  of Robust-MEM with the TLS loss. As is widely known \citep[see e.g.][]{berkson1950there}, Berkson ME does not generally lead to biased inference outcomes in linear regression so we consider the Classical error case. However, it's worth noting that there are cases where ignoring Berkson ME in linear regression would still introduce bias, for example in the presence of unobserved confounders as discussed in \cite{haber2021bias}. Consider the linear regression model:
$
    y = \theta_1 + \theta_2 x + \epsilon, \quad
    w = x + \nu$ where $ \epsilon \sim N(0, \sigma^2_{\epsilon}), \quad \nu \sim N(0, \sigma^2_{\nu}).
$
We sample $n = 100$ points from a Gaussian distribution $x_{1:n} \sim N(10,4)$ and perform inference for the parameters $\theta_0 = (\theta_1, \theta_2) = (1,2)$. In addition to Robust-MEM with the TLS loss, we include results from frequentist OLS and TLS estimators, as well as the SIMEX algorithm \citep{cook1994simulation} provided by the \texttt{simex} R package. We examine cases where the ME distribution is not precisely known, using a prior centering measure $\F_N = N(0,0.2)$ for various true values of $\sigma_{\nu}^2$ and the same a priori variance of $0.2$ in the SIMEX algorithm. Results for a well-specified prior are provided in Appendix \ref{app:sec-lin-wellsp}. 
In Fig. \ref{fig:class-lin} we observe that the Robust-MEM (TLS) method remains robust to ME even under a misspecified prior variance and high values of ME variance. In the absence of ME, OLS performs best with all methods performing well. More detailed results, including the MSE and variance can be found in Table \ref{tab:class-lin} (Appendix \ref{app:sec-lin-wellsp}).
\begin{figure}[ht!]
    \center
\includegraphics[width=0.9\textwidth]{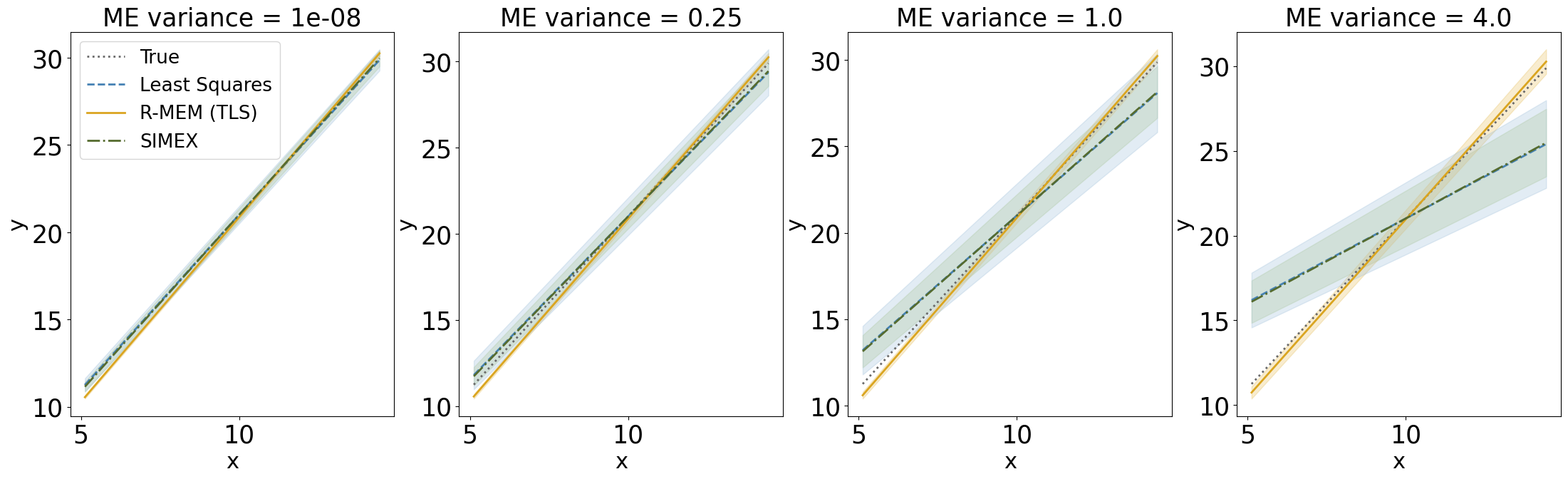}
 \caption{Fitted linear regression line over 100 simulations with a misspecified ME variance. The true line is shown along with (R-MEM (TLS)), nonlinear LS (Least Squares) and SIMEX.}
  \label{fig:class-lin}
\end{figure}
 \paragraph{Nonlinear regression} \label{sec-nonlin-reg}
Directly addressing the TLS objective in nonlinear regression can result in inconsistent parameter estimates \citep{kukush2002consistent, van2004total} so we consider the Robust-MEM (MMD) with a sigmoid-shaped curve defined by:
$$
    y = \frac{\exp(a + bx)}{1 + \exp(a + bx)} + \epsilon, \quad \epsilon \sim N(0, \sigma^2_{\epsilon}) 
$$
with parameter of interest $\theta = (a, b)$. We generate $n=200$ observations with $\theta_0 = (1,3)$ and $\sigma_{\epsilon} = 0.5$. We further consider the frequentist minimum MMD estimator of \cite{briol2019statistical} to investigate how much of the robustness is gained by the NPL framework in comparison to the choice of loss function alone. We consider Berkson ME where the true ME distribution is Gaussian  with increasing variance whereas the prior centering measure is set to be a Student-t distribution with 3 degrees of freedom. The results for the well-specified prior and the Classical error cases are reported in Appendix \ref{app:sec-nonlin-berk-wellsp}, \ref{app-nonlin-class}. 
We observe (Fig. \ref{fig:berk-nonlin-misp} \& Tab. \ref{tab:berk-mse-misp}) that for small values of ME variance ($\leq 0.25$), all methods perform satisfactorily, but as the variance increases Robust-MEM (MMD) remains the most robust to ME while also maintaining good coverage probability.   
\begin{figure}[ht!]
    \center
\includegraphics[width=0.9\textwidth]{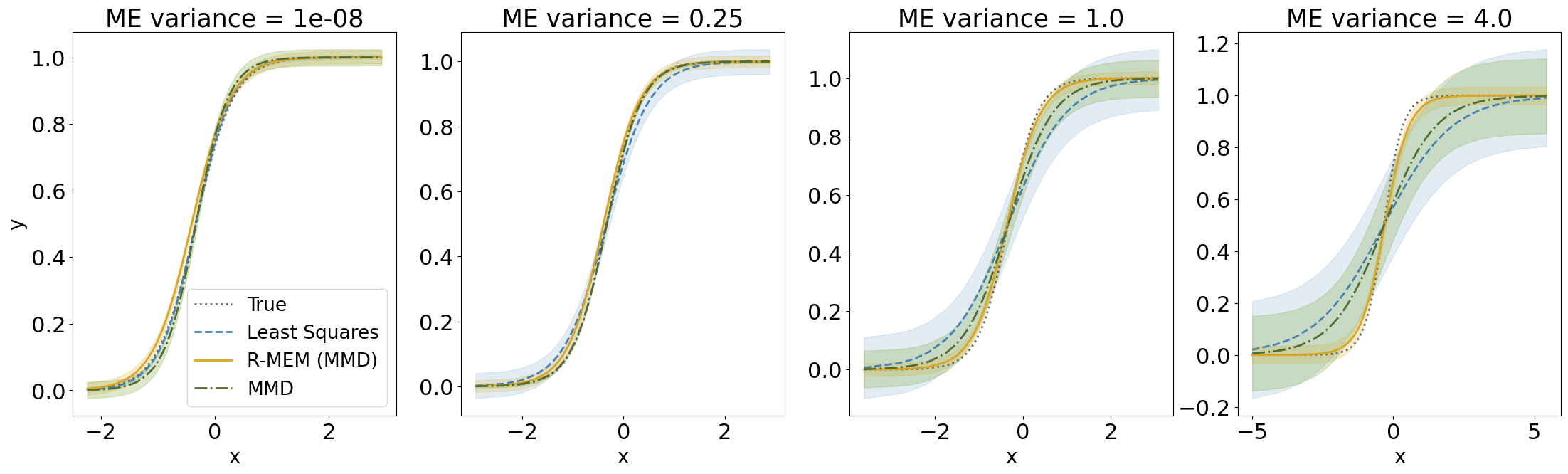}
 \caption{Model fit for the sigmoid curve regression function with Berkson ME for an increasing ME variance over 100 simulations. The true model fit is shown along with our method (R-MEM (MMD)), nonlinear least squares (Least Squares) and minimum MMD estimator (MMD).}
  \label{fig:berk-nonlin-misp}
\end{figure}
\begin{table}[h]
\small
\centering
\caption{Nonlinear Regression example with Berkson ME (over 100 simulations) with a misspecified prior. Top: MSE with associated std. for $\theta = (\theta_1, \theta_2)$. Bottom: Coverage probability (based on the 90\% posterior credible intervals) for the Robust-MEM (MMD) method.}
\label{tab:berk-mse-misp}
\begin{tabular}{llllll}
\hline
Method/Metric &  & \multicolumn{4}{c}{ME variance}  \\
 &  & $1e-08$ & $0.25$ & $1.0$ & $4.0$ \\
\hline
\multirow[t]{2}{*}{\smaller{MMD/}} & $\theta_1$ & 0.25 (0.50) & 0.19 (0.50) & 0.21 (0.22) & 0.49 (0.34) \\
 \smaller{MSE} & $\theta_2$ & 1.34 (2.11) & 0.95 (1.39) & 1.53 (1.22) & 3.85 (1.52) \\
\smaller{R-MEM (MMD)/} & $\theta_1$ & \textbf{0.15 (0.26)} & \textbf{0.13 (0.23)} & \textbf{0.13 (0.18)} & \textbf{0.25 (0.29)} \\
\smaller{MSE} & $\theta_2$ & \textbf{0.19 (0.24)} & \textbf{0.27 (0.43)} & \textbf{0.43 (0.65)} & \textbf{0.77 (0.96)} \\
\cline{1-6}
 \smaller{R-MEM (MMD)/} & $	\theta_1$ & 0.86 & 0.92 & 0.94 & 0.87 \\
 \smaller{Coverage prob.}& $	\theta_2$ & 1 & 0.99 & 0.99 & 0.99 \\
\hline
\end{tabular}
\end{table}
This further demonstrates that robustness arises from the NPL framework rather than solely from the robustness properties of the MMD-based loss function.

\paragraph
{Average Computational Times}
In addition to the empirical performance results, we also recorded the 
average computational times for each method across our experiments. The experiments for the Robust-MEM and minimum MMD estimation methods were run on a \texttt{GPU}. The SIMEX method was run in \texttt{R} using the code provided by the \texttt{simex} \texttt{R} package.
The average runtimes (in seconds) per dataset are summarised in 
Table~\ref{tab:comp-times}.
\begin{table}[h!]
\centering
\caption{Average computational times (in seconds) with associated standard deviation across 100 simulation runs.}
\label{tab:comp-times}
\begin{tabular}{lcc}
\hline
Method & Linear Regression & Nonlinear Regression \\
\hline
OLS             & 0.001 (0.0003) & --      \\
TLS             & 0.003 (0.0008) & --      \\
SIMEX           & 0.29 (0.03) \\
MMD & -- &   1.78 (0.23)   \\
Robust-MEM (TLS)& 5.41 (0.45) & --      \\
Robust-MEM (MMD)& -- & 13.25 (3.52)  \\
\hline
\end{tabular}
\end{table}
These results indicate that while the proposed Robust-MEM methods incur 
a higher computational cost relative to simpler approaches, such as OLS or 
TLS, the runtimes remain practical for the problem sizes considered due to the parallelisation property of the Posterior Bootstrap.

\section{Berkson and Classical Measurement Error Studies} \label{sec:stu}
In this section, we examine the application of our framework in case studies from Econometrics and Mental Health Science in the presence of either Berkson or Classical ME.
\paragraph{California Test Scores} \label{sec-exp-cas}
We first study the  \texttt{CASchools} dataset, available from the \texttt{AER} R package, which provides data on various school districts in California between 1998 and 1999. 
We employ a polynomial regression model to study the effect of school's income on students' test scores, as studied in \cite{hanck2021introduction}, by considering a single outcome variable `Score' to be the average score between maths and reading. To introduce Berkson ME in the dataset we consider the case of group averages which arises when the individual measurements of a covariate are missing but we have data on the group level. People commonly assign individuals a group value based on the group they belong to, disregarding the resulting Berkson ME. To mimic this, we divide the income values range into 15 levels and we assign the within-group average income to all schools of the same income level. This creates Berkson ME in the covariates as each school in level income $i \in \{1, \dots, 15\}$ has some true income $x$, but we assign to it the group average income $w_i$ which means that $x = w_i + \nu$ for some unknown error $\nu$, independent of $w_i$. We consider a second degree polynomial regression model, i.e.     
$Y = a + bx + cx^2 + \epsilon$ where $\epsilon \sim N(0, \sigma_{\epsilon}^2)$. 
The regression fit for 200 posterior samples along with the true and observed covariates is shown in Fig. \ref{fig:cas}. As an oracle, we also plot a standard nonlinear Least Squares fit to the clean data, i.e. the data without ME. To assess the coverage of the suggested methods we repeat the procedure of obtaining posterior samples 100 times with different random seeds. We then compute the coverage probability as the percentage of times that the 90 \% credible interval of the posterior included the true parameter values. 
\begin{figure}[ht!]
     \centering
         \centering
         \includegraphics[width=0.85\textwidth]{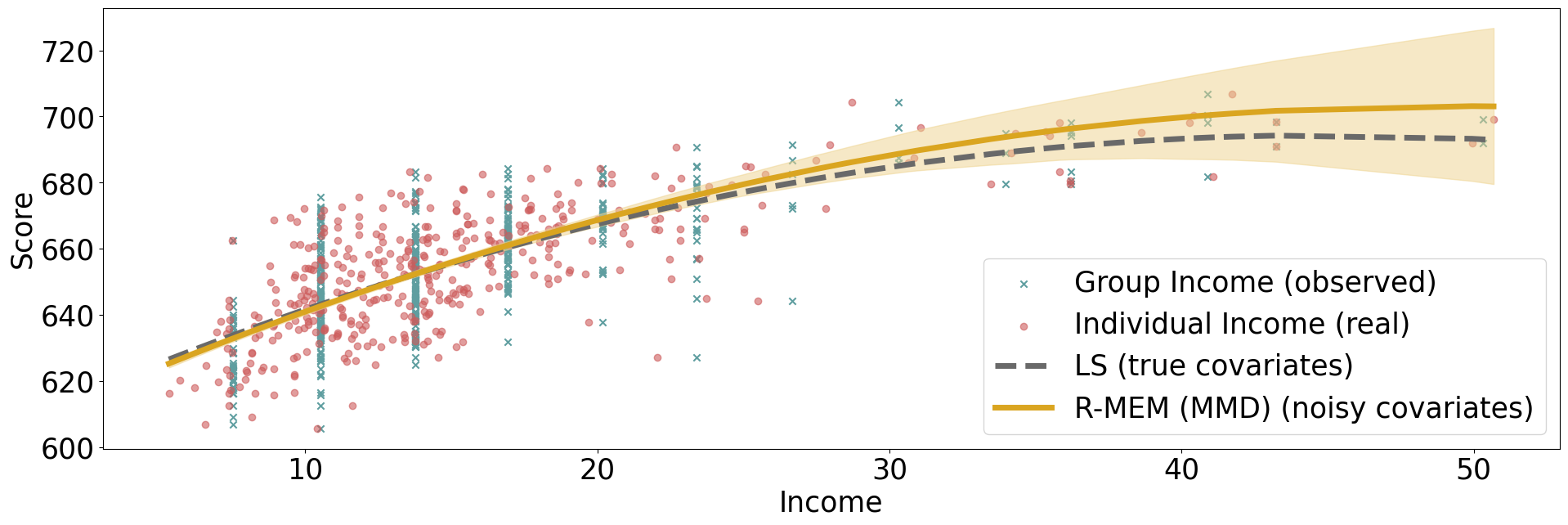}
        \caption{Posterior mean estimates with pointwise 90\% credible sets obtained by Robust-MEM (MMD). Scattered points indicate the observed group level average incomes (w) and the unobserved individual incomes (x) with associated scores (y).} 
        \label{fig:cas}
\end{figure}
Again, as true parameter values here we take the coefficients obtained by fitting a standard nonlinear Least Squares method to the non noisy data, i.e. the individual incomes and scores. The coverage probability for both parameters over 100 simulations was 100\% at a 90\% credible level whereas the average mean squared error of the residuals conditional on the true covariates over 100 runs was 0.25 with a standard deviation of 0.37. 
\paragraph{Mental Health Study} \label{sec-exp-berry}
 We examine a treatment effect estimation problem in the presence of Classical ME, in a mental health study introduced in \cite{berry2002bayesian}. The associated dataset and code for inference were later released in \cite{harezlak2018semiparametric} and code for the semiparametric Bayesian approach with penalized splines can be found in the \texttt{HRW} R package. The study consists of data from a six-week clinical trial of a drug versus a placebo. A physician-assessed score of the patient's mental health is given at baseline and the score is noted again at the end of treatment (Fig. \ref{fig:berry}). Let $I_i$ be the indicator function for the observation $(w_i, y_i)$ being in the placebo group, i.e. $I_i = 0$ if patient $i$ is in the treatment group and $I_i = 1$ otherwise. 
The aim is to model the relationship between the scores at the end of treatment and at baseline, for the treatment and placebo groups, in order to estimate the average treatment effect (ATE) of the drug. The ATE here is defined as the difference between the mean response scores conditional on the true baseline scores, denoted by: 
\begin{align*}
   ATE(x) &= \E[Y \given X = x, I = 0] - \E[Y \given X = x, I = 1] :=g^1(\theta^1,x) - g^0(\theta^0,x)
\end{align*}
where $g^0(\theta^0,x)$ and $g^1(\theta^1,x)$ correspond to the mean response scores, conditional on the baseline scores, for the placebo and treatment groups respectively, indexed by parameters of interest $\theta = (\theta^0, \theta^1)$. 
Full details of the model can be found in Appendix \ref{app-bcr-model}.

Following \cite{berry2002bayesian} and \cite{harezlak2018semiparametric} we assume that the covariates $x_i$ are observed with ME such that for each $i = 1, \dots, n$ we observe $w_i$ where $x_i \given w_i \sim N(w_i, 0.35)$. We hence set $\F_{W \given X} = N(X, 0.35)$ in the DP prior and we further set $\F_X = N(0, 1)$. By conjugacy we can obtain the prior centering measure $\F_{X \given w_i}$ as in (\ref{eq-bayesian-post}). The ATE for posterior value estimates for $\theta$ is plotted in Fig. \ref{fig:berry} with pointwise 90\% credible sets along with the \texttt{HRW} approach denoted by BCR. 
\begin{figure}[ht!]
     \centering
         \centering
         \includegraphics[width=0.85\textwidth]{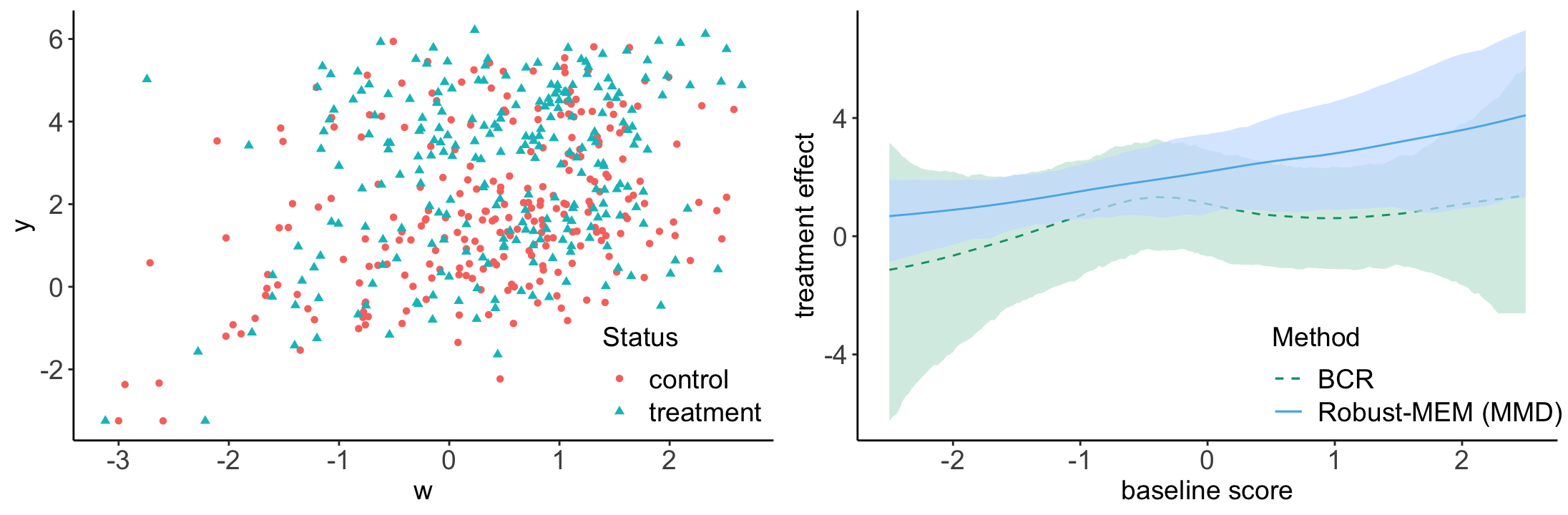}
        \caption{Left: Observed baseline scores (w) with associated end-of-treatment scores (y) for the control and treatment groups in the mental health study. Right: Posterior model estimate for $ATE$ with pointwise 90\% credible sets obtained by Robust-MEM (MMD) and the BCR method of \cite{berry2002bayesian} and \cite{harezlak2018semiparametric}.} 
        \label{fig:berry}
\end{figure}
We observe that the two estimates of the ATE differ especially for higher baseline scores. Some differences in performance with this method were also discussed by \cite{sarkar2014bayesian} who note that this might be due to the strong parametric assumptions in the model considered by \cite{berry2002bayesian}. 
\begin{table}[h]
\small
\centering
\caption{Simulation of mental health study using posterior mean parameters from the BCR method (100 simulations): residuals MSE (with std.), parameter MSE (MSE with std. compared to true values, averaged over 64 parameters), coverage probability (averaged over 64 parameters).}
\label{tab:mh}
\begin{tabular}{lrrrrr}
\hline
& \smaller{Residuals MSE} & \smaller{Parameter MSE} & \smaller{Coverage} \\
\hline
\smaller{Robust-MEM (MMD)} & 29.66 (18.85) & 0.107 (0.024) & 94\%\\
\smaller{BCR} & 101.91 (392.5) & 0.37 (1.55) & 99\%\\
\hline
\end{tabular}
\end{table}
To further assess our method and examine whether these assumptions might be a potential downfall, we simulate data from the obtained parameters of the BCR method and obtain posterior samples from each of the two methods. 
Specifically, we utilize the posterior means for each parameter obtained from the BCR method to generate new data ${(w_i,y_i)}_{i=1}^n$ according to the suggested model. In this scenario, the ground truth parameters are known, enabling us to compute the MSE of the residuals and the MSE of the posterior mean parameters compared to the ground truth. Table \ref{tab:mh} presents the average MSE of the residuals across 100 simulations of the experiment, along with their associated standard deviations. Additionally, we report the MSE with associated standard deviations of the parameter estimates, where each method's estimated parameter is taken as the posterior mean. We further evaluate the uncertainty quantification of the methods by providing the coverage probabilities of the 90\% credible posterior intervals for each parameter over 100 simulations. We observe that the suggested method outperforms BCR in terms of both residuals and parameter MSE, albeit achieving slightly lower coverage probabilities. 
\begin{remark}
    In practice there's a trade-off between MSE and coverage due to the initialization of the optimisation step in (\ref{eq-opt-step}). To achieve targeted coverage probabilities, random initialization from a wide uniform distribution is effective, increasing posterior uncertainty. Conversely, suppose a low MSE is a priority, then a small pre-processing step (Appendix \ref{app-bcr-model}) can be used to identify an initialization point.
\end{remark}
\section{Conclusion} \label{sec-conclusion}
We introduce a Bayesian NPL framework that is robust to Berkson or Classical ME and misspecified priors. 
Empirical findings demonstrate the robustness to ME and improved estimation accuracy of the Robust-MEM (TLS) method compared to prior art. We further instantiate our framework with the MMD-based loss enabling robust inference for nonlinear and non-Gaussian models, offering theoretical guarantees by bounding the generalisation error of the Robust-MEM (MMD) method for both Berkson and Classical error models.

The proposed framework is not limited to specific parametric assumptions or availability of replicate measurements but it is flexible to incorporate such information, if it becomes available. For example, replicate measurements could be incorporated into the prior centering measures through an empirical measure. This work opens up the possibility to explore further specialized loss functions suitable for different data spaces, such as discrete, and allows the possibility to handle heteroscedastic and systematic ME scenarios through alternative nonparametric prior structures.
Future work could also focus on the study of distributional properties of posterior NPL estimators, for example through extending the results of loss-likelihood bootstrap methods \citep{lyddon2019general}. Such results would further impact other related works based on the NPL framework.

\section*{Acknowledgements}
The authors would like to thank Mengqi Chen for spotting an error in a proof in an earlier version of the paper. CD acknowledges support from EPSRC grant [EP/T51794X/1] as part of the Warwick CDT in Mathematics and Statistics and The Alan Turing Institute’s Enrichment Scheme. TD acknowledges support from a UKRI Turing AI acceleration Fellowship [EP/V02678X/1]. For the purpose of open access, the author has applied a Creative Commons Attribution (CC-BY) license to any Author Accepted Manuscript version arising from this submission.

\bibliography{arxiv_submission/bib}

\newpage
\section*{Appendix}

\appendix
\allowdisplaybreaks
\section{Outline}
 In Section \ref{app:related} we give a detailed overview of related literature for inference with ME. In Section \ref{app-tls} we provide further details on the TLS problem and in Section \ref{app-proofs} we provide proofs to all theoretical results presented in the main text. Finally in Section \ref{app-exp-dets} we give further experimental details on the simulations and real-world applications as well as additional experimental results.

\section{Further Related Work} \label{app:related}
An overview of related work was given in Section \ref{sec-related} of the main text. There are a number of alternative approaches that have been developed to deal with ME such as Instrumental Variables methods, mainly in the Econometrics literature, Deconvolution Kernel methods, as well as methodology in the Causal Inference literature. Similarly to the approaches we discussed in the main text, these methods are often limited by strong assumptions or the requirement of additional measurements. 

\paragraph{Instrumental Variable (IV) methods} IV methods, for example, \cite{newhouse1998econometrics, bowden1990instrumental} in the Classical setting, require access to observations from the so-called IV, which is correlated with the covariate, but not necessarily a replicate of it. The required conditional independence between the IV and the response variable, given the unobserved covariate, overcomes the bias induced by the correlation between the regression error and the covariate in the standard setting. The IV setting has also been generalised to a kernel-based nonparametric estimation of causal effect with Classical ME in the cause, in the presence of unobserved continuous confounding by \cite{zhu2022causal}. In the Berkson error case, \cite{schennach2013regressions} deals with nonparametric regression relying on the availability of an IV. Even in the cases where a candidate RV is available, it is hard to ensure that the necessary IV conditions hold.

\paragraph{Deconvolution Kernel Methods}
Deconvolution kernel methods \cite{fan1993nonparametric, wang2011deconvolution} were introduced in Section \ref{sec-related} for kernel-based estimation of regression functions. One of the main challenges in these methods is the choice of the deconvolving kernel as these do not generally admit a closed form and can be sensitive to the selection of bandwidth. \cite{lutkenoner2015family} introduced a kernel family enabling analytical representations of deconvolving kernels for normally distributed ME. \cite{delaigle2016methodology} relaxed assumptions and proposed a deconvolution method for unknown, symmetric error distributions. Similarly, \cite{kato2018uniform} developed uniform confidence bands based on a multiplier-bootstrap approach, assuming unknown error distributions but requiring an independent error distribution sample. Despite desirable theore-tical properties, these methods often apply to specific function classes, such as sufficiently smooth functions. For Berkson-type error models, \cite{delaigle2006nonparametric} suggested a flexible nonparametric approach based on trigonometric functions, assuming some knowledge of the ME distribution or replicate measurements of the error or covariates.

\paragraph{Causal Inference} In causal inference, a biased estimate can lead to false conclusions about the causal relationship between a covariate and the response variable. This problem has been explored in the causal inference literature by reframing it as a causal effect estimation problem with unobserved confounders  \cite[see][]{kuroki2014measurement,adams2019learning,finkelstein2021partial}. Such causal effect estimation problems are met in health and epidemiology sciences where we are interested in exposure-response relations \cite[see][]{brakenhoff2018measurement}. Exposures, often measured with error,
can include both self-reported quantities, such as patient lifestyle changes, and laboratory measurements such as serum cholesterol or the levels of aluminum in the brain \cite[][]{campbell2002potential}. 

\section{Further details on the TLS problem} \label{app-tls}
The main idea of TLS is to expand OLS to consider perturbations in both the covariates and response variables. Consider the model described in (\ref{eq-eps-dist}). The special case of linear regression corresponds to $g(\theta_0, x) = \theta_0^T x$. For observations $w = (w_1, \dots, w_n)^T \in \mathbb{R}^{n \times d_x}$ and $y = (y_1, \dots, y_n)^T \in \mathbb{R}^{n \times 1}$ with associated unobserved errors $\nu = (\nu_1, \dots, \nu_n)^T \in \mathbb{R}^{n \times d_x}$ and $\epsilon = (\epsilon_1, \dots, \epsilon_n)^T \in \mathbb{R}^{n \times 1}$ such that $\nu_{1:n} \simiid \F_{N}^0$ and $\epsilon_{1:n}\simiid \F^0_{E}$, we have that:
\begin{align*} 
    \forall i = 1, \dots, n: \quad y_i = \theta^T x_i  + \epsilon_i, \quad \quad
    x_i = w_i + \nu_i 
\end{align*} 
and we wish to correctly estimate $\theta \in \mathbb{R}^{d_x \times 1}$. It is straightforward to see that if we naively use OLS to fit $\{(w_i, y_i)\}_{i=1}^n$ with the linear model, ignoring ME, the resulting estimator for $\theta$ will be biased because each $w_i$ is correlated with $\nu_i$. This bias is not present in the TLS case. Denote by $\|\cdot \|_F$ the Frobenius norm, then TLS corresponds to the optimisation problem:
\begin{talign*}
    \min_{\epsilon, \nu, \theta} \left\| [\nu \quad \epsilon  ] \right\|_F^2 \quad \text{subject to} \quad y =  (w + \nu) \theta+ \epsilon
\end{talign*}
or equivalently:
\begin{talign}\label{eq-tls}
    \min_{\nu, \theta} \sum_{i=1}^n \| \nu_i \|_2^2 + \| y - (w + \nu)\theta \|_2^2.
\end{talign}
The ordinary TLS algorithm has been extended to, among others, the regularised TLS \cite[][]{guo2002regularized} to induce regularisation on the set of admissible solutions and convex TLS \cite[][]{malioutov2014convex} to deal with varying ME, known error structure and outliers. Some finite sample statistical properties of TLS estimators were given by \cite{xu2012total}. For a more thorough review of TLS methods, we refer the reader to \cite{van2004total} and \cite{markovsky2007overview}. 

\subsection{Connection to Gaussian distributed errors} \label{app-tls-gaus}
We describe how the TLS objective function is equivalent to maximising the joint log-likelihood in the Gaussian distributed errors case. For simplicity we are going to present the univariate Gaussian case, but the multivariate case follows easily. Consider a linear MEM model with Gaussian noise:
    \begin{talign}
        y_i &= \alpha x_i + \epsilon_i, \quad \epsilon_i \sim N(0, \sigma_{\epsilon}^2) \label{eq-y-model}\\
        w_i &= x_i + \nu_i , \quad \nu_i \sim N(0, \sigma_{\nu}^2)\label{eq-x-model}.
    \end{talign}
    Let $Y = (y_1, \dots, y_n) \in \mathbb{R}^n$, $w = ( w_1, \dots, w_n) \in \mathbb{R}^n$, $x = (x_1, \dots, x_n) \in \mathbb{R}^n$, $\epsilon = (\epsilon_1, \dots, \epsilon_n) \in \mathbb{R}^n$ and $\nu = (\nu_1, \dots, \nu_n) \in \mathbb{R}^n$. Then it follows that
    \begin{talign*}
        Y \given \alpha, x, \sigma_\epsilon &\sim N(Y | \alpha x, \sigma^2_{\epsilon} I)  \\
        w \given x, \sigma_{\nu} &\sim N(w | x, \sigma^2_{\nu} I)
    \end{talign*}
    hence under the assumption that $W\independent{Y} \given X$ the joint negative log-likelihood is 
    \begin{talign}
        - l(w,Y \given \alpha, \sigma_\epsilon, \sigma_\nu, x) = &n \log (\sigma_\nu) + n \log(\sigma_\epsilon) + n \log(2 \pi) + \frac{1}{2 \sigma^2_\nu} \sum_{i=1}^n (w_i - x_i)^2 \nonumber \\
        &\quad + \frac{1}{2 \sigma^2_\epsilon} \sum_{i=1}^n (y_i - \alpha x_i)^2.
    \end{talign}
    Substituting (\ref{eq-y-model}) and (\ref{eq-x-model}) above we see that minimising the negative joint log-likelihood above is equivalent to minimising the objective 
    \begin{talign} \label{eq-obj-joint}
        n \log (\sigma_\nu) + n \log(\sigma_\epsilon) + n \log(2 \pi) + \frac{1}{2 \sigma^2_\nu} \sum_{i=1}^n \nu_i^2 + \frac{1}{2 \sigma^2_\epsilon} \sum_{i=1}^n (y_i - \alpha (x_i + \nu_i))^2
    \end{talign}
    So for parameters of interest $\theta = (\alpha, \nu)$ then the MLE estimator would be 
    \begin{talign*}
        \hat{\theta} = \argmin_{\nu, \alpha}  \sum_{i=1}^n \left[ \nu_i^2 +  (y_i - \alpha (x_i + \nu_i))^2 \right]
    \end{talign*}
    which is precisely the TLS objective. If we don't have any extra information on $\nu$, this is generally a non-identifiable problem hence TLS uses numerical optimisation techniques to overcome this. 

\subsection{Solution of the TLS problem} \label{app-tls-sol}
The solution to the problem in (\ref{eq-tls}) has been derived for real RVs using the Singular Value Decomposition (SVD) by \cite{golub1979total}, and \cite{ golub1980analysis} and later generalised by \cite{van1988analysis} who relaxed some of the necessary numerical assumptions for the existence of a solution. \cite{diao2019total} later improved the computational time of solving the TLS problem through fast sketching methods. Here, we provide the SVD solution which we also use in the experiments.

Let $x \in \mathbb{R}^{n \times d_x}$ denote the matrix with $w_i \in \mathbb{R}^{d_x}$ as its i'th row and $Y \in \mathbb{R}^{n \times d_y}$ be the matrix with $y_i \in \mathbb{R}^{d_y}$ in the i'th row. Further denote by $D \in \R^{n \times (d_x + d_y)}$ the data matrix $D := [x \quad Y]$. Let $D = u \Sigma V^T$ be the SVD of $D$ where $\Sigma = \text{diag}(\sigma_1, \dots, \sigma_{d_x + d_y})$ with $\sigma_1 \geq \dots \geq \sigma_{d_x+d_y}$ the singular values of $D$. Moreover denote by $V_{22} \in \mathbb{R}^{d_y \times d_y}$ the bottom right block of $V$. Then the TLS optimisation problem in (\ref{eq-tls}) has a unique solution if and only if $V_{22}$ is non-singular and $\sigma_{d_x} \neq \sigma_{d_x + 1}$. The solution is given by $A = - V_{12} V_{22}^{-1}$ where $V_{12} \in \mathbb{R}^{d_y \times d_x}$ is the top right partition of $V$. A full derivation of the solution can be found at \cite{golub1980analysis}.

\section{Proofs} \label{app-proofs}
We now provide proofs for all technical results of Section \ref{sec-theory}. We begin with all theoretical results for the Berkson case in Section \ref{app:sec-proofs-berk}. The proofs for the results for the Classical error case are in Section \ref{app:sec-proofs-class}.

\subsection{Proofs of theoretical results for the \textcolor{blue}{\texttt{Berkson}} error model} \label{app:sec-proofs-berk}
\subsubsection{Proof of Theorem \ref{thm-main} (\textcolor{blue}{\texttt{Berkson}})}  \label{app-proof-thm}
We start with the following Lemma provided in \cite{alquier2024universal}. 
\begin{lemma}(\cite{alquier2024universal}, Lemma 10)
\label{lemma:preliminary:MMD}
 Let $\mathcal{S}$ be a set (equipped with a $\sigma$-algebra). Let $S_1,\dots,S_n$ be independent RVs on $\mathcal{S}$ with respective distributions $Q_1,\dots,Q_n$. Define $\bar{Q}=(1/n)\sum_{i=1}^n Q_i$ and $\hat{Q}=(1/n)\sum_{i=1}^n \delta_{S_i} $. Then under assumptions \ref{asm:k} and \ref{asm:kxy} we have:
 \begin{talign*}
  \mathbb{E}_{S_{1:n} \sim Q_{1:n}}\left[ \MMD_k(\bar{Q},\hat{Q}) \right] \leq \frac{1}{\sqrt{n}} \text{ and } \mathbb{E}_{S_{1:n} \sim Q_{1:n}}\left[ \MMD_k^2(\bar{Q},\hat{Q}) \right] \leq \frac{1}{n}. 
  \end{talign*}
 where we denote with $\mathbb{E}_{S_{1:n} \sim Q_{1:n}}$ the expectation under $\{S_i \sim Q_i\}_{i=1}^n$.
\end{lemma}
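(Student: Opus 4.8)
The plan is to reduce the statement to a single second-moment computation in the RKHS $\Hk$ and then obtain the first inequality from the second by Jensen. The starting point is the embedding identity $\MMD_k(\bar Q,\hat Q)=\|\mu_{\bar Q}-\mu_{\hat Q}\|_{\Hk}$, where $\mu_Q:=\E_{S\sim Q}[\phi(S)]\in\Hk$ is the kernel mean embedding, well-defined as a Bochner integral because $\|\phi(S)\|_{\Hk}^2=k(S,S)\leq 1$ under Assumption \ref{asm:k}. Since $\hat Q=(1/n)\sum_i\delta_{S_i}$ is a sum of point masses, its embedding is simply $\mu_{\hat Q}=(1/n)\sum_{i=1}^n\phi(S_i)$, while linearity of the embedding gives $\mu_{\bar Q}=(1/n)\sum_{i=1}^n\mu_{Q_i}$. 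Setting $\xi_i:=\phi(S_i)-\mu_{Q_i}$, these are independent, mean-zero $\Hk$-valued random elements, and $\mu_{\hat Q}-\mu_{\bar Q}=(1/n)\sum_{i=1}^n\xi_i$.

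Next I would expand the squared norm and take expectations, obtaining
\[
\E\!\left[\MMD_k^2(\bar Q,\hat Q)\right]=\frac{1}{n^2}\sum_{i=1}^n\sum_{j=1}^n\E\!\left[\langle \xi_i,\xi_j\rangle_{\Hk}\right].
\]
For the off-diagonal terms $i\neq j$, independence together with $\E[\xi_i]=0$ yields $\E[\langle\xi_i,\xi_j\rangle_{\Hk}]=\langle\E\xi_i,\E\xi_j\rangle_{\Hk}=0$. For the diagonal terms I use the variance-type bound $\E[\|\xi_i\|_{\Hk}^2]=\E[\|\phi(S_i)\|_{\Hk}^2]-\|\mu_{Q_i}\|_{\Hk}^2\leq\E[k(S_i,S_i)]\leq 1$, again invoking $|k|\leq 1$. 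Summing the $n$ surviving diagonal contributions gives $\E[\MMD_k^2(\bar Q,\hat Q)]\leq n/n^2=1/n$, which is the second claimed bound. The first inequality then follows from concavity of the square root via Jensen: $\E[\MMD_k(\bar Q,\hat Q)]=\E[\sqrt{\MMD_k^2(\bar Q,\hat Q)}]\leq\sqrt{\E[\MMD_k^2(\bar Q,\hat Q)]}\leq 1/\sqrt n$.

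I do not anticipate a genuine obstacle here; the points requiring care are bookkeeping rather than difficulty. Specifically, I must justify that $\mu_Q$ exists as a Hilbert-space-valued (Bochner) integral and that expectation commutes with the inner product, both of which follow from the uniform bound $\|\phi(S)\|_{\Hk}\leq 1$ guaranteed by Assumption \ref{asm:k}. Notably, Assumption \ref{asm:kxy} (the product structure $k=k_X\otimes k_Y$) is not actually needed for this estimate; boundedness of $k$ alone drives both inequalities. The key structural observation that makes everything collapse is that $\hat Q$ is an empirical measure, so its embedding is an average of feature vectors $\phi(S_i)$ with common means $\mu_{Q_i}$, turning the squared MMD into the expected squared norm of a sum of independent centered terms, whose cross terms vanish.
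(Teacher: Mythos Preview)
Your argument is correct and is exactly the standard proof: represent the squared MMD as the expected squared norm of a sum of independent, mean-zero $\Hk$-valued elements, kill the cross terms by independence, bound each diagonal term by $\E[k(S_i,S_i)]\leq 1$, and apply Jensen for the first inequality. The paper does not actually prove this lemma; it simply cites it from \cite{alquier2020universal}, and your write-up is precisely the computation one finds there (your observation that Assumption~\ref{asm:kxy} is unnecessary is also accurate---only the bound $|k|\leq 1$ from Assumption~\ref{asm:k} is used).
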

We now prove an important lemma bounding the expected MMD between the target marginal distribution of $X$, namely 
\begin{talign*}
\mathbb{P}_X^{0,n} = \frac{1}{n} \sum_{i=1}^n \mathbb{P}_{X \given w_i}^{0}
\end{talign*}
where $\mathbb{P}_{X \given w_i}^{0}$ is as defined in (\ref{eq:target-berk-x}), and the sample from the exact DPs $\P_X$. 
Recall that we denote by $\F_N^0$ the true ME distribution, i.e. $$\nu_{1:n}^0 \simiid \F_N^0$$.
\begin{lemma} (\textcolor{blue}{\texttt{Berkson}}) \label{lemma-marg-X}
For $i = 1, \dots, n$ let $\mathbb{P}^{i}$ be a sample from the exact Dirichlet process $\mu_i = \DP(c+1, \F'_{w_i})$ and $\P_X = \frac{1}{n} \sum_{i=1}^n \P^i$. 
Then for any kernel $k_X$ on $\mathcal{X}$ satisfying Assumptions \ref{asm:kxy} and \ref{asm-kx} the following statement holds: 
 \begin{talign*}
     &\mathbb{E}_{x_{1:n} \sim \P_X^{0,n}} [ \E_{\P \sim \mu} [\MMD_{k_X}(\P_X^{0,n}, \P_X)]] \leq \\
     &\quad \quad \quad \frac{1}{\sqrt{n(c+2)}} + \frac{c}{c+1} \MMD_{k_X}(\F_N^0, \F_N) + \frac{1}{c+1} \MMD_{k_X}(\F_N^{0}, \delta_0).
 \end{talign*}
\end{lemma}
\begin{proof}
We use the stick-breaking process representation \cite{sethuraman1994constructive} of a DP, meaning that for each $i \in \{1, \dots, n\}$,  $\P^i = \sum_{j=1}^{\infty} \xi_j^i \delta_{z_j^i}$ where $\xi_{1:\infty} \sim \GEM(c+1)$ and $z_{1:\infty} \simiid \F'_{w_i}$. For clarity we use the separate expectations $\E_{\xi^i_{1:\infty} \sim \GEM(c+1)}$ and $\mathbb{E}_{z^i_{1:\infty} \simiid \F'_{w_i}}$ induced by $\mu_i$ for each $i \in \{1, \dots,n\}$. Further let $\F' := \frac{1}{n} \sum_{i=1}^n \F'_{w_i}$. Using the triangle inequality, we have
\begin{talign} \label{eq-target}
   \MMD_{k_X}(\mathbb{P}_{X}^{0,n},\mathbb{P}_X) \leq  \MMD_{k_X}(\P_{X}, \F') + \MMD_{k_X}(\mathbb{P}_{X}^{0,n}, \F') 
\end{talign}
We first bound the expected MMD between $\P_X$ and $\F'$. We have that:
\begin{talign*}
    &\mathbb{E}_{\P \sim \mu} [\MMD_{k_X}^2(\P_X, \F')] \nonumber\\
    &\quad \quad = \mathbb{E}_{\P \sim \mu} [\| \mu_{\P_X} - \mu_{\F'} \|_{\mathcal{H}_{k_X}}^2] \nonumber\\
    &\quad\quad=  \mathbb{E}_{\P \sim \mu} \left[\left\| \frac{1}{n} \sum_{i=1}^n (\mu_{\P^i} - \mu_{\F'_{w_i}}) \right\|^2_{\mathcal{H}_{k_X}} \right] \nonumber \\
    &\quad \quad= \frac{1}{n^2} \sum_{i=1}^n \mathbb{E}_{\P \sim \mu} \left[ \| \mu_{\P^i} - \mu_{\F^\prime_{w_i}} \|^2_{\HkX}\right] \\
    &\quad \quad \quad \quad + \frac{2}{n^2} \sum_{i \neq j} \E_{\P \sim \mu} \left[\left<\mu_{\P^i} - \mu_{\F^{\prime}_{w_i}}, \mu_{\P}^{j} - \mu_{\F^{\prime}_{w_j}} \right>_{\mathcal{H}_{k_X}} \right] 
\end{talign*}
Notice that for each $i \neq j$, since $\E_{\P^i \sim \mu_i}[\P^i] = \F^\prime_{w_i}$ by the definition of the mean of a DP, we have $\E_{\P \sim \mu}[\mu_{\P^i}] = \mu_{\F^\prime_{w_i}}$ and hence
\begin{talign}
& \E_{\P\sim \mu} \left[\left< \mu_{\P^i} - \mu_{\F^\prime_{w_i}}, \mu_{\P^j} - \mu_{\F^\prime_{w_j}} \right>_{\mathcal{H}_{k_X}} \right]\nonumber \\
&= \E_{\P \sim \mu} [\left<\mu_{\P^i}, \mu_{\P^j} \right>] - 2 \E_{\P\sim \mu} \left[\left<\mu_{\P^i}, \mu_{\F^\prime_{w_j}} \right>_{\mathcal{H}_{k_X}}\right] + \left<\mu_{\F^\prime_{w_i}}, \mu_{\F^\prime_{w_j}} \right>_{\mathcal{H}_{k_X}} \nonumber\\
&= 0. \label{eq:zeroeq}
\end{talign}
Therefore, we have:
\begin{talign}
    &\E_{\P \sim \mu} [\MMD_{k_X}^2(\P_X, \F')] \nonumber\\
    &\quad \quad= \frac{1}{n^2} \sum_{i=1}^n \mathbb{E}_{\P \sim \mu} \left[ \| \mu_{\P^i} - \mu_{\F^\prime_{w_i}} \|^2_{\HkX}\right] \nonumber\\
    &\quad \quad= \frac{1}{n^2} \sum_{i=1}^n \E_{\xi^{i}_{1:\infty} \sim \GEM(c+1)} \left[ \mathbb{E}_{z^{i}_{1:\infty} \simiid \F'_{w_i}} \left[\left\| \sum_{j=1}^{\infty} \xi_j^i k_X(z_j^i, \cdot) - \mu_{\F'_{w_i}}  \right\|^2_{\mathcal{H}_{k_X}} \right] \right] \nonumber \\
    &\quad\quad= \frac{1}{n^2} \sum_{i=1}^n \E_{\xi^{i}_{1:\infty} \sim \GEM(c+1)} \left[ \mathbb{E}_{z^{i}_{1:\infty} \simiid \F'_{w_i}} \left[ \sum_{j=1}^{\infty} (\xi_j^i)^2 \left\| k_X(z_j^i, \cdot) - \mu_{\F'_{w_i}} \right\|^2_{\mathcal{H}_{k_X}} \right.\right. \nonumber \\
    &\quad\quad\quad \quad \left. \left. + 2 \sum_{j \neq t} \xi_j^i \xi_t^i \left<k_X(z_j, \cdot) - \mu_{\F'_{w_i}}, k_X(z_t, \cdot) - \mu_{\F'_{w_i}} \right>_{\HkX} \right] \right]. \label{eq:pxf}
\end{talign}
For any $j \neq t$ and each $i \in \{1. \dots, n\}$ we have:
\begin{talign}
    & \E_{z_j, z_t \simiid \F'_{w_i}} \left[ \left< k_X(z_j, \cdot) - \mu_{\F'_{w_i}}, k_X(z_t, \cdot) - \mu_{\F'_{w_i}} \right>_{\HkX} \right] \nonumber \\
    &= \E_{z_j, z_t \simiid \F'_{w_i}} \left[\left< k_X(z_j, \cdot) - \E_{z_j \sim \F'_{w_i}}[k_X(z_j, \cdot)], \right. \right. \nonumber\\
    &\left. \left.\quad \quad \quad k_X(z_t, \cdot) - \E_{z_t \sim \F'}[(k_X(z_t, \cdot)] \right>_{\HkX} \right] \nonumber \\
    &= \E_{z_j, z_t \simiid \F'_{w_i}}[k_X(z_j, z_t)] - \E_{z_j \sim \F'_{w_i}}[k_X(z_j, \cdot)] \E_{z_t \sim \F'_{w_i}}[k_X(z_t, \cdot)] \notag \nonumber\\
     &\qquad - \E_{z_t \sim \F_{w_i}}[k_X(z_t, \cdot)] \E_{z_j \sim \F_{w_i}}[k_X(z_j, \cdot)] \nonumber\\
     &\quad \quad + \E_{z_t \sim \F'_{w_i}}[k_X(z_t, \cdot)] \E_{z_j \sim \F'_{w_i}}[k_X(z_j, \cdot)] \nonumber\\
    &= \E_{z_j, z_t \simiid \F'_{w_i}}[k_X(z_j, z_t)] - \E_{z_j, z_t \simiid \F'_{w_i}}[k_X(z_j, z_t)] \nonumber \\
    &\quad\quad- \E_{z_j, z_t \simiid \F'_{w_i}}[k_X(z_j, z_t)] + \E_{z_j, z_t \simiid \F'_{w_i}}[k_X(z_j, z_t)] \nonumber \\
    &= 0. \label{eq:zeroeq2}
\end{talign}
Moreover, for any $j = 1,2, \dots$ and each $i \in \{1,\dots,n\}$ we have 
\begin{talign}
    &\E_{z_k \sim \F'_{w_i}} \left[ \| k_X(z_k, \cdot) - \mu_{\F'_{w_i}} \|_{\HkX}^2    \right] \nonumber \\
    &\quad= \E_{z_k \sim \F'_{w_i}} \left[ \| k_X(z_k, \cdot) - \E_{z_k \sim \F'_{w_i}}[ k_X(z_k, \cdot)] \|_{\HkX}^2 \right] \nonumber \\
    &\quad= \E_{z_k \sim \F'_{w_i}} \left[ \| k_X(z_k, \cdot) \|_{\HkX}^2 - 2 \left< k_X(z_k, \cdot), \E_{z_k \sim \F_{w_i}}[k_X(z_k, \cdot)] \right> \right. \nonumber \\
    &\quad\quad \quad \left. + \|\E_{z_k \sim \F'_{w_i}} [k_X(z_k, \cdot)] \|_{\HkX}^2 \right] \nonumber\\
    &\quad= \E_{z_k \sim \F'_{w_i}} \left[ \| k_X(z_k, \cdot) \|_{\HkX}^2 \right] - 2 \left< \E_{z_k \sim \F'_{w_i}} [k_X(z_k, \cdot)], \right.  \nonumber \\
    &\quad\quad \quad \left. \E_{z_k \sim \F'_{w_i}}[k_X(z_k, \cdot)] \right> + \|\E_{z_k \sim \F'_{w_i}} [k_X(z_k, \cdot)] \|_{\HkX}^2 \nonumber\\
    &\quad= \E_{z_k \sim \F'_{w_i}} \left[ \| k(z_k, \cdot) \|_{\HkX}^2 \right] - 2   \|\E_{z_k \sim \F'_{w_i}} [k_X(z_k, \cdot)] \|_{\HkX}^2 \nonumber \\
    &\quad\quad \quad + \|\E_{z_k \sim \F'} [k_X(z_k, \cdot)] \|_{\HkX}^2 \nonumber\\
    &\quad= \E_{z_k \sim \F'_{w_i}} [\| k_X(z_k, \cdot) \|_{\HkX}^2] - \|\E_{z_k \sim \F'_{w_i}} [k_X(z_k, \cdot)] \|_{\HkX}^2 \nonumber\\
    &\quad\leq \E_{z_k \sim \F'_{w_i}} [\| k_X(z_k, \cdot) \|_{\HkX}^2] \nonumber\\
    &\quad= \E_{z_k \sim \F'_{w_i}} [ | k_X(z_k, z_k) |] \label{eq:ineq1}.
\end{talign}
Substituting (\ref{eq:zeroeq2}) and (\ref{eq:ineq1}) in (\ref{eq:pxf}):
\begin{talign*} 
    \mathbb{E}_{\P \sim \mu} [\MMD_{k_X}^2(\P_X, \F')] &= \frac{1}{n^2} \sum_{i=1}^n \sum_{j=1}^{\infty} \E_{\xi_{1:\infty}^i \sim \GEM(c+1)}[(\xi_j^i)^2] \\
    &\quad \quad\E_{z_k \sim \F'_{w_i}} [ | k_X(z_k, z_k) |] \\
    &\leq \frac{1}{n^2} \sum_{i=1}^n \sum_{j=1}^{\infty} \E_{\xi_{1:\infty}^i \sim \GEM(c+1)}[(\xi_j^i)^2] \\
    &= \frac{1}{n}\sum_{j=1}^{\infty} \frac{2(c+1)^{j-1}}{(c+3)^j(c+2)} \\
    &= \frac{1}{n}\frac{2}{(c+3)(c+2)} \sum_{j=1}^{\infty} \left(\frac{c+1}{c+3} \right)^{j-1} \\
    &= \frac{1}{n}\frac{2}{(c+3)(c+2)} \frac{1}{1- \frac{c+1}{c+3}} \\
    &= \frac{1}{n(c+2)}
\end{talign*}
where in the third equality we used the second moment of the GEM distribution namely $\E_{\xi_{1:\infty} \sim \GEM(c+1)}[\xi_j^2] = \frac{2(c+1)^{j-1}}{(c+3)^j(c+2)}$. By Jensen's inequality we have that 
\begin{align} \label{eq-px-fprime}
    \mathbb{E}_{\P \sim \mu} [\MMD_{k_X}(\P_X, \F')] \leq \sqrt{ \mathbb{E}_{\P \sim \mu} [\MMD_{k_X}^2(\P_X, \F')]} \leq \frac{1}{\sqrt{n(c+2)}}.
\end{align}

We now bound the second term in (\ref{eq-target}). By definition of $\F'$ it follows that
$$
\F' =  \frac{1}{n} \sum_{i=1}^n \F'_{w_i} = \frac{1}{n} \sum_{i=1}^n \left[\frac{c}{c+1} \F_{w_i} + \frac{1}{c+1} \delta_{w_i} \right] = \frac{c}{c+1} \F + \frac{1}{c+1} \P_W^n 
$$
where $\F := \frac{1}{n} \sum_{i=1}^n \F_{w_i}$ and $\P_W := \frac{1}{n} \sum_{i=1} \delta_{w_i}$ is the empirical measure of the observed, noisy covariates. By linearity of expectations it follows that $\E_{\F'}[\cdot] = \frac{c}{c+1} \E_{\F}[\cdot] + \frac{1}{c+1} \E_{\P_W^n} [\cdot]$ and the kernel mean embedding of $\F'$ is 
\begin{talign*}
    \mu_{\F'} &:= \mathbb{E}_{z \sim \F'}[k(z, \cdot)] = \frac{c}{c+1} \E_{z \sim \F}[k(z,\cdot)] + \frac{1}{c+1} \mathbb{E}_{z \sim \P_W^n}[k(z,\cdot)] \\
    &= \frac{c}{c+1} \mu_{\F} + \frac{1}{c+1} \mu_{\P_W^n}.
\end{talign*}
Using this we can first bound the objective as:
\begin{talign}
    \MMD_{k_X}(\P_X^{0,n}, \F') &= \left\| \mu_{\P_X^{0,n}} - \frac{c}{c+1} \mu_{\F} - \frac{1}{c+1} \mu_{\P_W^n} \right\|_{\HkX} \nonumber \\
    &\leq \frac{c}{c+1} \| \mu_{\P_X^{0,n}} - \mu_{\F} \|_{\HkX} + \frac{1}{c+1} \| \mu_{\P_X^{0,n}} - \mu_{\P_W^n} \|_{\HkX} \nonumber \\
    &= \frac{c}{c+1} \MMD_{k_X}(\P_X^{0,n}, \F) + \frac{1}{c+1} \MMD_{k_X}(\P_X^{0,n}, \P_W^{n}) \label{eq-three-terms}
\end{talign}
where we used the triangle inequality for both inequalities. 
Let $\phi_X$ be the feature map associated with the kernel function $k_X$. Then,
\begin{align*}
    \MMD_{k_X}( \P_X^{0,n}, \F) &= \| \mathbb{E}_{X \sim \P_X^{0,n}} [\phi_X(X)] - \mathbb{E}_{X \sim \F} [\phi_X(X)] \|_{\mathcal{H}_{k_X}} \\
    &= \| \mathbb{E}_{X \sim \frac{1}{n} \sum_{i=1}^n \P^0_{X \given w_{i}}} [\phi_X(X)] - \mathbb{E}_{X \sim \frac{1}{n} \sum_{i=1}^n \F_{w_i}} [\phi_X(X)] \|_{\mathcal{H}_{k_X}} \\
    &= \left\| \frac{1}{n} \sum_{i=1}^n \mathbb{E}_{X \sim \P^0_{X \given w_{i}}} [\phi_X(X)] - \frac{1}{n} \sum_{i=1}^n \mathbb{E}_{X \sim \F_{w_i}} [\phi_X(X)] \right\|_{\mathcal{H}_{k_X}} \\
     &= \left\| \frac{1}{n} \sum_{i=1}^n \mathbb{E}_{\nu \sim \F_N^0} [\phi_X(w_i + \nu)] - \frac{1}{n} \sum_{i=1}^n \mathbb{E}_{\nu \sim \F_N} [\phi_X(w_i + \nu)] \right\|_{\mathcal{H}_{k_X}} \\
     &\leq \frac{1}{n} \sum_{i=1}^n \| \mathbb{E}_{\nu \sim \F_N^0} [\phi_X(w_i + \nu)] - \mathbb{E}_{\nu \sim \F_N} [\phi_X(w_i + \nu)] \|_{\mathcal{H}_{k_X}} \\
     &= \frac{1}{n} \sum_{i=1}^n \MMD_{k_X^i} (\F_N^0, \F_N) 
\end{align*}
where for each $i \in \{1, \dots, n\}$, $k_{X}^i(\nu, \nu') = k_X(\nu + w_i, \nu' + w_i)$ and $w_1, \dots, w_n$ denote the observations. By assumption \ref{asm-kx}, $k_X$ is translation invariant so for any $i \in \{1, \dots, n\}$,
$$
k_X^i(\nu, \nu') = k_X(\nu+w_i, \nu'+w_i) = \psi(\nu + w_i - \nu' - w_i) = \psi(\nu - \nu') = k_X(\nu,\nu')
$$
hence
\begin{talign*}
    &\MMD^2_{k_X^i}(\F_N^0, \F_N) \\
    & =  \mathbb{E}_{\nu \sim \F_N^0, \nu' \sim \F_N^0} [k_X^i(\nu, \nu')] -2 \mathbb{E}_{\nu \sim \F_N^0, \nu' \sim \F_N} [k_X^i(\nu, \nu')] + \mathbb{E}_{\nu \sim \F_N, \nu' \sim \F_N} [k_X^i(\nu, \nu')] \\
    & = \mathbb{E}_{\nu \sim \F_N^0, \nu' \sim \F_N^0} [k_X(\nu, \nu')] -2 \mathbb{E}_{\nu \sim \F_N^0, \nu' \sim \F_N} [k_X(\nu, \nu')] + \mathbb{E}_{\nu \sim \F_N, \nu' \sim \F_N} [k_X(\nu, \nu')] \\
    & = \MMD^2_{k_X}(\F_N^0, \F_N)
\end{talign*}
and 
\begin{talign*}
    \MMD_{k_X}(\P_X^0, \F) \leq \frac{1}{n} \sum_{i=1}^n \MMD_{k_X^i}(\F_N^0, \F_N) = \MMD_{k_X}(\F_N^0, \F_N).
\end{talign*}
Similarly,
\begin{talign*}
     \MMD_{k_X}(\P_X^{0,n}, \P_W^n) &= \left\| \frac{1}{n} \sum_{i=1}^n \left[ \mathbb{E}_{x \sim \P^0_{X \given w_i}} [\phi_X(x)] - \phi_X(w_i)  \right] \right\|_{\mathcal{H}_{k_X}} \\
    &= \left\| \frac{1}{n} \sum_{i=1}^n \left[ \mathbb{E}_{\nu \sim \F_N^0} [\phi_X(w_i + \nu)] - \phi_X(w_i)  \right] \right\|_{\mathcal{H}_{k_X}} \\
    &= \left\| \frac{1}{n} \sum_{i=1}^n \left[ \mathbb{E}_{\nu \sim \F_N^0} [k^i_X(\nu, \cdot)] - k^i_X(0,\cdot)  \right] \right\|_{\mathcal{H}_{k_X}} \\
    &\leq \frac{1}{n} \sum_{i=1}^n \left\| \mathbb{E}_{\nu \sim \F_N^0} [k_X^i(\nu, \cdot)] - k_X^i(0, \cdot) \right\|_{\mathcal{H}_{k_X}} \\
    &= \frac{1}{n} \sum_{i=1}^n \MMD_{k^i_X}(\F_N^{0}, \delta_{0}) \\
    &= \MMD_{k_X}(\F_N^{0}, \delta_{0}). 
\end{talign*}
Substituting in (\ref{eq-three-terms}), we obtain:
\begin{talign} \label{eq-pxon-fprime}
    \mathbb{E}_{x_{1:n} \sim \P_X^{0,n}}[\MMD_{k_X}(\P_X^{0,n}, \F')] \leq  \frac{c}{c+1} \MMD_{k_X}(\F_N^0, \F_N) + \frac{1}{c+1} \MMD_{k_X}(\F_N^{0}, \delta_{0}). 
 \end{talign}
 Finally, using (\ref{eq-px-fprime}) and (\ref{eq-pxon-fprime}) in (\ref{eq-target}) we obtain the required result:
 \begin{talign*}
     &\mathbb{E}_{x_{1:n} \sim \P_X^{0,n}} [ \E_{\P \sim \mu} [\MMD_{k_X}(\P_X^{0,n}, \P_X)]] \leq \\
     &\quad \quad \quad \frac{1}{\sqrt{n(c+2)}} + \frac{c}{c+1} \MMD_{k_X}(\F_N^0, \F_N) + \frac{1}{c+1} \MMD_{k_X}(\F_N^{0}, \delta_{0})
 \end{talign*}
\end{proof}
We now restate a Lemma provided by \cite{alquier2024universal}:
\begin{lemma}[\cite{alquier2024universal}, Lemma 3] \label{lem-alq2}
   For $\P_X', \P_X'' \in \mathcal{P}_\X$ let $\P^\prime = \P'_X \times \P_{Y \given \cdot}$ and $\P'' = \P''_X \times \P_{Y \given \cdot}$. Assume that $(\P_{Y \given x})_{x \in \X}$ admits a bounded linear conditional mean embedding operator $\mathcal{C}_{Y \given X}$. Then 
   \begin{talign*}
     \MMD_{k}(\P', \P'') \leq \| \mathcal{C}_{Y \given X} \|_{o} \MMD_{k^2_X}(\P'_X, \P''_X).  
   \end{talign*}
    where $k^2_X = k_X \otimes k_X$.
\end{lemma}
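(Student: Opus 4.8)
The plan is to realise both joint embeddings as the image of the corresponding marginal embeddings under a single bounded operator built from $\mathcal{C}_{Y \given X}$, and then factor that operator out of the difference. First I would exploit the product-kernel structure $k = k_X \otimes k_Y$ (Assumption \ref{asm:kxy}), under which $\mathcal{H}_k$ is isometrically isomorphic to the tensor product $\HkX \otimes \HkY$ with feature map $\phi(\bm{x},y) = \phi_X(\bm{x}) \otimes \phi_Y(y)$, where $\phi_X,\phi_Y$ are the canonical feature maps of $k_X,k_Y$. Writing $\P' = \P'_X \times \P_{Y \given \cdot}$ and applying the law of iterated expectations over $X$ and then $Y \given X$, the mean embedding factors as
\begin{align*}
\mu_{\P'} = \E_{X \sim \P'_X}\left[ \phi_X(X) \otimes \E_{Y \given X}[\phi_Y(Y)] \right] = \E_{X \sim \P'_X}\left[ \phi_X(X) \otimes \mu_{\P_{Y \given X}} \right],
\end{align*}
and identically for $\P''$ with $\P''_X$ in place of $\P'_X$, the conditional $\P_{Y \given \cdot}$ being common to both.

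Next I would invoke the defining property of the conditional mean operator, $\mu_{\P_{Y \given X = \bm{x}}} = \mathcal{C}_{Y \given X} k_X(\bm{x}, \cdot) = \mathcal{C}_{Y \given X}\phi_X(\bm{x})$, so that the integrand becomes $\phi_X(X) \otimes \mathcal{C}_{Y \given X}\phi_X(X) = (I \otimes \mathcal{C}_{Y \given X})\bigl(\phi_X(X) \otimes \phi_X(X)\bigr)$, where $I$ is the identity on $\HkX$ and $I \otimes \mathcal{C}_{Y \given X}: \HkX \otimes \HkX \to \HkX \otimes \HkY$. Because $\mathcal{C}_{Y \given X}$ is bounded and $|k_X|\le 1$ guarantees Bochner integrability, I can pull this bounded linear operator out of the expectation to obtain $\mu_{\P'} = (I \otimes \mathcal{C}_{Y \given X})\, \mu^{(2)}_{\P'_X}$, where $\mu^{(2)}_{\P'_X} := \E_{X \sim \P'_X}[\phi_X(X) \otimes \phi_X(X)]$ is precisely the mean embedding of $\P'_X$ for the kernel $k^2_X = k_X \otimes k_X$, whose feature map is the diagonal $\bm{x} \mapsto \phi_X(\bm{x}) \otimes \phi_X(\bm{x})$; identifying this diagonal embedding is the crucial bookkeeping step, since it is what turns the relevant marginal distance into the $k^2_X$-MMD that appears in the bound. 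Subtracting the two embeddings and using linearity,
\begin{align*}
\mu_{\P'} - \mu_{\P''} = (I \otimes \mathcal{C}_{Y \given X})\left( \mu^{(2)}_{\P'_X} - \mu^{(2)}_{\P''_X} \right).
\end{align*}

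Finally I would take norms and apply the operator-norm inequality together with the tensor-product identity $\|I \otimes \mathcal{C}_{Y \given X}\|_o = \|I\|_o\,\|\mathcal{C}_{Y \given X}\|_o = \|\mathcal{C}_{Y \given X}\|_o$, which yields $\MMD_k(\P',\P'') = \|\mu_{\P'} - \mu_{\P''}\|_{\mathcal{H}_k} \le \|\mathcal{C}_{Y \given X}\|_o\, \|\mu^{(2)}_{\P'_X} - \mu^{(2)}_{\P''_X}\|_{\HkX \otimes \HkX} = \|\mathcal{C}_{Y \given X}\|_o\, \MMD_{k^2_X}(\P'_X, \P''_X)$, as claimed. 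The main obstacle is not any hard estimate but getting the functional-analytic bookkeeping exactly right: correctly identifying $\mathcal{H}_k \cong \HkX \otimes \HkY$ and the diagonal feature map of $k^2_X$, and rigorously justifying that $I \otimes \mathcal{C}_{Y \given X}$ is a well-defined bounded operator on the Hilbert tensor product with norm $\|\mathcal{C}_{Y \given X}\|_o$, so that interchanging it with the Bochner integral is legitimate.
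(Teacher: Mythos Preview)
The paper does not actually prove this lemma; it merely restates it with a citation to \cite{alquier2020universal} and uses it as a black box in the proofs of Lemmas~\ref{lem-joint-emp} and~\ref{lem-joint-model}. Your proposal is correct and is essentially the standard argument behind the cited result: realise $\mathcal{H}_k$ as $\HkX \otimes \HkY$, write each joint mean embedding as $(I \otimes \mathcal{C}_{Y\given X})$ applied to the diagonal embedding $\E[\phi_X(X)\otimes\phi_X(X)]$, subtract, and bound by the operator norm. The two bookkeeping points you flag --- that the RKHS of $k_X^2$ embeds isometrically into $\HkX\otimes\HkX$ via the diagonal feature map, and that $\|I\otimes\mathcal{C}_{Y\given X}\|_o = \|\mathcal{C}_{Y\given X}\|_o$ on the Hilbert tensor product --- are exactly the places where care is needed, and both are standard facts. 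There is nothing further to compare against in this paper.
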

%
\begin{lemma} (\textcolor{blue}{\texttt{Berkson}}) \label{lem-joint-emp}
Let $\mathbb{P}^{0,n} := \P_X^n \times \P_Y^n = \frac{1}{n} \sum_{i=1}^n \delta_{(x_i, y_i)}$ and $\mathbb{P}_{XY} = \P_X \times \P_Y^{n} = \frac{1}{n} \sum_{i=1}^n \sum_{j=1}^{\infty} \xi_j^i \delta_{(z_j,y_i)}$ where $z^i_{1:\infty} \overset{iid}{\sim} \F'_{w_i}, \xi^i_{1:\infty} \sim \GEM(c+1)$. Then under Assumptions \ref{asm:k}-\ref{asm-kx}:
\begin{talign*}
    &\mathbb{E}_{(x,y)_{1:n} \sim \P^0} [ \E_{\P \sim \mu} [\MMD_k(\mathbb{P}^{0,n}, \mathbb{P}_{XY})]] \\
    &\leq \frac{1}{\sqrt{n(c+2)}} + \frac{c}{c+1} \sqrt{{\text{Var}}_{\mathcal{H}_{k_X}}[\F_N^0] + \MMD^2_{k_X}(\F_N^0, \F_N)} + \frac{\sqrt{2}
    }{c+1} \sqrt{\MMD_{k_X}(\F_N^0, \delta_0)}. 
\end{talign*}
\end{lemma}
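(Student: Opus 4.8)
The plan is to carry the argument of Lemma~\ref{lemma-marg-X} over to the joint space $\mathcal Z$, the only genuinely new feature being the $Y$-coordinate, which — since both $\mathbb P^{0,n}$ and $\mathbb P_{X,Y}$ have the form $\tfrac1n\sum_{i=1}^n Q^i_X\otimes\delta_{y_i}$ for suitable $X$-marginals $Q^i_X$ — enters only through inner products $\langle\phi_Y(y_i),\phi_Y(y_{i'})\rangle_{\mathcal H_{k_Y}}=k_Y(y_i,y_{i'})$, which are bounded by $1$ in absolute value by Assumption~\ref{asm:kxy}. As in Lemma~\ref{lemma-marg-X} I work with the exact stick-breaking representation $\P_X^i=\sum_{j\ge1}\xi^i_j\delta_{\bm z^i_j}$ with $\xi^i_{1:\infty}\sim\GEM(c+1)$, $\bm z^i_{1:\infty}\simiid\F'_i$, and introduce the intermediate measure $\widetilde{\F}':=\tfrac1n\sum_{i=1}^n\F'_i\otimes\delta_{y_i}$. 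The triangle inequality in $\mathcal H_k=\mathcal H_{k_X}\otimes\mathcal H_{k_Y}$ gives
\[
\MMD_k(\mathbb P^{0,n},\mathbb P_{X,Y})\;\le\;\MMD_k(\mathbb P_{X,Y},\widetilde{\F}')\;+\;\MMD_k(\mathbb P^{0,n},\widetilde{\F}').
\]

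For the first term, expanding the squared norm of $\mu_{\mathbb P_{X,Y}}-\mu_{\widetilde{\F}'}=\tfrac1n\sum_i(\mu_{\P_X^i}-\mu_{\F'_i})\otimes\phi_Y(y_i)$ and taking $\E_{\P\sim\lambda}$, the cross terms $i\ne i'$ vanish: the $\P_X^i$ are independent across $i$ and $\E[\mu_{\P_X^i}]=\mu_{\F'_i}$ because the mean of a Dirichlet process is its base measure. The diagonal terms reduce, exactly as in the proof of Lemma~\ref{lemma-marg-X} (second moments of the GEM distribution, $|k_X|\le1$, $|k_Y|\le1$), to at most $\tfrac1{n(c+2)}$, and Jensen's inequality yields $\E_{\P\sim\lambda}[\MMD_k(\mathbb P_{X,Y},\widetilde{\F}')]\le\tfrac1{\sqrt{c+2}}$. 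For the second term, write $\F'_i=\tfrac{c}{c+1}\F_i+\tfrac1{c+1}\delta_{\bm w_i}$, hence $\widetilde{\F}'=\tfrac{c}{c+1}\widetilde{\F}+\tfrac1{c+1}\P^n_{W,Y}$ with $\widetilde{\F}=\tfrac1n\sum_i\F_i\otimes\delta_{y_i}$ and $\P^n_{W,Y}=\tfrac1n\sum_i\delta_{(\bm w_i,y_i)}$, and split (inserting $\bar{\P}^0:=\tfrac1n\sum_i\P_X^{0,i}\otimes\delta_{y_i}$)
\[
\MMD_k(\mathbb P^{0,n},\widetilde{\F}')\le\tfrac{c}{c+1}\big[\MMD_k(\mathbb P^{0,n},\bar{\P}^0)+\MMD_k(\bar{\P}^0,\widetilde{\F})\big]+\tfrac1{c+1}\MMD_k(\mathbb P^{0,n},\P^n_{W,Y}).
\]
The first summand is $\le\tfrac1{\sqrt n}$ in $\E_{\bm x_{1:n}}$-expectation by Lemma~\ref{lemma:preliminary:MMD} (the joint pairs $(\bm x_i,y_i)$ are independent across $i$, with $\bm x_i=\bm w_i+\bm\nu^0_i$, $\bm\nu^0_i\sim\F^0_\nu$, and $y_i$ fixed); the remaining two are handled, component by component, by the translation-invariance reduction of Lemma~\ref{lemma-marg-X} (Assumption~\ref{asm-kx}), the $Y$-factors contributing only $|k_Y|\le1$, giving $\MMD_k(\bar{\P}^0,\widetilde{\F})\le\MMD_{k_X^2}(\F^0_\nu,\F_\nu)$ and $\MMD_k(\mathbb P^{0,n},\P^n_{W,Y})\le\MMD_{k_X^2}(\F^{0,n}_\nu,\delta_{\bm0})$, since the $i$-th atoms differ only by the shift $\bm x_i-\bm w_i=\bm\nu^0_i$. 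Collecting these estimates together with the first-term bound gives exactly the claimed inequality after taking $\E_{\bm x_{1:n}}\E_{\P\sim\lambda}$.

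The main obstacle is the bookkeeping in the second term: one must keep the $Y$-coordinate attached while the $X$-marginals are being compared, so that it is legitimate to (i) exploit independence and the Dirichlet-process mean identity to kill the cross-component terms in $\MMD_k(\mathbb P_{X,Y},\widetilde{\F}')$, and (ii) peel off the base-measure and $\delta_{\bm w_i}$ parts of $\F'_i$ and invoke translation invariance per component, with the tensor structure ensuring the $Y$-factors never cost more than a factor $1$. The probabilistic content is identical to that of Lemma~\ref{lemma-marg-X}; the rest is tensor-product algebra in $\mathcal H_{k_X}\otimes\mathcal H_{k_Y}$.
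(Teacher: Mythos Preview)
Your approach differs genuinely from the paper's. The paper reduces to the $X$-marginal MMD in a single step by invoking Lemma~\ref{lem-alq2} (the conditional mean embedding bound) with the trivial conditional $\P_{Y\mid X}=\P_Y^{0,n}$, obtaining $\MMD_k(\P^{0,n},\P_{X,Y})\le\|\mathcal C_Y\|_o\,\MMD_{k_X^2}(\P_X^{0,n},\P_X)$ with $\|\mathcal C_Y\|_o\le1$, and then applies Lemma~\ref{lemma-marg-X} with the kernel $k_X^2$ in place of $k_X$. You instead replay the whole proof of Lemma~\ref{lemma-marg-X} inside the tensor-product RKHS $\mathcal H_{k_X}\otimes\mathcal H_{k_Y}$, carrying the factor $\phi_Y(y_i)$ throughout. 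That is a legitimate alternative, and your treatment of the stick-breaking term and of the $1/\sqrt n$ term via Lemma~\ref{lemma:preliminary:MMD} is fine.

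However, your argument does \emph{not} produce the kernel $k_X^2$ that the lemma states. When you bound
$\bigl\|\tfrac1n\sum_i(\mu_{\P_X^{0,i}}-\mu_{\F_i})\otimes\phi_Y(y_i)\bigr\|_k$
by the triangle inequality and use $\|a\otimes b\|_k=\|a\|_{k_X}\|b\|_{k_Y}$ together with $\|\phi_Y(y_i)\|_{k_Y}\le1$, what falls out is $\tfrac1n\sum_i\MMD_{k_X}(\P_X^{0,i},\F_i)=\MMD_{k_X}(\F_\nu^0,\F_\nu)$, and likewise $\MMD_{k_X}(\F_\nu^{0,n},\delta_{\bm0})$ for the last piece. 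The squared kernel $k_X^2=k_X\otimes k_X$ enters only through Lemma~\ref{lem-alq2}, which you never invoke; nothing in your tensor-product bookkeeping turns $k_X$ into $k_X^2$. Since $\MMD_{k_X}$ and $\MMD_{k_X^2}$ are in general incomparable, you have proved a bound of the same shape with $k_X$ replacing $k_X^2$ in the last two terms, but not the inequality as written. To recover the stated lemma you would need to reduce to $\MMD_{k_X^2}(\P_X^{0,n},\P_X)$ first (as the paper does) and then run Lemma~\ref{lemma-marg-X} for the kernel $k_X^2$.
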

\begin{proof}
By the triangle inequality, we have:
\begin{talign} \label{eq:idk}
    \MMD_k(\P^{0,n}, \P_{XY}) &\leq \MMD_k\left(\P^{0,n}, \frac{1}{n} \sum_{i=1}^n \mathbb{F}^\prime_{w_i} \times \delta_{y_i} \right)  + \MMD_k\left(\P_{X,Y}, \frac{1}{n} \sum_{i=1}^n \mathbb{F}^\prime_{w_i} \times \delta_{y_i}\right).
\end{talign}
Let $\phi_X, \phi_Y$ be the feature maps associated with $k_X$ and $k_Y$ respectively. For the first term, we have:
\begin{talign*}
    &\MMD_k\left(\P^{0,n}, \frac{1}{n} \sum_{i=1}^n \mathbb{F}^\prime_{w_i} \times \delta_{y_i} \right)\\
    &= 
    \left\| \frac{1}{n} \sum_{i} \left[\phi_X(x_i) \otimes \phi_Y(y_i) - \frac{c}{c+1} \mu_{\mathbb{F}_{w_i}} \otimes \phi_Y(y_i) - \frac{1}{c+1} \phi_X(w_i) \otimes \phi_Y(y_i)\right] \right\|_{\mathcal{H}_k} \\
    &= \left \| \frac{1}{n} \sum_i \left[ 
    \frac{c}{c+1} (\phi_X(x_i) - \mu_{\mathbb{F}_{w_i}}) \otimes \phi_Y(y_i) + \frac{1}{c+1} (\phi_X(x_i) - \phi_X(w_i)) \otimes \phi_Y(y_i) 
    \right] \right\|_{\mathcal{H}_k} \\
    &\leq \frac{1}{n} \sum_i \left[ \frac{c}{c+1}\left\|(\phi_X(x_i) - \mu_{\mathbb{F}_{w_i}}) \otimes \phi_Y(y_i)  \right\|_{\mathcal{H}_k} \right.\\
    &\left. \quad + \frac{1}{c+1} \left\| (\phi_X(x_i) - \phi_X(w_i)) \otimes \phi_Y(y_i)\right\|_{\mathcal{H}_k} \right] \\
    &= \frac{1}{n} \sum_i \left[ \frac{c}{c+1}\left\|(\phi_X(x_i) - \mu_{\mathbb{F}_{w_i}}) \right\|_{\mathcal{H}_{k_X}} \left\| \phi_Y(y_i)  \right\|_{\mathcal{H}_{k_Y}} \right] \\
    &\quad + \frac{1}{n} \sum_{i=1}^n 
    \left[\frac{1}{c+1} \left\| (\phi_X(x_i) - \phi_X(w_i)) \right\|_{\mathcal{H}_{k_X}} \left\| \phi_Y(y_i)\right\|_{\mathcal{H}_{k_Y}} \right].
\end{talign*}
    Now by taking expectations and using the Cauchy-Schwarz inequality, we get: 
    \begin{talign}
        &\mathbb{E}_{w_i,x_i,y_i}\left[ \MMD_k\left(\P^{0,n}, \frac{1}{n} \sum_{i=1}^n \mathbb{F}^\prime_{w_i} \times \delta_{y_i} \right)  \right] \nonumber\\
        &\leq \frac{c}{n(c+1)} \sum_{i=1}^n \mathbb{E}_{w_i,x_i,y_i} [\| \phi_X(x_i) - \mu_{\F_{w_i}} \|_{\mathcal{H}_{k_X}} \| \phi_Y(y_i) \|_{\mathcal{H}_{k_Y}}] \nonumber\\
        &\quad + \frac{1}{n(c+1)} \sum_{i=1}^n \mathbb{E}_{w_i,x_i,y_i}  [\left\| (\phi_X(x_i) - \phi_X(w_i)) \right\|_{\mathcal{H}_{k_X}} \left\| \phi_Y(y_i)\right\|_{\mathcal{H}_{k_Y}} ] \nonumber\\
        &\leq \frac{c}{n(c+1)} \sum_{i} \sqrt{\mathbb{E}_{w_i,x_i}[\|\phi_X(x_i) - \mu_{F_{w_i}} \|_{\mathcal{H}_{k_X}}^2]} \sqrt{\mathbb{E}_{w_i,x_i,y_i}[\|\phi_Y(y_i) \|_{\mathcal{H}_{k_Y}}^2]} \nonumber\\
        &\quad + \frac{1}{n(c+1)} \sum_{i} \sqrt{\mathbb{E}_{w_i,x_i}[\|\phi_X(x_i) - \phi_X(w_i) \|_{\mathcal{H}_{k_X}}^2]} \sqrt{\mathbb{E}_{w_i,x_i,y_i}[\|\phi_Y(y_i) \|_{\mathcal{H}_{k_Y}}^2]} \nonumber\\
        &\leq \frac{c}{n(c+1)} \sum_i \sqrt{\mathbb{E}_{w_i,x_i}[\|\phi_X(x_i) - \mu_{F_{w_i}} \|_{\mathcal{H}_{k_X}}^2]}\nonumber\\
        &\quad + \frac{1}{n(c+1)}  \sum_{i} \sqrt{\mathbb{E}_{w_i,x_i}[\|\phi_X(x_i) - \phi_X(w_i) \|_{\mathcal{H}_{k_X}}^2]}. \label{eq:idk2}
    \end{talign}
For the second term, by Assumption \ref{asm-kx} we have that $k_X$ is translation invariant so $\exists \psi_X$ such that $k_X(x,x^\prime) = \psi_X(x-x^\prime)$ hence:
\begin{talign*}
    &\mathbb{E}_{w_i,x_i}[\|\phi_X(x_i) - \phi_X(w_i) \|_{\mathcal{H}_{k_X}}^2]\\
    &= \E_{\nu \sim \F^0_N}[\| \phi_X(w_i + \nu) - \phi_X(w_i)\|^2_{\mathcal{H}_{k_X}}] \\
    &= \E_{\nu \sim \F^0_N}[k_X(w_i + \nu, w_i + \nu) + k_X(w_i, w_i) - 2 k_X(w_i+\nu, w_i)] \\
    &= 2 \psi_X(0) - 2 \E_{\nu \sim \F_N^0}[\psi_X(\nu)]\\
    &= 2 - 2\E_{\nu \sim \F_N^0}[\psi_X(\nu)]
\end{talign*}
where in the last equality we used the fact that $\psi_X(0) = \sup_{x,x^\prime \in \X}k_X(x,x^\prime) = 1$ since $k_X$ is positive definite and hence $\psi_X$ needs to have its supremum at $0$. We hence also have that:
\begin{talign*}
    \MMD_{k_X}^2(\F_N^0, \delta_0) &= \| \E_{\nu \sim \F_N^0}[\phi_X(\nu)] - \phi_X(0) \|^2_{\mathcal{H}_{k_X}} \\
    &= \| \E_{\nu \sim \F_N^0}[\phi_X(\nu)] \|^2_{\mathcal{H}_{k_X}} - 2 \E_{\nu \sim \F_N^0}[\psi_X(\nu)] + \psi_X(0)\\
    &= (1 - \E_{\F_N^0}[\phi_X(N)])^2 \\
    &= \frac{1}{4} (2 - 2\E_{\nu \sim \F_N^0}[\psi_X(\nu)])^2 \\
    &= \frac{1}{4} \left(\mathbb{E}_{w_i,x_i}[\|\phi_X(x_i) - \phi_X(w_i) \|_{\mathcal{H}_{k_X}}^2] \right)^2
\end{talign*}
and hence 
\begin{align}
    \mathbb{E}_{w_i,x_i}[\|\phi_X(x_i) - \phi_X(w_i) \|_{\mathcal{H}_{k_X}}^2] = 2 \MMD_{k_X}(\F_N^0, \delta_0).
\end{align}
For the first term, note that 
\begin{talign*}
    &\E_{w_i, x_i}[\| \phi_X(x_i) - \mathbb{E}_{x \sim \F_{w_i}}[\phi_X(x)\|^2] \\
    &= \E_{x_i \sim \P^0_{X | w_i}}[k(x_i, x_i)] - 2 \E_{x_i \sim \P^0_{X|w_i}}[\left<\phi_X(x_i), \E_{x \sim \F_{w_i}} \phi_X(x) \right>] \\
    &\quad \quad + \left< \E_{\F_{w_i}}[\phi_X(x)], \E_{\F_{w_i}}[\phi_X(x)] \right>_{\mathcal{H}_{k_X}}
    \\&=\E_{\nu \sim \F_N^0}[k_i(\nu, \nu)] - 2 \E_{\nu \sim \F_N^0}[\left<\phi_i(\nu), \E_{\nu^\prime \sim \F_N} \phi_i(\nu^\prime) \right>_{\mathcal{H}_{k_X}}] \\
    &\quad \quad + \left< \E_{\F_N}[\phi_i(\nu)], \E_{\F_N}[\phi_i(\nu)] \right> \\
    &= \E_{\nu \sim \F_N^0}[k_i(\nu, \nu)] - \left<\mu_{\F_N^0}, \mu_{\F_N^0} \right> + \MMD^2_{k_i}(\F_N^0, \F_N) \\
    &= {\text{Var}}_{\mathcal{H}_{k_i}}[\F_N^0] + \MMD^2_{k_i}(\F_N^0, \F_N)\\
    &= {\text{Var}}_{\mathcal{H}_{k_X}}[\F_N^0] + \MMD^2_{k_X}(\F_N^0, \F_N)
\end{talign*}
where $\phi_i$ denotes the feature map of $k_i$ which is defined as $k_i(\nu,\nu^\prime) := k_X(w_i + \nu, w_i + \nu^\prime)$ and the last equality follows from the fact that $k_X$ is translation invariant. We further used the notation: 
${\text{Var}}_{\mathcal{H}_{k_X}}[\F_N^0] := \E_{\nu \sim \F_N^0}[k_X(\nu, \nu)] - \left<\mu_{\F_N^0}, \mu_{\F_N^0} \right>_{\mathcal{H}_{k_X}}$ which is the effective RKHS variance of $\F_N^0$. When there is no ME, this term will go to zero, whereas otherwise it will quantify the error induced by higher ME distribution variances, making the problem harder. Substituting into (\ref{eq:idk2}) we obtain:
\begin{talign*}
    &\mathbb{E}_{(x,y)_{1:n} \sim \P^0} \left[ \E_{\P \sim \mu} \left[ \MMD_k\left(\P^{0,n}, \frac{1}{n} \sum_{i=1}^n \mathbb{F}^\prime_{w_i} \times \delta_{y_i} \right)  \right] \right] \\
    &\leq \frac{c}{c+1} \sqrt{{\text{Var}}_{\mathcal{H}_{k_X}}[\F_N^0] + \MMD^2_{k_X}(\F_N^0, \F_N)} + \frac{\sqrt{2}
    }{c+1} \sqrt{\MMD_{k_X}(\F_N^0, \delta_0)}.
\end{talign*}
We now turn to the second term of (\ref{eq:idk}). Expanding the expected MMD we obtain:
\begin{talign*}
    &\mathbb{E}_{(x,y)_{1:n} \sim \P^0} \left[ \E_{\P \sim \mu} [\left[\MMD_k^2(\P_{XY}, \frac{1}{n} \sum_{i=1}^n \mathbb{F}^\prime_{w_i} \times \delta_{y_i}) \right] \right]\\
    &= \mathbb{E}_{(x,y)_{1:n} \sim \P^0} \left[ \E_{\P \sim \mu} \left[ \left\| \frac{1}{n} \sum_{i=1}^n \left(\mu_{\F^\prime_{w_i}} \otimes \phi_Y(y_i) - \mu_{\P^i} \otimes \phi_Y(y_i) \right) \right\|^2 \right] \right] \\
    &= \frac{1}{n^2} \sum_{i=1}^n \mathbb{E}_{w_i, x_i,y_i} \mathbb{E}_{\P^i} \left[ \| \mu_{\F^\prime_{w_i}} \otimes \phi_Y(y_i) - \mu_{\P^i} \otimes \phi_Y(y_i) \|^2\right] \\
    &\quad + \frac{2}{n^2} \sum_{i \neq j} \mathbb{E}_{w_i, x_i,y_i, w_j, x_j, y_j} \mathbb{E}_{\P^i} \left[\left< \mu_{F^\prime_{w_i}} \otimes \phi_Y(y_i) - \mu_{\P^i} \otimes \phi_Y(y_i) , \right. \right.\\
    &\left. \left. \quad \quad \mu_{\F^\prime_{w_j}} \otimes \phi_Y(y_j) - \mu_{\P^j} \otimes \phi_Y(y_j)\right> \right]
\end{talign*}
The cross terms vanish since $\mathbb{E}_{\P \sim \mu}[\P^i] = \F^\prime_{w_i}$ for each $i \in \{1, \dots, n\}$ and conditionally on $(w_i, w_j, x_i, x_j, y_i, y_j)$ $\mu_{\P^i} \otimes \phi_Y(y_i)$ and $\mu_{\P^j} \otimes \phi_Y(y_j)$ are independent and hence: 
\begin{talign*}
    &\mathbb{E}_{w_i, x_i,y_i, w_j, x_j, y_j} \mathbb{E}_{\P} \left[\left< \mu_{\F^\prime_{w_i}} \otimes \phi_Y(y_i) - \mu_{\P_i} \otimes \phi_Y(y_i), \right. \right.  \\
    &\left. \left.  \quad \mu_{\F^\prime_{w_j}} \otimes \phi_Y(y_j) - \mu_{\P_j} \otimes \phi_Y(y_j)\right> \right] \\
    &\quad = \mathbb{E}_{w_i, x_i,y_i, w_j, x_j, y_j} \left[\left<\mu_{\F^\prime_{w_i}}, \mu_{\F^\prime_{w_j}}\right>- 2 \mathbb{E}_{\P_i} \mathbb{E}_{\P_j} \left[\left<\mu_{\F_{w_i}}^\prime \otimes 
    \phi_Y(y_i), \mu_{\P_j} \otimes \phi_Y(y_j)\right>\right] \right.\\
    &\left. \quad + \mathbb{E}_{\P_i} \mathbb{E}_{\P_j} [\left<\mu_{\P_i} \otimes \phi_Y(y_i), \mu_{\P_j} \otimes \phi_Y(y_j) \right>]\right] \\
    &= \mathbb{E}_{w_i,x_i,y_i, w_j, x_j, y_j} \left[ 
        \left<\mu_{\F^\prime_{w_i}}, \mu_{\F^\prime_{w_j}}\right>- 2 \left<\mu_{\F^\prime_{w_i}} \otimes \phi_Y(y_i), \mu_{\F^\prime_{w_j}} \otimes \phi_Y(y_j)\right> \right.\\
        &\left. \quad + \left<\mu_{\F^\prime_{w_i}} \otimes \phi_Y(y_i), \mu_{\F^\prime_{w_j}} \otimes \phi_Y(y_j)\right> 
    \right] \\
    &= 0. 
\end{talign*}
For the diagonal terms, we have:
\begin{talign*}
    &\mathbb{E}_{w_i,x_i,y_i} \mathbb{E}_{\P_i} \left[ \| \mu_{\F^\prime_{w_i}} \otimes \phi_Y(y_i) - \mu_{\P_i} \otimes \phi_Y(y_i) \|^2\right] \\
    &=  \mathbb{E}_{w_i,x_i,y_i} \mathbb{E}_{\P_i} \left[\left \| \sum_{j=1}^\infty  
        \xi_j \left[ \phi_X(z_j) \otimes \phi_Y(y_i) - \mu_{\F^\prime_{w_i}} \otimes \phi_Y(y_i)
    \right] \right\|^2 \right] \\
    &= \mathbb{E}_{w_i,x_i,y_i} \mathbb{E}_{\P_i} \left[
        \sum_{j=1}^\infty \xi_j^2 \| (\phi_X(z_j) - \mu_{\F^\prime_{w_i}}) \otimes \phi_Y(y_i) \|^2 \right.\\
        &\left. \quad + 2 \sum_{j \neq t} \xi_j \xi_t \left<(\phi_X(z_j) - \mu_{\F^\prime_{w_i}}) \otimes \phi_Y(y_i), (\phi_X(z_t) - \mu_{\F^\prime_{w_i}}) \otimes \phi_Y(y_i) \right> 
    \right]
\end{talign*}
Let $\Delta_j := (\phi_X(z_j) - \mu_{\F^\prime_{w_i}}) \otimes \phi_Y(y_i)$. Then 
\begin{talign*}
    \mathbb{E}_{w_i,x_i,y_i} \mathbb{E}_{\P_i} [\Delta_j] &= \mathbb{E}_{w_i,x_i,y_i} \left[\mathbb{E}_{z_j \sim \F^\prime_{w_i}} [(\phi_X(z_j) - \mu_{\F^\prime_{w_i}}) \otimes \phi_Y(y_i)] \right]\\
    &= \mathbb{E}_{w_i,x_i,y_i} \left[ (\mu_{\F^\prime_{w_i}} - \mu_{\F^\prime_{w_i}}) \otimes \phi_Y(y_i) \right] \\
    &= 0
\end{talign*}
and $\Delta_j$ is independent of $\Delta_t$ conditionally on $w_i,x_i,y_i$ for $j \neq t$ so the cross terms vanish. Finally, we have:
\begin{talign*}
    &\mathbb{E}_{w_i,x_i,y_i} \mathbb{E}_{\P_i} \left[ \| \Delta_j\|^2 \right] \\
    &= \mathbb{E}_{w_i, x_i,y_i} \mathbb{E}_{\P^i} \left[ \left< \phi_X(z_j) \otimes \phi_Y(y_i), \phi_X(z_j) \otimes \phi_Y(y_i)\right> \right.\\
    &\left. \quad - 2 \left< \phi_X(z_j) \otimes \phi_Y(y_i) , \mu_{\F^\prime_{w_i}} \otimes \phi_Y(y_i) \right> \right.\\
    &\quad \left. + \left< \mu_{\F^\prime_{w_i}} \otimes \phi_Y(y_i), \mu_{\F^\prime_{w_i}} \otimes \phi_Y(y_i) \right> \right] \\
    &\leq \mathbb{E}_{w_i,x_i,y_i} \mathbb{E}_{z_j \sim \F^\prime_{w_i}}[k((z_j,y_i),(z_j,y_i))] \\
    &\leq 1. 
\end{talign*}
Therefore, putting it all together, we have:
\begin{talign*}
&\mathbb{E}\left[\MMD_k^2(\P_{XY}, \frac{1}{n} \sum_{i=1}^n \mathbb{F}^\prime_{w_i} \times \delta_{y_i}) \right] \\
    &= \frac{1}{n^2} \sum_{i=1}^n \mathbb{E}_{w_i,x_i,y_i} \mathbb{E}_{\P^i} \left[
        \sum_{j=1}^\infty \xi_j^2 \| (\phi_X(z_j) - \mu_{\F^\prime_{w_i}}) \otimes \phi_Y(y_i) \|^2 \right] \\
        &\leq \frac{1}{n^2} \sum_{i=1}^n \mathbb{E}_{\xi \sim GEM} \left[\sum_{j=1}^\infty \xi_j^2\right] \\
        &= \frac{1}{n^2} \sum_{i=1}^n \sum_{j=1}^\infty \frac{2(c+1)^{j-1}}{(c+3)^j(c+2)} \\
        &= \frac{1}{n (c+2)}.
\end{talign*}
By Jensen's, 
\begin{talign*}
    \mathbb{E}\left[\MMD_k(\P_{XY}, \frac{1}{n} \sum_{i=1}^n \mathbb{F}^\prime_{w_i} \times \delta_{y_i}) \right] \leq \frac{1}{\sqrt{n (c+2)}}
\end{talign*}
and the final result is:
\begin{talign*}
    &\E[\MMD_k(\P^{0,n}, \P_{XY})] \leq \frac{1}{\sqrt{n(c+2)}} + \frac{c}{c+1} \sqrt{{\text{Var}}_{\mathcal{H}_{k_X}}[\F_N^0] + \MMD^2_{k_X}(\F_N^0, \F_N)} \\
    &\quad + \frac{\sqrt{2}
    }{c+1} \sqrt{\MMD_{k_X}(\F_N^0, \delta_0)}. 
\end{talign*}
\end{proof}
\begin{lemma} (\textcolor{blue}{\texttt{Berkson}}) \label{lem-joint-model}
Let $\P^{0}_{\theta} := \P_X^{0,n} \times \P_{g(\theta,\cdot)}$ and $\P_{\theta} := \P_{X} \times\P_{g(\theta, \cdot)}$. Under Assumptions \ref{asm:k}-\ref{asm-lambda}, for any $\theta \in \Theta$:
\begin{talign*}
   &\mathbb{E}_{x_{1:n} \sim \P_X^{0,n}} [ \E_{\P \sim \mu} [\MMD_{k}(\P_X^{0,n} \times\P_{g(\theta, \cdot)}, \P_X \times \P_{g(\theta, \cdot)})]] \\
   &\quad \quad \quad \leq \Lambda \left( \frac{1}{\sqrt{n(c+2)}} + \frac{c}{c+1} \MMD_{k^2_X}(\F_N^0, \F_N) + \frac{1}{c+1} \MMD_{k^2_X}(\F_N^{0}, \delta_{0}) \right)
\end{talign*}
\end{lemma}
\begin{proof}
Using Lemma \ref{lem-alq2} and Assumption 4 we have that for any $\theta \in \Theta$:
\begin{talign*}
    &\mathbb{E}_{x_{1:n} \sim \P_X^{0,n}} [ \E_{\P \sim \mu} [\MMD_{k}(\P_X^{0,n}  \times\P_{g(\theta, \cdot)}, \P_X \times\P_{g(\theta, \cdot)})]] \\
    & \quad \quad  \leq \Lambda \mathbb{E}_{x_{1:n} \sim \P_X^{0,n}} [ \E_{\P \sim \mu} [\MMD_{k^2_X}(\P_X^{0,n}, \P_X) ]] 
\end{talign*}
and the result follows by Lemma \ref{lemma-marg-X} for $k^2_X$.
\end{proof}
We are now ready to prove Theorem \ref{thm-main}:
\begin{proof}
We have that for any $\theta \in \Theta$:
\begin{talign*}
    &\MMD_k(\P^{0}, \P_{\theta^\star(\P)}^{0}) \\
    &\quad \leq \MMD_k(\P_{\theta^\star(\P)}^{0}, \P_{XY}) + \MMD_k(\P^{0}, \P_{XY}) \\
    &\quad\leq \MMD_k(\P_{\theta^\star(\P)}, \P_{XY}) + \MMD_k(\P_{\theta^\star(\P)}^{0}, \P_{\theta^\star(\P)}) \\
    &\quad \quad + \MMD_k(\P^{0}, \P_{XY}) \\
    &\quad\leq \MMD_k(\P_{\theta}, \P_{XY}) + \MMD_k(\P_{\theta^\star(\P)}^{0}, \P_{\theta^\star(\P)})  \\
    &\quad \quad + \MMD_k(\P^0, \P^{0,n}) + \MMD_k(\P^{0,n}, \P_{XY}) \\
    &\quad\leq  \MMD_k(\P_{\theta}, \P^{0}) + \MMD_k(\P_{XY}, \P^{0}) + \MMD_k(\P_{\theta^\star(\P)}^{0}, \P_{\theta^\star(\P)})  \\
    & \quad \quad + \MMD_k(\P^0, \P^{0,n}) + \MMD_k(\P^{0,n}, \P_{XY}) \\
    &\quad\leq  \MMD_k(\P_{\theta}, \P^{0}) + \MMD_k(\P_{\theta^\star(\P)}^{0}, \P_{\theta^\star(\P)}) + 2 \MMD_k(\P^0, \P^{0,n}) \\
    & \quad \quad + 2\MMD_k(\P^{0,n}, \P_{XY}) \\
    &\quad\leq  \MMD_k(\P^{0}_{\theta}, \P^{0}) + \MMD_k(\P_{\theta}^{0}, \P_{\theta}) + \MMD_k(\P_{\theta^\star(\P)}^{0}, \P_{\theta^\star(\P)}) \\
    &\quad \quad + 2 \MMD_k(\P^0, \P^{0,n}) + 2\MMD_k(\P^{0,n}, \P_{XY}) 
\end{talign*}
where we used the triangle inequality in the first two and last three inequalities and the definition of $\theta^{\star}(\P)$ in the third inequality. Since this holds for any $\theta \in \Theta$ we have that 
\begin{talign*}
    \MMD_k(\P^{0}, \P_{\theta^\star(\P)}^{0}) &\leq  \inf_{\theta \in \Theta}  \MMD_k( \P^{0}, \P^{0}_{\theta})  + \MMD_k(\P_{\theta}^{0}, \P_{\theta})  \\
    &\quad + \MMD_k(\P_{\theta^\star(\P)}^{0}, \P_{\theta^\star(\P)}) + 2 \MMD_k(\P^0, \P^{0,n})\\
    &\quad + 2\MMD_k(\P^{0,n}, \P_{XY}). 
\end{talign*}
Using lemma \ref{lemma:preliminary:MMD} with $S_i = (x_i, y_i) \sim Q_i := \P^0_{X \given w_i} \P_{Y \given x}^0$ we have that: 
\begin{talign} \label{eq:mmd-emp-x}
    \mathbb{E}_{(x,y)_{1:n} \sim \P^0}[\MMD(\P^0, \P^{0,n}) ] \leq \frac{1}{\sqrt{n}}.
\end{talign}
Using (\ref{eq:mmd-emp-x}) and Lemmas \ref{lem-joint-emp} and \ref{lem-joint-model}, we have that: 
\begin{talign*}
   &\mathbb{E}_{(x,y)_{1:n} \sim \P^0} \left[ \E_{\P \sim \mu} \left[ \MMD_k(\P^{0}, \P_{\theta^\star(\P)}^{0}) \right] \right] - \inf_{\theta \in \Theta}  \MMD_k( \P^{0}, \P^{0}_{\theta})\\
    &\leq \frac{1}{\sqrt{n}} \left(2 + \frac{2(1+\Lambda)}{\sqrt{c+2}} \right) \\
     &\quad + \frac{2c}{c+1} \left( \Lambda  \MMD_{k^2_X}(\F_N, \F_N^0) + \sqrt{{\text{Var}}_{\mathcal{H}_{k_X}}[\F_N^0] + \MMD^2_{k_X}(\F_N^0, \F_N)}\right) \\
     &\quad + \frac{2}{c+1} \left(\Lambda \MMD_{k^2_X}(\F_N^{0}, \delta_{0})  + \sqrt{2} \sqrt{\MMD_{k_X}(\F_N^0, \delta_0)}\right). 
\end{talign*}
\end{proof}
\subsubsection{Proof of Corollary \ref{cor-known-dist} (\textcolor{blue}{\texttt{Berkson}})} \label{app-proof-cor1}
\begin{proof}
The result follows from theorem \ref{thm-main} by noting that for $k^2_X$ characteristic, 
$$
\MMD_{k^2_{X}}(\F_N, \F_{N}^0) = 0 \quad \text{if and only if} \quad \F_N = \F_{N}^0.
$$
\end{proof}
\subsubsection{Proof of Corollary \ref{cor-rbf-gauss} (\textcolor{blue}{\texttt{Berkson}})} \label{app-proof-cor2}
\begin{proof}
First we have that for any $x,x' \in \R^{d_\X}$,
$$
k^2_{X}(x,x') = \left(\exp\left\{- \frac{1}{2l^2} \| x - x'\|^2\right\}\right)^2 = \exp\left\{- \frac{1}{l^2} \| x - x'\|^2\right\}
$$
By definition of the MMD,
\begin{talign}\label{eq-mmd-integrals}
    &\MMD^2_{k^2_X}(\F_N, \F_N^0) \nonumber\\
    &\quad = \int_{\R^{d_\X}} \int_{\R^{d_\X}} k^2_X(x,x') d\F_N^0 d\F_N^0 - 2 \int_{\R^{d_\X}} \int_{\R^{d_\X}} k^2_X(x,x') d\F_N^0 d\F_N \nonumber \\
    &\quad \quad + \int_{\R^{d_\X}} \int_{\R^{d_\X}} k^2_X(x,x') d\F_N d\F_N 
\end{talign}
Let $l' = \frac{l}{\sqrt{2}}$. Following the argument provided in Section 3.1 of the supplementary material in \cite{rustamov2021closed} 
and the definition of the Gaussian kernel and Gaussian density function, the second term above is:
\begin{talign*}
    &\int_{\R^{d_\X}} \int_{\R^{d_\X}} k^2_X(x,x') d\F_N^0 d\F_N \\
    &\quad= \int_{\R^{d_\X}} \int_{\R^{d_\X}} \exp\left\{- \frac{1}{l^2} \| x - x'\|^2\right\} (2 \pi \sigma_1^2)^{-d_\X/2} \exp\left\{- \frac{1}{2\sigma_1^2} \|x\|^2\right\} \\
    &\quad \quad \quad (2 \pi \sigma_2^2)^{-d_\X/2} \exp\left\{- \frac{1}{2\sigma_2^2} \|x'\|^2\right\} dx dx' \\
    &\quad= \int_{\R^{d_\X}} \int_{\R^{d_\X}} \exp\left\{- \frac{1}{2l'^2} \| x - x'\|^2\right\} (2 \pi \sigma_1^2)^{-d_\X/2} \exp\left\{- \frac{1}{2\sigma_1^2} \|x\|^2\right\} \\
    &\quad \quad \quad (2 \pi \sigma_2^2)^{-d_\X/2} \exp\left\{- \frac{1}{2\sigma_2^2} \|x'-w\|^2\right\} dx dx' \\
    &\quad= (2 \pi l'^2)^{d_\X/2} \int_{\R^{d_\X}} \left(\int_{\R^{d_\X}} (2 \pi l'^2)^{-d_\X/2} \exp\left\{- \frac{1}{2l'^2} \| x - x'\|^2\right\} \right. \\
     &\quad \quad \quad (2 \pi \sigma_1^2)^{-d_\X/2} \exp\left\{ \left. - \frac{1}{2\sigma_1^2} \|x\|^2\right\} dx \right) \\
     &\quad \quad \quad (2 \pi \sigma_2^2)^{-d_\X/2} \exp\left\{- \frac{1}{2\sigma_2^2} \|x'-w\|^2\right\}  dx'
\end{talign*}
where in the second equality we rewrote $$\exp\left\{- \frac{1}{2\sigma_2^2} \|x'\|^2\right\} = \exp\left\{- \frac{1}{2\sigma_2^2} \|x'-w\|^2\right\}$$ which holds true in the case $w = 0$. The inner integral is the density function of the sum of two Gaussian RVs with mean $0$ and variances $l^2$ and $\sigma_1^2$ respectively. We will denote the sum of these RVs $Z = X + Y$ where $X \sim N(0, l^2 I)$ and $Y \sim N(0, \sigma_1^2 I)$ such that the inner integral corresponds to the probability density function of $Z$. Hence the above becomes
\begin{talign*}
    &\int_{\R^{d_\X}} \int_{\R^{d_\X}} k^2_X(x,x') d\F_N^0 d\F_N \\
    &\quad = (2 \pi l'^2)^{d_\X/2} \int_{\R^{d_\X}} f_{Z}(x') (2 \pi \sigma_2^2)^{-d_\X/2} \exp\left\{- \frac{1}{2\sigma_2^2} \|x'- w\|^2\right\}  dx' 
\end{talign*}
The remaining integral corresponds to the probability density function of the sum of $W = Z + U$ where $Z = X + Y$ as above and $U \sim N(0, \sigma_2^2 I)$. Hence, the double integral is equal to the probability density function of $W = X + Y + U \sim N(0, (l'^2 + \sigma_1^2 + \sigma_2^2) I)$ at $w=0$ and 
\begin{talign*}
    &\int_{\R^{d_\X}} \int_{\R^{d_\X}} k^2_X(x,x') d\F_N^0 d\F_N \\
    &\quad= (2 \pi l'^2)^{d_\X/2} \int_{\R^{d_\X}} f_{Z}(x') (2 \pi \sigma_2^2)^{-d_\X/2} exp\left\{- \frac{1}{2\sigma_2^2} \|x'- w\|^2\right\}  dx' \\
    &\quad= (2 \pi l'^2)^{d_\X/2} (2 \pi (l'^2 + \sigma_1^2 + \sigma_2^2))^{-d_\X/2} \exp\left\{- \frac{1}{2(l'^2 + \sigma_1^2 + \sigma_2^2)} \| 0 \|^2\right\} \\
    &\quad= \left(\frac{l'^2}{l'^2 + \sigma_1^2 + \sigma_2^2}\right)^{d_\X/2} \\
    &\quad= \left(\frac{l^2}{l^2 + 2\sigma_1^2 + 2\sigma_2^2}\right)^{d_\X/2}.
\end{talign*}
We can use the same argument to derive the other two terms in (\ref{eq-mmd-integrals}) to obtain 
\begin{talign*}
    \MMD^2_{k^2_X}(\F_N, \F_N^0) = \left(\frac{l^2}{l^2 + 4\sigma_1^2}\right)^{d_\X/2} -  2\left(\frac{l^2}{l^2 + 2\sigma_1^2 + 2\sigma_2^2}\right)^{d_\X/2} + \left(\frac{l^2}{l^2 + 4\sigma_2^2}\right)^{d_\X/2}.
\end{talign*}
Note that using exactly the same argument, we obtain:
\begin{talign*}
    \MMD^2_{k_X}(\F_N, \F_N^0) = \left(\frac{l^2}{l^2 + 2\sigma_1^2}\right)^{d_\X/2} -  2\left(\frac{l^2}{l^2 + \sigma_1^2 + \sigma_2^2}\right)^{d_\X/2} + \left(\frac{l^2}{l^2 + 2\sigma_2^2}\right)^{d_\X/2}.
\end{talign*}
Similarly,
\begin{talign*}
    \MMD^2_{k^2_X}(\F_N^{0}, \delta_{0}) &=  \left\|\mathbb{E}_{\nu^0_{1:n} \simiid \F_N^0} [k_X^2(\nu, \cdot)] - k_X^2(0, \cdot) \right\|^2_{\mathcal{H}_{k_X^2}}\nonumber \\
    &= \E_{\nu \sim \F_{N}^0, \nu'\sim \F_N^0}[k_X^2(\nu, \nu')] -2 \E_{\nu \sim \F_N^0} [k_X^2(\nu, 0)] + k_X^2(0, 0) \nonumber \\
    &= \left(\frac{l^2}{l^2 + 4 \sigma^2_1} \right)^{\frac{d_\X}{2}} -2 \left(\frac{l^2}{l^2 + 2 \sigma^2_1} \right)^{\frac{d_\X}{2}} + 1
\end{talign*}
and
\begin{talign*}
    \MMD^2_{k_X}(\F_N^0, \delta_0) = \left(\frac{l^2}{l^2 + 2 \sigma^2_1} \right)^{\frac{d_\X}{2}} -2 \left(\frac{l^2}{l^2 +  \sigma^2_1} \right)^{\frac{d_\X}{2}} + 1.
\end{talign*}
For the variance term, we have:
\begin{talign*}
    \text{Var}_{\mathcal{H}_{k_X}}[\F_N^0] &= \E_{\nu \sim \F_N^0}[k_X(\nu, \nu)] - \left< \E_{\nu \sim\F_N^0}[\phi_X(\nu)], \E_{\nu \sim \F_N^0}[\phi_X(\nu)]\right> \\
    &= 1 - \E_{\nu \sim \F_{N}^0, \nu'\sim \F_N^0}[k_X(\nu, \nu')] \\
    &= 1 - \left(\frac{l^2}{l^2 + 2 \sigma^2_1} \right)^{\frac{d_\X}{2}}
\end{talign*}
and hence 
\begin{align*}
    &\MMD^2_{k^2_X}(\F_N^{0}, \delta_{0}) +  \text{Var}_{\mathcal{H}_{k_X}}[\F_N^0] \\
    &\quad= 1 -  2\left(\frac{l^2}{l^2 + \sigma_1^2 + \sigma_2^2}\right)^{\frac{d_\X}{2}} + \left(\frac{l^2}{l^2 + 2\sigma_2^2}\right)^{\frac{d_\X}{2}}. 
\end{align*}

So by Theorem \ref{thm-main} it follows that:
   \begin{talign*}
           &\mathbb{E}_{(x,y)_{1:n} \sim \P^{0}} \left[ \E_{\P \sim \mu} \left[  \MMD_k(\P^{0}, \P_{\theta^\star(\P)}^{0}) \right] \right] - \inf_{\theta \in \Theta}  \MMD_k(\P^{0}, \P^{0}_{\theta}) \\
       &\quad \quad \leq  \frac{1}{\sqrt{n}} \left(2 + \frac{2(1+\Lambda)}{\sqrt{c+2}} \right)  + \frac{2c}{c+1} \left( \Lambda  C_1 + C_3\right)  + \frac{2}{c+1} \left(\Lambda C_2  + \sqrt{2} \sqrt{C_4}\right).
\end{talign*}
where 
\begin{talign*}
    C_1^2 &:= \left(\frac{l^2}{l^2 + 4\sigma_1^2}\right)^{\frac{d_\X}{2}} -  2\left(\frac{l^2}{l^2 + 2\sigma_1^2 + 2\sigma_2^2}\right)^{\frac{d_\X}{2}} + \left(\frac{l^2}{l^2 + 4\sigma_2^2}\right)^{\frac{d_\X}{2}} \\
    C_2^2 &:= \left(\frac{l^2}{l^2 + 4 \sigma^2_1} \right)^{\frac{d_\X}{2}} -2 \left(\frac{l^2}{l^2 + 2 \sigma^2_1} \right)^{\frac{d_\X}{2}} + 1 \\
    C_3^2 &:= 1 -  2\left(\frac{l^2}{l^2 + \sigma_1^2 + \sigma_2^2}\right)^{\frac{d_\X}{2}} + \left(\frac{l^2}{l^2 + 2\sigma_2^2}\right)^{\frac{d_\X}{2}} \\
    C_4^2 &:= \left(\frac{l^2}{l^2 + 2 \sigma^2_1} \right)^{\frac{d_\X}{2}} -2 \left(\frac{l^2}{l^2 +  \sigma^2_1} \right)^{\frac{d_\X}{2}} + 1. 
\end{talign*} 
\end{proof}
\subsection{Proofs of theoretical results for the \textcolor{red}{\texttt{Classical}} error model} \label{app:sec-proofs-class}
\subsubsection{Proof of Theorem \ref{thm-main-class} (\textcolor{red}{\texttt{Classical}})}  \label{app-proof-thm-class}
We first prove a lemma bounding the expected MMD between the true empirical measure of the unobserved $\{x_i\}_{i=1}^n$, namely $\P_X^{n}$ and the empirical sample from the exact DPs $\P_X$ in the classical error case. 
\begin{lemma} (\textcolor{red}{\texttt{Classical}}) \label{lemma-marg-X-class}
For $i = 1, \dots, n$ let $\mathbb{P}^{i}$ be a sample from the exact Dirichlet process $\mu_i = \DP(c+1, \F'_{w_i})$ and $\P_X = \frac{1}{n} \sum_{i=1}^n \P^i$. 
Then for any kernel $k_X$ on $\mathcal{X}$ satisfying Assumptions \ref{asm:kxy} and \ref{asm-kx} the following statement holds: 
 \begin{talign*}
     &\mathbb{E}_{x_{1:n} \sim \P_X^{0}} [ \E_{\P \sim \mu} [\MMD_{k_X}(\P_X^{0}, \P_X)]] \leq \\
     &\quad \quad \quad \frac{c}{\sqrt{n}(c+1)} + \frac{1}{\sqrt{n(c+2)}} + \frac{c}{c+1} \MMD_{k_X}(\P_X^0, \F) + \frac{1}{c+1} \MMD_{k_X}(\F_N^{0}, \delta_0)
 \end{talign*}
\end{lemma}

\begin{proof}
Using the triangle inequality, we have
\begin{talign} \label{eq-target-m-class}
   \MMD_{k_X}(\mathbb{P}_X^{0},\mathbb{P}_X) \leq  \MMD_{k_X}(\P_X, \F') + \MMD_{k_X}(\mathbb{P}_X^{0}, \F'). 
\end{talign}
The first term can be bounded exactly in the same way as the first part of the proof of Lemma \ref{lemma-marg-X} since it does not depend on the error structure:
\begin{align} \label{eq-px-fprime-m-class}
     \mathbb{E}_{x_{1:n} \sim \P_X^{0}} [ \E_{\P \sim \mu} [\MMD_k(\P_X, \F^\prime)]] \leq \frac{1}{\sqrt{n(c+2)}}.
\end{align}


We now bound the second term in (\ref{eq-target-m-class}). By definition of $\F'$ it follows that
\begin{talign*}
    \F' =  \frac{1}{n} \sum_{i=1}^n \F'_{w_i} = \frac{1}{n} \sum_{i=1}^n \left[\frac{c}{c+1} \F_{w_i} + \frac{1}{c+1} \delta_{w_i} \right] = \frac{c}{c+1} \F + \frac{1}{c+1} \P_W^n 
\end{talign*}
where $\F := \frac{1}{n} \sum_{i=1}^n \F_{w_i}$ and $\P_W := \frac{1}{n} \sum_{i=1} \delta_{w_i}$ is the empirical measure of the observed, noisy covariates. By linearity of expectations it follows that 
\begin{talign*}
    \E_{\F'}[\cdot] = \frac{c}{c+1} \E_{\F}[\cdot] + \frac{1}{c+1} \E_{\P_W^n} [\cdot]
\end{talign*}
and the kernel mean embedding of $\F'$ is 
\begin{talign*}
    \mu_{\F'} &:= \mathbb{E}_{z \sim \F'}[k(z, \cdot)] = \frac{c}{c+1} \E_{z \sim \F}[k(z,\cdot)] + \frac{1}{c+1} \mathbb{E}_{z \sim \P_W^n}[k(z,\cdot)] \\
    &= \frac{c}{c+1} \mu_{\F} + \frac{1}{c+1} \mu_{\P_W^n}.
\end{talign*}
Using this we can first bound the objective as:
\begin{talign}
    &\MMD_{k_X}(\P_X^{0}, \F') \nonumber \\
    &\quad= \left\| \mu_{\P_X^{0}} - \frac{c}{c+1} \mu_{\F} - \frac{1}{c+1} \mu_{\P_W^n} \right\|_{\HkX} \nonumber \\
    &\quad\leq \frac{c}{c+1} \| \mu_{\P_X^{0}} - \mu_{\F} \|_{\HkX} + \frac{1}{c+1} \| \mu_{\P_X^{0}} - \mu_{\P_W^n} \|_{\HkX} \nonumber \\
    &\quad\leq \frac{c}{c+1} \| \mu_{\P_X^{0}} - \mu_{\F} \|_{\HkX} + \frac{1}{c+1} \| \mu_{\P_X^n} - \mu_{\P_W^n} \|_{\HkX} + \frac{1}{c+1} \| \mu_{\P_X^{0}} - \mu_{\P_X^n} \|_{\HkX} \nonumber \\
    &\quad= \frac{c}{c+1} \MMD_{k_X}(\P_X^{0}, \F) + \frac{1}{c+1} \MMD_{k_X}(\P_X^n, \P_W^n) + \frac{1}{c+1} \MMD_{k_X}(\P_X^{0}, \P_X^{n}) \label{eq-three-terms-m}
\end{talign}
where we used the triangle inequality for both inequalities. Using Lemma \ref{lemma:preliminary:MMD} with $S_i = x_i \sim Q_i := \mathbb{P}^0_{X}$ we have 
\begin{align*}
    \mathbb{E}_{x_{1:n} \simiid \P_X^{0}}[\MMD_{k_X}(\mathbb{P}_X^{0}, \mathbb{P}_X^n)] \leq \frac{1}{\sqrt{n}}.
\end{align*}
%
Next, let $\phi_X$ be the feature map associated with the kernel function $k_X$ and let $k_{X}^i(\nu, \nu') = k_X(w_i - \nu, w_i - \nu')$ for each $i \in \{1, \dots, n\}$ where $\{w_i\}_{i=1}^n$ denote the observations. By assumption \ref{asm-kx}, $k_X$ is translation invariant so for any $i \in \{1, \dots, n\}$,
$$
k_X^i(\nu, \nu') = k_X(w_i - \nu, w_i - \nu') = \psi(w_i - \nu - w_i + \nu') = \psi(\nu' - \nu) = k_X(\nu',\nu).
$$
hence
\begin{talign*}
    \MMD^2_{k_X^i}(\F_N^0, \delta_0) &=  \mathbb{E}_{\nu \sim \F_N^0, \nu' \sim \F_N^0} [k_X^i(\nu, \nu')] -2 \mathbb{E}_{\nu \sim \F_N^0} [k_X^i(\nu, 0)] + k_X^i(0, 0) \\
    &= \mathbb{E}_{\nu \sim \F_N^0, \nu' \sim \F_N^0} [k_X(\nu', \nu)] -2 \mathbb{E}_{\nu \sim \F_N^0} [k_X(0, \nu )] + k_X(0, 0) \\
    &= \MMD^2_{k_X}(\F_N^0, \delta_0)
\end{talign*}
where we have used the fact that kernels are symmetric by definition of a Reproducing Kernel Hilbert Space. Using this result we have,
\begin{talign*}
     &\E_{x_{1:n} \sim \P^0_X} [\MMD_{k_X}(\P_X^n, \P_W^n)] \\
     &\quad = \E_{x_{1:n} \sim \P^0} \left[\left\| \frac{1}{n} \sum_{i=1}^n \left[\phi_X(x_i) - \phi_X(w_i)  \right] \right\|_{\mathcal{H}_{k_X}} \right]\\
    &\quad = \E_{\nu^0_{1:n} \sim \F_N^0} \left[ \left\| \frac{1}{n} \sum_{i=1}^n \left[\phi_X(w_i - \nu_i^0) - \phi_X(w_i)  \right] \right\|_{\mathcal{H}_{k_X}} \right] \\
    &\quad = \left\| \frac{1}{n} \sum_{i=1}^n \left[ \E_{\nu^0_{i} \sim \F_N^0}[k^i_X(\nu_i^0, \cdot)] - k^i_X(0,\cdot)  \right] \right\|_{\mathcal{H}_{k_X}} \\
    &\quad \leq \frac{1}{n} \sum_{i=1}^n \MMD_{k^i_X}(\F_N^{0}, \delta_{0}) \\
    &\quad = \MMD_{k_X}(\F_N^{0}, \delta_{0}). 
\end{talign*}
Substituting in (\ref{eq-three-terms-m}), we obtain:
\begin{talign} \label{eq-pxon-fprime-m}
    \mathbb{E}_{x_{1:n} \sim \P_X^{0}}[\MMD_{k_X}(\P_X^{0}, \F')] &\leq \frac{1}{c+1} \frac{1}{\sqrt{n}} + \frac{c}{c+1} \MMD_{k_X}(\P_X^0, \F) \nonumber\\
    &+ \frac{1}{c+1} \MMD_{k_X}(\F_N^{0}, \delta_{0}). 
 \end{talign}
 Finally, using (\ref{eq-px-fprime-m-class}) and (\ref{eq-pxon-fprime-m}) in (\ref{eq-target-m-class}) we obtain the required result:
  \begin{talign*}
     &\mathbb{E}_{x_{1:n} \sim \P_X^{0}} [ \E_{\P \sim \mu} [\MMD_{k_X}(\P_X^{0}, \P_X)]] \leq \\
     &\quad \quad \quad \frac{c}{\sqrt{n}(c+1)} + \frac{1}{\sqrt{n(c+2)}} + \frac{c}{c+1} \MMD_{k_X}(\P_X^0, \F) + \frac{1}{c+1} \MMD_{k_X}(\F_N^{0}, \delta_0)
 \end{talign*}
\end{proof}
%
\begin{lemma} (\textcolor{red}{\texttt{Classical}}) \label{lem-joint-emp-m}
Let 
\begin{align*}
    \mathbb{P}^{0,n} := \frac{1}{n} \sum_{i=1}^n \delta_{(x_i, y_i)} \quad \text{and} \quad \mathbb{P}_{XY} = \P_X \times \P_Y^{n} = \frac{1}{n} \sum_{i=1}^n \sum_{j=1}^{\infty} \xi_j^i \delta_{(z_j,y_i)}
\end{align*}
where $z^i_{1:\infty} \overset{iid}{\sim} \F'_{w_i}, \xi^i_{1:\infty} \sim \GEM(c+1)$. Then under Assumptions \ref{asm:k}-\ref{asm-kx}:
\begin{talign*}
    &\mathbb{E}_{(x,y)_{1:n} \sim \P^0} [ \E_{\P \sim \mu} [\MMD_k(\mathbb{P}^{0}, \mathbb{P}_{XY})]] \\
    &\quad  \leq \frac{1}{\sqrt{n(c+2)}} + \frac{c}{n(c+1)}  \sum_{i=1}^n \sqrt{{\text{Var}}_{\mathcal{H}_{k_X}}[\P_X^0] + \MMD^2_{k_X}(\P_X^0, \F_{X \mid w_i})} \\
    &\quad \quad  + \frac{\sqrt{2}
    }{c+1} \sqrt{\MMD_{k_X}(\F_N^0, \delta_0)}. 
\end{talign*}
\end{lemma}
\begin{proof}
The proof follows exactly the proof of the Berkson case in Lemma \ref{lem-joint-emp} with a few changes to account for the different error structure. To avoid repetition, we only note the differences. 
The only part that depends on the ME structure is the bound on the term 
\begin{talign*}
\sum_{i=1}^{n} \sqrt{\E_{w_i,x_i}[\|\phi_X(x_i) - \mu_{\F_{X \mid w_i}} \|^2_{\mathcal{H}_{k_X}}]}
\end{talign*}
as this depends directly on the structure of the prior centring measure $\F_{w_i}$ relative to the true distribution of each $(x_i,w_i)$. For every $i \in \{1, \dots, n\}$ we have:
\begin{talign*}
    &\E_{w_i,x_i}[\|\phi_X(x_i) - \mu_{\F_{X \mid w_i}} \|^2_{\mathcal{H}_{k_X}}] \\
    &=\E_{w_i, x_i}[\| \phi_X(x_i) - \mathbb{E}_{x \sim \F_{X \mid w_i}}[\phi_X(x)]\|^2] \\
    &= \E_{x_i \sim \P^0}[k(x_i, x_i)] - 2 \E_{x_i \sim \P^0}[\left<\phi_X(x_i), \E_{x \sim \F_{X \mid w_i}} \phi_X(x) \right>] \\
    &\quad \quad + \left< \E_{x \sim \F_{w_i}}[\phi_X(x)], \E_{x \sim \F_{X \mid w_i}}[\phi_X(x)] \right>_{\mathcal{H}_{k_X}} \\
    &= \E_{x_i \sim \P^0}[k(x_i, x_i)] - \left< \mu_{\P_X^0}, \mu_{\P_X^0}\right>_{\mathcal{H}_{k_X}} + \MMD^2_{k_X}(\P_X^0, \F_{X \mid w_i}) \\
    &= \text{Var}_{\mathcal{H}_{k_X}}[\P_X^0] + \MMD^2_{k_X}(\P_X^0, \F_{X \mid w_i})
\end{talign*}
and hence 
\begin{talign*}
&\sum_{i=1}^{n} \sqrt{\E_{w_i,x_i}[\|\phi_X(x_i) - \mu_{\F_{X \mid w_i}} \|^2_{\mathcal{H}_{k_X}}]}\\
& \quad  =\sum_{i=1}^n \sqrt{\text{Var}_{\mathcal{H}_{k_X}}[\P_X^0] + \MMD^2_{k_X}(\P_X^0, \F_{X \mid w_i})}.
\end{talign*}
Unsurprisingly, this term now depends on the RKHS variance of the true marginal of $X$ as well as the quality of the prior centring measure with respect to $\P_X^0$. This is because in the Classical error case, $X$ is not dependent on the ME, and hence if $X$ has a large variance, then $W$ will have a larger variance and the problem becomes harder. The rest of the proof is identical to Lemma \ref{lem-joint-emp}. 
\end{proof}
\begin{lemma} (\textcolor{red}{\texttt{Classical}}) \label{lem-joint-model-m}
Let $\P^{0}_{\theta} := \P_X^{0} \times\P_{g(\theta,\cdot)}$ and $\P_{\theta} := \P_{X} \times\P_{g(\theta, \cdot)}$ and suppose Assumptions \ref{asm:k}-\ref{asm-lambda} hold, then for any $\theta \in \Theta$:
\begin{talign*}
   &\mathbb{E}_{x_{1:n} \sim \P_X^{0}} [ \E_{\P \sim \mu} [\MMD_{k}(\P_X^{0} \times\P_{g(\theta, \cdot)}, \P_X \times\P_{g(\theta, \cdot)})]] \\
   &\quad  \leq \Lambda \left(\frac{1}{\sqrt{n}(c+1)} + \frac{1}{\sqrt{n(c+2)}} + \frac{c}{c+1} \MMD_{k^2_X}(\P_X^0, \F) + \frac{1}{c+1} \MMD_{k^2_X}(\F_N^{0}, \delta_{0}) \right)
\end{talign*}
\end{lemma}
\begin{proof}
Using Lemma \ref{lem-alq2} and Assumption 4 we have that for any $\theta \in \Theta$:
\begin{talign*}
    &\mathbb{E}_{x_{1:n} \sim \P_X^{0}} [ \E_{\P \sim \mu} [\MMD_{k}(\P_X^{0}  \times\P_{g(\theta, \cdot)}), \P_X \times\P_{g(\theta, \cdot)}]] \\
    & \quad \leq \Lambda \mathbb{E}_{x_{1:n} \sim \P_X^{0}} [ \E_{\P \sim \mu} [\MMD_{k^2_X}(\P_X^{0}, \P_X) ]] 
\end{talign*}
and the result follows by Lemma \ref{lemma-marg-X-class} for $k^2_X$.
\end{proof}
We are now ready to prove Theorem \ref{thm-main-class}:
\begin{proof}
We have that for any $\theta \in \Theta$:
\begin{align*}
    &\MMD_k(\P^{0}, \P_{\theta^\star(\P)}^{0}) \\
    &\quad \leq \MMD_k(\P_{\theta^\star(\P)}^{0}, \P_{XY}) + \MMD_k(\P^{0}, \P_{XY}) \\
    &\quad\leq \MMD_k(\P_{\theta^\star(\P)}, \P_{XY}) + \MMD_k(\P_{\theta^\star(\P)}^{0}, \P_{\theta^\star(\P)}) + \MMD_k(\P^{0}, \P_{XY}) \\
    &\quad\leq \MMD_k(\P_{\theta}, \P_{XY}) + \MMD_k(\P_{\theta^\star(\P)}^{0}, \P_{\theta^\star(\P)}) + \MMD_k(\P^0, \P^{0,n}) \\
    &\quad \quad + \MMD_k(\P^{0,n}, \P_{XY}) \\
    &\quad\leq  \MMD_k(\P_{\theta}, \P^{0}) + \MMD_k(\P_{XY}, \P^{0}) + \MMD_k(\P_{\theta^\star(\P)}^{0}, \P_{\theta^\star(\P)})  \\
    & \quad \quad + \MMD_k(\P^0, \P^{0,n}) + \MMD_k(\P^{0,n}, \P_{XY}) \\
    &\quad\leq  \MMD_k(\P_{\theta}, \P^{0}) + \MMD_k(\P_{\theta^\star(\P)}^{0}, \P_{\theta^\star(\P)}) + 2 \MMD_k(\P^0, \P^{0,n}) \\
    & \quad \quad + 2\MMD_k(\P^{0,n}, \P_{XY}) \\
    &\quad\leq  \MMD_k(\P^{0}_{\theta}, \P^{0}) + \MMD_k(\P_{\theta}^{0}, \P_{\theta}) + \MMD_k(\P_{\theta^\star(\P)}^{0}, \P_{\theta^\star(\P)}) \\
    &\quad \quad + 2 \MMD_k(\P^0, \P^{0,n}) + 2\MMD_k(\P^{0,n}, \P_{XY}) 
\end{align*}
where we used the triangle inequality in the first three and last two inequalities and the definition of $\theta^{\star}(\P)$ in the third inequality. 
Since this holds for any $\theta \in \Theta$ we have that 
\begin{align*}
    \MMD_k(\P^{0}, \P_{\theta^\star(\P)}^{0}) &\leq  \inf_{\theta \in \Theta}  \MMD_k( \P^{0}, \P^{0}_{\theta}) +  \MMD_k(\P^{0}_{\theta}, \P^{0}) + \MMD_k(\P_{\theta}^{0}, \P_{\theta})  \\
    &\quad + \MMD_k(\P_{\theta^\star(\P)}^{0}, \P_{\theta^\star(\P)}) + 2 \MMD_k(\P^0, \P^{0,n})\\
    &\quad + 2\MMD_k(\P^{0,n}, \P_{XY}). 
\end{align*}
Using lemma \ref{lemma:preliminary:MMD} with $S_i = (x_i, y_i) \sim Q_i := \P^0_{X \given w_i} \times \P_{Y \given x}^0$ we have that: 
\begin{align} \label{eq:mmd-emp}
    \mathbb{E}_{(x,y)_{1:n} \sim \P^0}[\MMD(\P^0, \P^{0,n}) ] \leq \frac{1}{\sqrt{n}}.
\end{align}
Using equation (\ref{eq:mmd-emp}) and lemmas \ref{lem-joint-emp-m} and \ref{lem-joint-model-m}, we have that: 
\begin{talign*}
   &\mathbb{E}_{(x,y)_{1:n} \sim \P^0} \left[ \E_{\P \sim \mu} \left[ \MMD_k(\P^{0}, \P_{\theta^\star(\P)}^{0}) \right] \right] - \inf_{\theta \in \Theta}  \MMD_k( \P^{0}, \P^{0}_{\theta})\\
    &\quad \quad \leq 
    \frac{1}{\sqrt{n}} \left(\frac{2 \Lambda}{c+1} + \frac{2 \Lambda + 2}{\sqrt{c+2}} \right) \\
    &\quad \quad + \frac{2c}{c+1} \left(\Lambda  \MMD_{k^2_X}(\P_X^0, \F) + \frac{1}{n} \sum_{i=1}^n \sqrt{{\text{Var}}_{\mathcal{H}_{k_X}}[\P_X^0] + \MMD^2_{k_X}(\P_X^0, \F_{w_i})} \right) \\
    &\quad \quad + \frac{2}{c+1} \left(\MMD_{k^2_X}(\F_N^{0}, \delta_{0}) + \sqrt{2} \sqrt{\MMD_{k_X}(\F_N^0, \delta_0)} \right).
\end{talign*}
\end{proof}
\subsubsection{Proof of Corollary \ref{cor-known-dist-class} (\textcolor{red}{\texttt{Classical}})} \label{app-proof-cor3}
\begin{proof}
The result follows from Theorem \ref{thm-main-class} by noting that for $k_X$ characteristic, 
$$
\MMD_{k_{X}}(\P_X^0, \F) = 0 \quad \text{if and only if} \P_X^0 = \F.
$$
and similarly for $k_X^2$.
\end{proof}

\subsubsection{Proof of Corollary \ref{cor-rbf-gauss-class} (\textcolor{red}{\textcolor{red}{\texttt{Classical}}})}
\begin{proof}
    The proof is analogous to the Berkson case, making use of the closed-form expression of the MMD between Gaussian distributions. The only difference lies in the fact that some of the inputs to the MMD terms in the Classical case are not zero-mean, meaning that the double term in the expansion of the MMD is slightly different. In particular, for two Gaussian distributions $\P_1 := \mathcal{N}(\mu_1, \Sigma_1)$ and $\P_2 := \mathcal{N}(\mu_2, \Sigma_2)$ we have:
    \begin{talign*}
        &\int_{\R^{d_\X}} \int_{\R^{d_\X}} k_X(x,x^\prime) d \P_1 d \P_2 \\
        &=  \left| I + \tfrac{1}{\ell^2}\big(\Sigma_1 + \Sigma_2\big) \right|^{-1/2}
\exp\!\left(
  -\tfrac{1}{2}(\mu_1-\mu_2)^\top
  \big(\Sigma_1+ \Sigma_2 + \ell^2 I\big)^{-1}
  (\mu_1 - \mu_2) \right).
    \end{talign*}
To see this note that \[
\iint k_X(x,x')\,d\mathbb P_1(x)\,d\mathbb P_2(x')
=\mathbb E\left[\exp\!\left(-\frac{\|x-x^\prime\|^2}{2\ell^2}\right)\right].
\]
Set $z:=x-x^\prime\sim\mathcal N(m,S)$ with $m:=\mu_1-\mu_2$ and $S:=\Sigma_1+\Sigma_2$.
Writing $f(z;m,S)$ for the $\mathcal N(m,S)$ density,
\[
\mathbb E\exp\!\left(-\frac{\|z\|^2}{2\ell^2}\right)
= \int_{\mathbb R^d}\exp\!\left(-\frac{1}{2\ell^2}z^\top z\right)f(z;m,S)\,dz.
\]
Combining exponents and completing the square we obtain:
\[
-\tfrac12(z-m)^\top S^{-1}(z-m)-\tfrac{1}{2\ell^2}z^\top z
= -\tfrac12(z-\tilde m)^\top B\,(z-\tilde m)
  -\tfrac12\,m^\top(S+\ell^2 I)^{-1}m,
\]
where
\[
B:=S^{-1}+\ell^{-2}I,\qquad
\tilde m:=B^{-1}S^{-1}m=(I+\ell^{-2}S)^{-1}m,
\]
and we used $S^{-1}-S^{-1}B^{-1}S^{-1}=(S+\ell^2 I)^{-1}$.
Thus
\[
\mathbb E\left[\exp\!\left(-\frac{\|z\|^2}{2\ell^2}\right)\right]
=(2\pi)^{-d/2}|S|^{-1/2}\,e^{-\frac12 m^\top(S+\ell^2 I)^{-1}m}
\int_{\mathbb R^d} e^{-\frac12(z-\tilde m)^\top B (z-\tilde m)}dz.
\]
The Gaussian integral equals $(2\pi)^{d/2}|B|^{-1/2}$. Since
$|B|=|S^{-1}+\ell^{-2}I|=|S^{-1}|\,|I+\ell^{-2}S|$,
we obtain
\[
|S|^{-1/2}\,|B|^{-1/2} = |I+\ell^{-2}S|^{-1/2}.
\]
Substituting $S=\Sigma_1+\Sigma_2$ and $m=\mu_1-\mu_2$ yields the stated formula.
    The rest of the terms are obtained using the same arguments as in the proof of Corollary \ref{cor-rbf-gauss} in Section \ref{app-proof-cor2}.
\end{proof}

\subsection{Extension of theoretical results to the DP approximation}
The proofs of the theoretical results in the previous sections were based on the exact stick-breaking process representation of the DP. However, in practice, we employ the Dirichlet approximation approach (see Section \ref{sec-npl-input-unc} and Equation (\ref{eq-dp-approx})). In this Section, we investigate any error incurred from this approximation. We start by examining the generalisation error results of Section \ref{sec-theory} under the approximated DP.

Note that throughout the proofs, the only terms that relied on the DP representation were the ones involving the MMD between a random DP sample (which was represented exactly via an infinite sum) and the posterior DP centring measure, i.e. the term $\mathbb{E}_{\P \sim \mu} [\MMD_{k_X}(\P_X, \F')]$ in Lemmas \ref{lemma-marg-X} and \ref{lemma-marg-X-class} and the joint term $\E_{(x,y)_{1:n} \sim \P^0}[\E_{\P \sim \mu}[\MMD_k(\P_{XY}, \frac{1}{n} \sum_{i=1}^n \F^\prime_{w_i} \times \delta_{y_i})]]$ in Lemmas \ref{lem-joint-emp} and \ref{lem-joint-emp-m}. Since these terms only rely on the DP randomness relative to the DP mean, they are the same in the Berkson and Classical cases. We demonstrate the marginal case below, as the joint case follows exactly the same. We have:
\begin{talign*}
    &\mathbb{E}_{\P \sim \mu} [\MMD_{k_X}^2(\P_X, \F')] \nonumber\\
    &\quad \quad = \mathbb{E}_{\P \sim \mu} [\| \mu_{\P_X} - \mu_{\F'} \|_{\mathcal{H}_{k_X}}^2] \nonumber\\
    &\quad\quad=  \mathbb{E}_{\P \sim \mu} \left[\left\| \frac{1}{n} \sum_{i=1}^n (\mu_{\P^i} - \mu_{\F'_{w_i}}) \right\|^2_{\mathcal{H}_{k_X}} \right] \nonumber \\
    &\quad \quad= \frac{1}{n^2} \sum_{i=1}^n \mathbb{E}_{\P \sim \mu} \left[ \| \mu_{\P^i} - \mu_{\F^\prime_{w_i}} \|^2_{\HkX}\right] \\
    &\quad \quad \quad \quad + \frac{2}{n^2} \sum_{i \neq j} \E_{\P \sim \mu} \left[\left<\mu_{\P^i} - \mu_{\F^{\prime}_{w_i}}, \mu_{\P^j} - \mu_{\F^{\prime}_{w_j}} \right>_{\mathcal{H}_{k_X}} \right] 
\end{talign*}
Under the approximation we have that for each $i \in \{1, \dots, n\}$, $\P^i \sim \mu_i$ means that $\P^i := \sum_{t=1}^T \xi_t^i \delta_{\tilde{x}_t^i} + \xi^i_{T+1} \delta_{w_i}$ where $\xi^i_{1:T+1} \sim \Dir \left(\frac{c}{T}, \dots, \frac{c}{T}, 1 \right)$ and $\tilde{x}^i_{1:T} \simiid \F_{w_i}$. Define by $\E_{\xi^i \sim \Dir}$ the expectation under the Dirichlet distribution of the weights $\xi^i = (\xi^i_1, \dots, \xi^i_{T+1})$. Similarly to the exact case we have:
\begin{align*}
    \E[\mu_{\P^i}] &= \E \left[\sum_{t=1}^T \xi_t^i k_X(\tilde{x}_t^i, \cdot) + \xi^i_{T+1} k_X(w_i, \cdot) \right] \\ &= \sum_{t=1}^T \E_{\xi^i \sim \Dir}[\xi_T^i] \mu_{\F_{w_i}} + \E_{\xi^i \sim \Dir} [\mu_{\delta_{w_i}}] \\
    &= \mu_{\F_{w_i}} \sum_{t=1}^T \frac{c}{T(c+1)} + \frac{1}{c+1} \mu_{\delta_{w_i}} \\
    &= \frac{c}{c+1} \mu_{\F_{w_i}} + \frac{1}{c+1} \mu_{\delta_{w_i}} \\
    &= \mu_{\F^\prime_{w_i}} 
\end{align*}
where we used the first moment of the Dirichlet distribution.

Then for each $i \neq j$, we have:
\begin{talign*}
& \E_{\P\sim \mu} \left[\left< \mu_{\P^i} - \mu_{\F^\prime_{w_i}}, \mu_{\P^j} - \mu_{\F^\prime_{w_j}} \right>_{\mathcal{H}_{k_X}} \right]\nonumber \\
&= \E_{\P \sim \mu} [\left<\mu_{\P^i}, \mu_{\P^j} \right>] - 2 \E_{\P\sim \mu} \left[\left<\mu_{\P^i}, \mu_{\F^\prime_{w_j}} \right>_{\mathcal{H}_{k_X}}\right] + \left<\mu_{\F^\prime_{w_i}}, \mu_{\F^\prime_{w_j}} \right>_{\mathcal{H}_{k_X}} \nonumber\\
&= 0.
\end{talign*}
Therefore, we have:
\begin{talign}\label{eq:idk3}
    \E_{\P \sim \mu} [\MMD_{k_X}^2(\P_X, \F')] = \frac{1}{n^2} \sum_{i=1}^n \mathbb{E}_{\P \sim \mu} \left[ \| \mu_{\P^i} - \mu_{\F^\prime_{w_i}} \|^2_{\HkX}\right] . 
\end{talign}
For each $i \in \{1, \dots, n\}$ we have:
\begin{talign*}
    &\mathbb{E}_{\P \sim \mu} \left[ \| \mu_{\P^i} - \mu_{\F^\prime_{w_i}} \|^2_{\HkX}\right] \\
    &= \mathbb{E}_{\P \sim \mu} \left[ \left< \mu_{\F^\prime_{w_i}} - \mu_{\P^i}, \mu_{\F^\prime_{w_i}} - \mu_{\P^i} \right>_{\HkX}\right] \\
    &= \mathbb{E}_{\P \sim \mu} \left[ \| \mu_{\P^i}  \|^2_{\HkX} + \| \mu_{\F^\prime_{w_i}}  \|^2_{\HkX} - 2 \left<\mu_{\F^\prime_{w_i}}, \mu_{\P^i} \right>_{\HkX}\right]\\
    &= \| \mu_{\F^\prime_{w_i}}  \|^2_{\HkX} - 2 \| \mu_{\F^\prime_{w_i}}  \|^2_{\HkX} + \E_{\P^i \sim \mu_i}[\| \mu_{\P^i}  \|^2_{\HkX}] \\
    &\leq \E_{\P^i \sim \mu_i}[\| \mu_{\P^i}  \|^2_{\HkX}] \\
    &= \E_{\xi^i \sim \Dir}\left[\E_{\tilde{x}^i_{1:T} \simiid \F_{w_i}}\left[ \left\| 
    \sum_{t=1}^T \xi^i_t \phi_X(\tilde{x}^i_t) + \xi^i_{T+1} \phi_X(w_i)
    \right\|_{\HkX}^2 \right]\right] \\
    &= \E_{\xi^i \sim \Dir}\left[\E_{\tilde{x}^i_{1:T} \simiid \F_{w_i}}\left[ 
    \sum_{t=1}^T (\xi_t^i)^2 \| \phi_X(\tilde{x}^i_t)\|^2_{\HkX} + (\xi^i_{T+1})^2 \| \phi_X(w_i)\|^2_{\HkX} \right. \right. \\
    &\left. \left. \quad + 2 \sum_{j \neq t}^T \xi^i_j \xi^i_t k_X(\tilde{x}^i_j, \tilde{x}^i_t
    ) + 2 \sum_{t=1}^T \xi^i_t \xi^i_{T+1} k(\tilde{x}^i_t, w_i)
    \right]\right] \\
    &\leq \sum_{t=1}^T \E[(\xi_t^i)^2] + \E[(\xi^i_{T+1})^2] + 2 \sum_{j \neq t} \left| \E[ \xi^i_j \xi^i_t] \right| + 2 \sum_{t=1}^T |\E[\xi^i_t \xi^i_{T+1}] | \\
    &= \frac{c \left(\frac{c}{T} + 1\right)}{(c+1)(c+2)} + \frac{2}{(c+1)(c+2)} + \frac{T-1}{T} \frac{c^2}{(c+1)(c+2)}+ \frac{2c}{(c+1)(c+2)} \\
    &= 1
\end{talign*}
where we used Assumption \ref{asm:kxy} about the boundedness of $k_X$ and the closed-form expressions of the moments of a Dirichlet distribution. Substituting into (\ref{eq:idk3}) we obtain:
\begin{talign*}
    \E_{\P \sim \mu} [\MMD_{k_X}^2(\P_X, \F')] &= \frac{1}{n^2} \sum_{i=1}^n \mathbb{E}_{\P \sim \mu} \left[ \| \mu_{\P^i} - \mu_{\F^\prime_{w_i}} \|^2_{\HkX}\right] \\
    &\leq \frac{1}{n}.
\end{talign*}
Hence, by Jensen's inequality, we have 
\begin{talign*}
    \E_{\P \sim \mu} [\MMD_{k_X}(\P_X, \F')] \nonumber= \frac{1}{n^2} \sum_{i=1}^n \mathbb{E}_{\P \sim \mu} \left[ \| \mu_{\P^i} - \mu_{\F^\prime_{w_i}} \|^2_{\HkX}\right] \leq \frac{1}{\sqrt{n}}.
\end{talign*}
This result shows that the DP approximation does not affect the rate of the generalisation error. Moreover, the truncation limit $T$ does not enter the bound, meaning that the same theoretical rate can be guaranteed irrespective of the choice of $T$.

\subsection{Proof of Minimax Optimality (Proposition \ref{prp:minimax})}
We are going to use Le Cam's method, following the results in \cite{tolstikhin2016minimax}. First, define the RBF kernels: 
    \begin{align*}
        k((x,y), (x^\prime, y^\prime)) &= k_X(x,x^\prime) k_Y(y, y^\prime), \\
        k_X(x,x^\prime) &= \exp\left(- \frac{\| x - x^\prime\|^2}{2 l_X^2} \right),\\
        k_Y(y,y^\prime) &= \exp\left(- \frac{\| y - y^\prime\|^2}{2 l_Y^2} \right).
    \end{align*}
We start with the following Lemma, which will allow us to compute the MMD between joint Gaussian distributions in closed form. 
\begin{lemma}\label{lem:e9}
    Let two probability measures $\P, \Q \in \mathcal{P}_\Z$ such that $\P := \P_X \times \P_{Y \mid X}$ and $\Q := \P_X \times \Q_{Y \mid X}$. Then if $k((x,y),(x^\prime, y^\prime)) = k_X(x,x^\prime) k_Y(y, y^\prime)$ we have:
    \begin{align*}
        \MMD_k^2(\P, \Q) = \E_{x, x^\prime \sim \P_X} \left[k(x,x^\prime) \MMD^2_{k_Y}(\P_{Y \mid X}, \Q_{Y \mid X})\right].
    \end{align*}
\begin{proof}
 Recall the standard expansion
\begin{align} \label{eq:mmd-exp}
\mathrm{MMD}^2_k(\P,\Q)
= \mathbb{E}_{z,z'\sim \P}k(z,z')
+ \mathbb{E}_{z,z'\sim \Q}k(z,z')
-2\,\mathbb{E}_{z\sim \P, z^\prime \sim \Q}k(z,z').
\end{align}
Write $z=(x,y)$ and $z'=(x',y')$.
Because $\P=\P_X\times \P_{y|x}$, the joint law of $(x,y,x',y')$ under $\P\times \P$
is the product measure
\[
\P_X(dx)\,\P_{Y|X=x}(dy)\,\P_X(dx')\,\P_{Y|X=x'}(dy').
\]
Using boundedness of $k_X,k_Y$  we have
$|k_X(x,x')k_Y(y,y')|\leq 1$, hence the integrand is absolutely integrable.
Therefore, we may apply Fubini–Tonelli to permute integrals. Hence, 
\begin{align*}
&\mathbb{E}_{z,z'\sim \P}k(z,z')\\
&=\int_{\Z}\!\!\int_{\Z} k_X(x,x')\,k_Y(y,y')\,
\P_X(dx)\,\P_{Y|X=x}(dy)\,\P_X(dx')\,\P_{Y|X=x'}(dy')\\
&=\int_{\X}\!\!\int_{\X} \Bigg[
k_X(x,x')\,
\underbrace{\int_{\Y}\!\!\int_{\Y} k_Y(y,y')\,\P_{Y|X=x}(dy)\,\P_{Y|X=x'}(dy')}_{:=\;\Gamma_{\P\P}(x,x')}
\Bigg]\,\P_X(dx)\,\P_X(dx') \\
&=\mathbb{E}_{x,x'\sim \P_X}\!\big[k_X(x,x')\,\Gamma_{\P\P}(x,x')\big].
\end{align*}
Similarly, for the second term, we get:
\begin{align*}
&\mathbb{E}_{z,z'\sim \Q}k(z,z') \\
&=\int_{\Z}\!\!\int_{\Z} k_X(x,x')\,k_Y(y,y')\,
\P_X(dx)\,\Q_{Y|X=x}(dy)\,\P_X(dx')\,\Q_{Y|X=x'}(dy')\\
&=\mathbb{E}_{x,x'\sim \P_X}\!\big[k_X(x,x')\,\Gamma_{\Q\Q}(x,x')\big],
\end{align*}
where $\displaystyle
\Gamma_{\Q\Q}(x,x'):=\int_{\Y}\!\!\int_{\Y} k_Y(y,y')\,\Q_{Y|X=x}(dy)\,\Q_{Y|X=x'}(dy')$.
For the third term, $(x,y)$ is distributed as $\P$ and $(x',y')$ as $\Q$, independently, so
\begin{align*}
&\mathbb{E}_{z\sim \P, z^\prime \sim \Q}k(z,z')\\
&=\int_{\Z}\!\!\int_{\Z} k_X(x,x')\,k_Y(y,y')\,
\P_X(dx)\,\P_{Y|X=x}(dy)\,\P_X(dx')\,\Q_{Y|X=x'}(dy')\\
&=\mathbb{E}_{x,x'\sim \P_X}\!\big[k_X(x,x')\,\Gamma_{\P\Q}(x,x')\big],
\end{align*}
where $\displaystyle
\Gamma_{\P\Q}(x,x'):=\int_{\Y}\!\!\int_{\Y} k_Y(y,y')\,\P_{Y|X=x}(dy)\,\Q_{Y|X=x'}(dy')$.
Substituting into \ref{eq:mmd-exp} we get:
\begin{align*}
\mathrm{MMD}^2_k(\P,\Q)
&=\mathbb{E}_{x,x'}\!\left[
k_X(x,x')\left(\Gamma_{\P\P}(x,x')+\Gamma_{\Q\Q}(x,x')-2\Gamma_{\P\Q}(x,x')\right)
\right].
\end{align*}
For any fixed $x,x'$, we have:
\[
\Gamma_{\P\P}(x,x')+\Gamma_{\Q\Q}(x,x')-2\Gamma_{\P\Q}(x,x')
=\MMD^2_{k_Y}\!\big(\P_{Y|X=x},\,\Q_{Y|X=x'}\big).
\]
Hence
$
\MMD_k^2(\P,\Q)
=\mathbb{E}_{x,x'\sim \P_X}\!\Big[
k_X(x,x')\;\MMD^2_{k_Y}\!\big(\P_{Y|X},\,\Q_{Y|X'}\big)
\Big],
$
as claimed.
\end{proof}
\end{lemma}
 Now we denote the distribution of $(W,X,Y)$ as $\P$ and recall that $\P^0$ denotes the joint distribution of $X,Y$. Further recall that we denote the model family for the regression relationship as $\{ \P_{g(\theta, \cdot)}: \theta \in \Theta\}$ and the associated model under the true covariate distribution as $\P^0_{\theta} := \P^0_{X} \times \P_{g(\theta, X)}$. We are now ready to prove the result. 
\begin{proof}
   Consider the case where there is no measurement error, i.e. $X \equiv W$ and we observe $\{(x_i, y_i)_{i=1}^n\} \sim \P^0$ and the model is well-specified, i.e. $\exists \theta^\star$ such that $\P_{g(\theta^\star, X)} \equiv \P^0_{Y \mid X}$. Since this is a special case, it trivially lower bounds the supremum i.e. 
   \begin{align*}
       &\inf_{\hat{\theta}_n} \sup_{\P} \E_{(w_i,x_i,y_i)_{1:n} \simiid \P}[\MMD_k(\P^0, \P_{\hat{\theta}_n})]\\ &\quad \geq \inf_{\hat{\theta}_n} \sup_{\substack{{\P^0_{\theta^\star}}: \\ \text{no ME}}} \E_{(x_i,y_i)_{1:n} \sim \P^0_{\theta^\star}}[\MMD_k(\P^0_{\theta^\star}, \P_{\hat{\theta}_n})] \\
       &\quad= \inf_{\hat{\theta}_n} \sup_{\substack{{\P^0_{\theta^\star}}}} \E_{(x_i,y_i)_{1:n} \sim \P^0_{\theta^\star}}[\MMD_k(\P^0_{\theta^\star}, \P_{\hat{\theta}_{n}})].
   \end{align*}
   We can now use Le Cam's method. Let $\P_{g(\theta, X)} := \mathcal{N}(\theta, \sigma^2)$ and let $\theta_0 = D$ and $\theta_1 = - D$ for some $D > 0$. Denote the joint likelihood across $n$ observations $(x_i,y_i)_{1:n}$ as $\P_\theta^{\times n}$. Then we have that:
   \begin{align*}
       KL(\P^{\times n}_{\theta_0} ||  \P^{\times n}_{\theta_0}) &= n KL(\P_X^0 \times \mathcal{N}(D, \sigma^2) || \P_X^0 \times \mathcal{N}(-D, \sigma^2)) \\
       &= n KL(\mathcal{N}(D, \sigma^2) || \mathcal{N}(-D, \sigma^2)) \\
       &= \frac{2n D^2}{\sigma^2}
   \end{align*}
where the first equality follows from the chain rule for the KL divergence, the second equality follows from the fact that under this construction $X$ and $Y$ are independent and hence by log properties:
\begin{align*}
    &KL(\P_X^0 \times \mathcal{N}(D, \sigma^2) || \P_X^0 \times \mathcal{N}(-D, \sigma^2)) \\&= KL(\P_X^0, \P_X^0) + KL(\mathcal{N}(D, \sigma^2) || \mathcal{N}(-D, \sigma^2))\\
    &= KL(\mathcal{N}(D, \sigma^2) || \mathcal{N}(-D, \sigma^2))
\end{align*}
and the third equality follows from the closed form expression of KL between Gaussian distributions. By Pinsker's inequality \citep[see e.g.][Lemma 2.5]{tsybakov2008nonparametric}, we hence have that:
\begin{align*}
    TV(\P^{\times n}_{\theta_0},  \P^{\times n}_{\theta_0}) \leq \sqrt{ \frac{1}{2} KL(\P^{\times n}_{\theta_0} ||  \P^{\times n}_{\theta_0})} = \sqrt{n} \frac{D}{\sigma}. 
\end{align*}
Now since $D$ is arbitrary, let us choose $D := \frac{c \sigma}{\sqrt{n}}$ for some $c > 0$. We then have $TV(\P^{\times n}_{\theta_0},  \P^{\times n}_{\theta_0}) \leq c$. 
Moreover, by Lemma \ref{lem:e9}, we have that:
\begin{align*}
    \MMD^2_k(\P_{\theta_0},  \P_{\theta_1}) &= \MMD^2_k(\P_X^0 \times \mathcal{N}(D, \sigma^2), \P_X^0 \times \mathcal{N}(-D, \sigma^2)) \\
    &= \E_{x,x^\prime \sim \P_X^0}[k_X(x,x^\prime) \MMD_{k_Y}^2(\mathcal{N}(D, \sigma^2), \mathcal{N}(-D, \sigma^2))] \\
    &= \E_{x,x^\prime \sim \P_X^0}[k_X(x,x^\prime)] \MMD_{k_Y}^2(\mathcal{N}(D, \sigma^2), \mathcal{N}(-D, \sigma^2)).
\end{align*}
For the second term, we can use the closed form expression of the MMD for Gaussians with a Gaussian kernel as in the proofs of Corollaries \ref{cor-rbf-gauss} and \ref{cor-rbf-gauss-class} to obtain:
\begin{align*}
    \MMD^2_{k_Y}(\mathcal{N}(D, \sigma^2), \mathcal{N}(-D, \sigma^2)) = 2 \frac{l_Y}{\sqrt{l_Y^2 + 2 \sigma^2}} \left(1 - \exp\left(- \frac{2 D^2}{l_Y^2 + 2 \sigma^2} \right) \right).
\end{align*}
Let $t := \frac{2 D^2}{l_Y^2 + 2 \sigma^2} = \frac{2 c^2 \sigma^2}{n(l_Y^2 + 2 \sigma^2)} $. Then if we choose $c > 0$ such that $t \in [0,1]$ we have that $1 - \exp(-t) \geq \frac{t}{2}$ and hence:
\begin{align*}
    \MMD_{k_Y}(\mathcal{N}(D, \sigma^2), \mathcal{N}(-D, \sigma^2)) \geq \sqrt{\frac{2 l_Y}{(l_Y^2 + 2 \sigma^2)^{3/2}}} \frac{c \sigma}{\sqrt{n}}
\end{align*}
Now define $m_X := \sqrt{\E_{x,x^\prime \sim \P_X^0}[k_X(x,x^\prime)]}$ which is independent of $n$. We then have:
\begin{align*}
    \MMD_k(\P_{\theta_0}, \P_{\theta_1}) \geq \frac{1}{\sqrt{n}} \left(m_X c \sigma \sqrt{\frac{2 l_Y}{(l_Y^2 + 2 \sigma^2)^{3/2}}} \right). 
\end{align*}
By Le Cam's method \citep[see][Lemma 15.9]{Wainwright_2019}, we have that:
\begin{align*}
    \inf_{\hat{\theta}_n} \sup_{\substack{{\P^0_{\theta^\star}}}} \E_{(x_i,y_i)_{1:n} \sim \P^0_{\theta^\star}}[\MMD_k(\P^0_{\theta^\star}, \P_{\hat{\theta}_n})] &\geq \frac{MMD(\P_{\theta_0}, \P_{\theta_1})}{4} (1 - TV(\P^{\times n}_{\theta_0}, \P^{\times n}_{\theta_1})) \\
    & \geq \frac{1}{\sqrt{n}} \left(m_X c \sigma \sqrt{\frac{2 l_Y}{(l_Y^2 + 2 \sigma^2)^{3/2}}} \right) (1 - c).
\end{align*}
\end{proof}

\section{Experimental setup \& additional results} \label{app-exp-dets}
In this section, we provide details of our experimental setup in Sections \ref{sec:sim} and \ref{sec:stu} as well as additional experimental results. 
\subsection{Experimental details}
In the Linear Regression example of Section \ref{sec-exp-lin-reg}, for the Robust-MEM (TLS) algorithm We set $T = 100$ and $c = 1$ and vary $\sigma^2_{\nu}$ while keeping $\sigma_{\epsilon}$ fixed at 0.5 throughout the experiment. A pseudo-code of the Robust-MEM algorithm can be found in Alg. \ref{alg:TLS_posterior_bootstrap}.
\begin{algorithm}[t] 
\SetAlgoLined
\SetKwInOut{Input}{input}
 \Input{$(w_i,y_i)_{1:n}$, $T$, $B$, $c$, $(\F_{w_i})_{1:n}$}
 \For{$j\gets1$ \KwTo $B$}{
 \For{$i\gets1$ \KwTo $n$}{
 Sample $\tilde{x}_{(1:T)}^{(i,j)} \simiid \F_{w_i}$ \\
 Sample $\xi_{1:T+1}^{(i,j)} \sim \Dir\left(\frac{c}{T}, \dots, \frac{c}{T},1\right)$. \\
 Set $\P^{(i,j)} := \sum_{k=1}^T \xi^{(i,j)}_k \delta_{\tilde{x}^{(i,j)}_k} + \xi^{(i,j)}_{T+1} \delta_{w_i}$ 
 }
 Calculate $\theta^{(j)} = \theta^{\star}_{TLS}(\P^{(1,j)}, \dots, \P^{(n,j)})$ as in (\ref{eq-tls-obj2})
 }
  \Return Posterior bootstrap sample $\theta^{(1:B)}$
    \caption{Robust-MEM with the TLS}
    \label{alg:TLS_posterior_bootstrap}
\end{algorithm}

For all applications of the Romust-MEM (MMD) method, we used automatic differentiation with the Adam optimizer 
\cite{kingma2014adam} 
provided by \texttt{JAX}, to minimize the MMD objective which is approximated via a U-statistic as in \cite{gretton2012kernel} using independent samples. In particular, for i.i.d samples $\{x_i\}_{i=1}^n \simiid \P$ and $\{y_{i}\}_{i=1}^m \simiid \Q$ the squared MMD between $\P, \Q$ can be approximated as:
\begin{align*}
    &\widehat{\MMD}^2_k(\P, \Q) = \frac{1}{n(n-1)} \sum_{i=1}^n \sum_{j \neq i}^n k(x_i, x_j) + \frac{1}{m(m-1)} \sum_{i=1}^m \sum_{j \neq i}^m k(y_i, y_j) \\
    &\quad \quad - \frac{2}{nm} \sum_{i=1}^n \sum_{j=1}^m k(x_i, y_j).
\end{align*}
In practice, we take $n=m$ to be equal to the number of observations. To initialize the optimisation several different approaches can be taken. In the nonlinear regression simulation, we sample uniformly at random from $[-1,4]$ at each bootstrap iteration for each regression coefficient. One could also initialize from the nonlinear Least Squares (NLS) estimate however adding randomness to the initialization procedure can aid numerical stability and increase posterior uncertainty leading to better coverage probabilities. In the real-world California Test Scores study we combine the above methods by first finding the NLS estimates and then sampling uniformly from intervals centered around these estimates. Initialization techniques for more complex models and a higher number of parameters, such as the Mental Health study, is discussed in Section \ref{app-bcr-model}.

Furthermore, the RBF kernel was used for both $k_X$ and $k_Y$ throughout our experiments. For the nonlinear regression simulation, the length scales for $k_X$ and $k_Y$ were set to 100 in the Classical error case, whereas the median heuristic \cite{gretton2012kernel} was used in the Berkson case as well as the California Test Scores study. For the Mental Health study, the length scales were 10 and 100 respectively. These values were chosen to ensure the loss and gradient do not vanish during optimisation. A pseudo-code of the Robust-MEM algorithm with the MMD-based loss function is outlined in \ref{alg:TLS_posterior_bootstrap_MMD}.

\begin{algorithm}[t] 
\SetAlgoLined
\SetKwInOut{Input}{input}
 \Input{$(w_i,y_i)_{1:n}$, $T$, $B$, $c$, $(\F_{w_i})_{1:n}$}
 \For{$j\gets1$ \KwTo $B$}{
 \For{$i\gets1$ \KwTo $n$}{
  Sample $\tilde{x}_{1:T}^{(i,j)} \simiid \F_{w_i}$ \\
 Sample $\xi_{1:(T+1)}^{(i,j)} \sim \Dir\left(\frac{c}{T}, \dots, \frac{c}{T}, 1\right)$. \\
 Set $\P^{(i,j)} := \sum_{k=1}^T \xi^{(i,j)}_k \delta_{\tilde{x}^{(i,j)}_k} + \xi^{(i,j)}_{T+1} \delta_{w_i}$ 
 }
 Obtain $\theta^{(j)} = \theta^\ast_{\text{MMD}}(\P^{(1,j)}, \dots, \P^{(n,j)})$ as in \ref{eq-mmd-obj} using numerical optimisation. 
 }
 \Return{Posterior bootstrap sample $\theta^{(1:B)}$}
 \caption{Robust-MEM with the MMD}
 \label{alg:TLS_posterior_bootstrap_MMD}
\end{algorithm}

\subsection{Additional results for the Linear Regression case} \label{app:sec-lin-wellsp}
We now consider the Linear regression simulation from Section \ref{sec-exp-lin-reg}. We first give additional results on the misspecified prior cases considered in the main text in Table \ref{tab:class-lin}. We observe that the Robust-MEM (TLS) method remains robust to ME even under a misspecified prior variance. In the absence of ME ($\sigma_{\nu}^2 = 1e-06$), OLS performs best whereas for small ME variance ($\sigma_{\nu}^2 = 0.5$) TLS achieves the smallest MSE with all methods performing well. The suggested method has the lowest MSE for higher values of the variance. 
\begin{table}
\small
\caption{MSE over 100 simulations with associated std. in the Linear Regression example with Classical ME (with a misspecified prior) for an increasing ME variance ($\sigma^2_{\nu}$).}
\label{tab:class-lin}
\begin{tabular}{llllll}
\hline
 Method & & \multicolumn{4}{c}{ME variance}  \\
   &  &  1e-06 &  0.5 & 1.0 & 2.0 \\
\hline
\multirow[t]{2}{*}{OLS} & $	\theta_1$ & 0.063 (0.083) & 1.632 (1.232) & 16.945 (7.153) & 103.85 (21.629) \\
 & $	\theta_2$ & 0.001 (0.001) & 0.016 (0.012) & 0.17 (0.071) & 1.035 (0.209) \\
\multirow[t]{2}{*}{R-MEM} & $	\theta_1$ & 1.618 (0.128) & 1.585 (0.161) & 1.464 (0.222) & 1.011 (0.335) \\
 (TLS) & $	\theta_2$ & 0.013 (0.002) & 0.013 (0.003) & 0.012 (0.004) & 0.011 (0.008) \\
\multirow[t]{2}{*}{SIMEX} & $	\theta_1$ & 0.07 (0.073) & 1.127 (0.681) & 15.404 (4.712) & 99.524 (16.695) \\
 & $	\theta_2$ & 0.001 (0.001) & 0.011 (0.007) & 0.154 (0.046) & 0.993 (0.159) \\
\multirow[t]{2}{*}{TLS} & $	\theta_1$ & 0.118 (0.156) & 0.313 (0.561) & 1.47 (1.948) & 14.871 (11.412) \\
 & $	\theta_2$ & 0.001 (0.002) & 0.003 (0.006) & 0.015 (0.019) & 0.15 (0.112) \\
 \hline
\end{tabular}
\end{table}

We now present the results of the Linear Regression simulation in the well-specified case, i.e. when $\F_N = \F_N^0 := N(0, \sigma_{\nu}^0)$ and $\F_X = \P_X^0 := N(0, 1)$, for $\sigma_{\nu}^0 \in \{10^{-6}, 0.5, 1, 2\}$. The linear regression fit is illustrated in Fig. \ref{fig:class-lin} for all methods. As expected, the SIMEX method performs better than in the misspecified case whereas all other methods perform similarly. 
\begin{figure}[t]
    \center
\includegraphics[width=\textwidth]{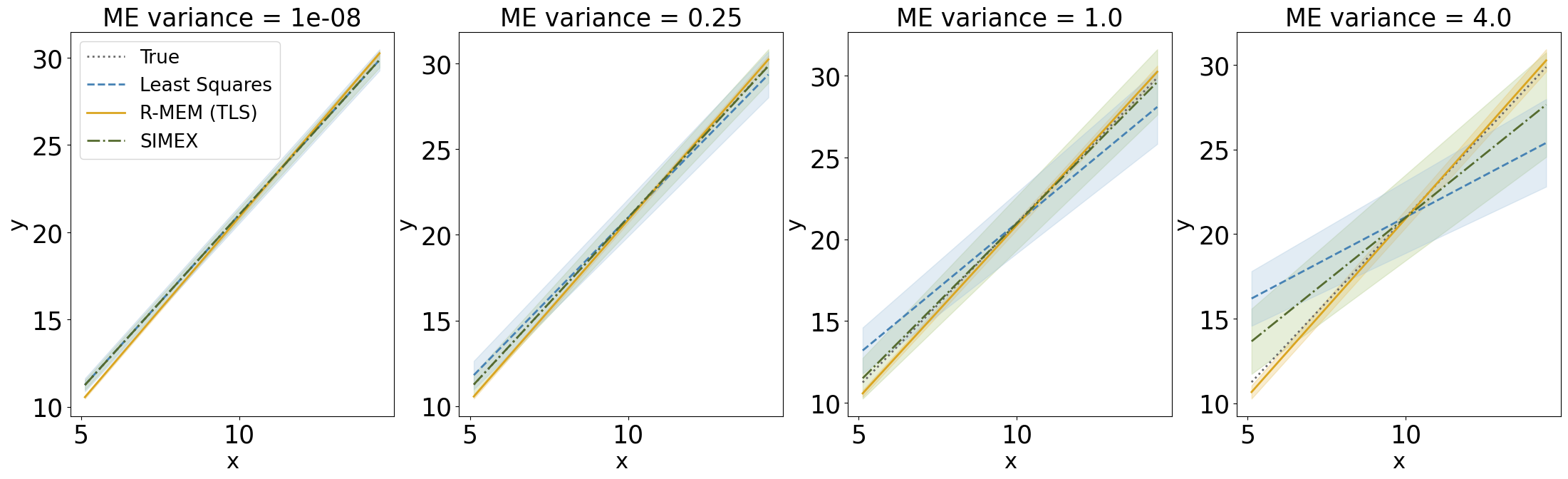}
 \caption{Model fit for linear regression in the well-specified case for an increasing ME variance (from left to right) over 100 simulations. The true model fit is shown along with our method (R-MEM (MMD)), nonlinear least squares (Least Squares) and SIMEX.}
  \label{fig:class-lin-wellsp}
\end{figure}
We also report the corresponding MSE and standard deviations of the parameter estimates over 100 runs are reported in Table \ref{tab:class-lin-wellsp}. We observe a decline in performance compared to the misspecified scenario for the first parameter in the Robust-MEM case with smaller ME variance values. Nonetheless, the method maintains its robustness, even when the ME variance equals 4.
\begin{table}
\caption{Mean Squared Error with associated standard deviation for $\theta = (\theta_1, \theta_2)$ in the Linear Regression example with Classical ME and well-specified prior. Results are reported for an increasing ME variance ($\sigma^2_{\nu}$) in the observed data, over 100 simulations.}
\label{tab:class-lin-wellsp}
\begin{tabular}{llllll}
\hline
 Method & & \multicolumn{4}{c}{ME variance}  \\
   &  &  1e-06 &  0.5 & 1.0 & 2.0 \\
\hline
\multirow[t]{2}{*}{LS} & $	\theta_1$ & 0.063 (0.083) & 1.632 (1.232) & 16.945 (7.153) & 103.85 (21.629) \\
 & $	\theta_2$ & 0.001 (0.001) & 0.016 (0.012) & 0.17 (0.071) & 1.035 (0.209) \\
\multirow[t]{2}{*}{Robust-MEM} & $	\theta_1$ & 1.615 (0.129) & 1.603 (0.166) & 1.548 (0.273) & 1.319 (0.573) \\
(TLS) & $	\theta_2$ & 0.013 (0.002) & 0.013 (0.003) & 0.013 (0.004) & 0.012 (0.006) \\
\multirow[t]{2}{*}{SIMEX} & $	\theta_1$ & 0.032 (0.046) & 0.165 (0.213) & 0.984 (1.004) & 26.326 (12.685) \\
 & $	\theta_2$ & 0.0 (0.0) & 0.002 (0.002) & 0.01 (0.01) & 0.264 (0.125) \\
\multirow[t]{2}{*}{TLS} & $	\theta_1$ & 0.118 (0.156) & 0.313 (0.561) & 1.47 (1.948) & 14.871 (11.412) \\
 & $	\theta_2$ & 0.001 (0.002) & 0.003 (0.006) & 0.015 (0.019) & 0.15 (0.112) \\
 \hline
\end{tabular}
\end{table}

\subsection{Additional results for the Nonlinear Regression case with Berkson error} \label{app:sec-nonlin-berk-wellsp}
Similarly, we consider the example of nonlinear regression with Berkson ME from Section \ref{sec-nonlin-reg} in the well-specified case where both the true and the prior ME distributions are Gaussian with increasing variance. Results are reported for 100 simulations in Fig. \ref{fig:berk-nonlin-wellsp}.
\begin{figure}[t]
    \center
\includegraphics[width=\textwidth]{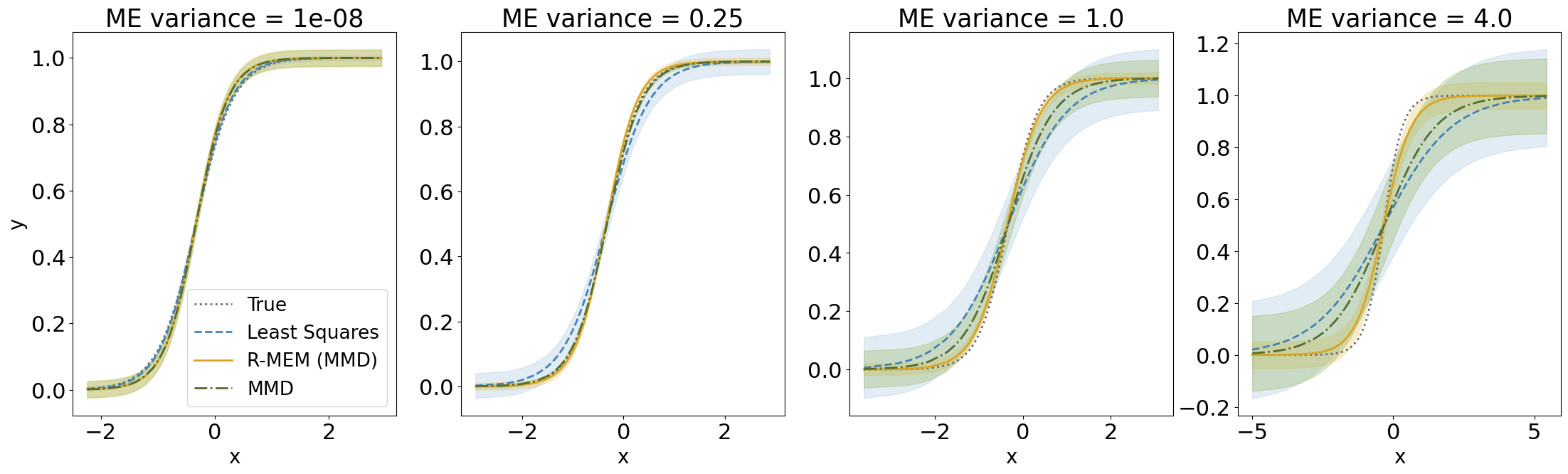}
 \caption{Model fit for the sigmoid curve regression function with Berkson ME for an increasing ME variance (from left to right) over 100 simulations with a well-specified prior centering measure. The true model fit is shown along with our method (R-MEM (MMD)), nonlinear least squares (Least Squares) and minimum MMD estimator (MMD).}
  \label{fig:berk-nonlin-wellsp}
\end{figure}
We observe very similar behavior to the misspecified case. For ME variances of $\leq 0.25$, all methods exhibit comparable performance. However, as the variance escalates, the Robust-MEM (MMD) approach stands out as the most resilient to ME effects. This is also seen by observing the associated MSE and their variances in Table \ref{tab:berk-mse-wellsp}. 
\begin{table}[t]
\caption{Mean Squared Error with associated standard deviation for $\theta = (\theta_1, \theta_2)$ in the Nonlinear Regression example with Berkson ME and a wellspecified prior centering measure. Results are reported for an increasing ME variance ($\sigma^2_{\nu}$) in the observed data, over 100 simulations.}
\label{tab:berk-mse-wellsp}
\begin{tabular}{llllll}
\hline
Method &  & \multicolumn{4}{c}{ME variance}  \\
&  & $1e-08$ & $0.25$ & $1.0$ & $4.0$ \\
\hline
\multirow[t]{2}{*}{MMD} & $	\theta_1$ & 0.245 (0.503) & 0.192 (0.503) & 0.213 (0.219) & 0.49 (0.335) \\
 & $	\theta_2$ & 1.335 (2.106) & 0.953 (1.386) & 1.526 (1.224) & 3.846 (1.524) \\
\multirow[t]{2}{*}{Robust-MEM} & $	\theta_1$ & 0.139 (0.24) & 0.127 (0.22) & 0.13 (0.179) & 0.244 (0.279) \\
(MMD) & $	\theta_2$ & 0.442 (0.346) & 0.238 (0.278) & 0.4 (0.613) & 1.264 (1.223) \\
\hline
\end{tabular}
\end{table}
The coverage probabilities in the well-specified case are presented in Table \ref{tab:berk-cov-wellsp}. 
\begin{table}
\centering
\caption{Coverage probability for the Robust-MEM (MMD) method in the Berkson error model with a well-specified prior distribution for the ME. The coverage probability is calculated as the frequency over 100 simulations that the 90\% posterior credible interval included the true parameter value.}
\label{tab:berk-cov-wellsp}
\begin{tabular}{lrrrr}
\hline
& \multicolumn{4}{c}{ME variance} \\ 
& 1e-08 & 0.25 & 1 & 4 \\
\hline
$	\theta_1$ & 0.73 & 0.84 & 0.92 & 0.890000 \\
$	\theta_2$ & 1 & 1 & 0.99 & 0.99 \\
\hline
\end{tabular}
\end{table}

\subsection{Results for Nonlinear Regression with Classical error} \label{app-nonlin-class}
We now evaluate the nonlinear regression example for the Classical ME case. 
We report results for both the well-specified and misspecified cases of prior centering measures. In the well-specified case in (\ref{eq-bayesian-post}), we set $\F_X := N(0,1)$ and $\F_{W \given X} := N(x, \sigma^2_{\nu})$ where $\sigma^2_\nu$ is the true ME variance. The prior centering measure $\F_{X \given w_i}$ then has density $f_{X \given w_i}(x \given w_i) \propto f_{W \given X}(w_i \given x)f_X(x)$ and is again a Normal distribution. 
In the misspecified case, we set $\F_X := N(0,1)$ and $\F_{W \given X} := N(x, 0.2)$. 
The obtained mean regression fit over 100 simulations is illustrated in Fig. \ref{fig:class-nonlin} along with the estimates obtained from standard nonlinear least squares and the SIMEX procedure. 
\begin{figure}[t]
     \centering
     \begin{subfigure}[b]{\textwidth}
         \centering
         \includegraphics[width=\textwidth]{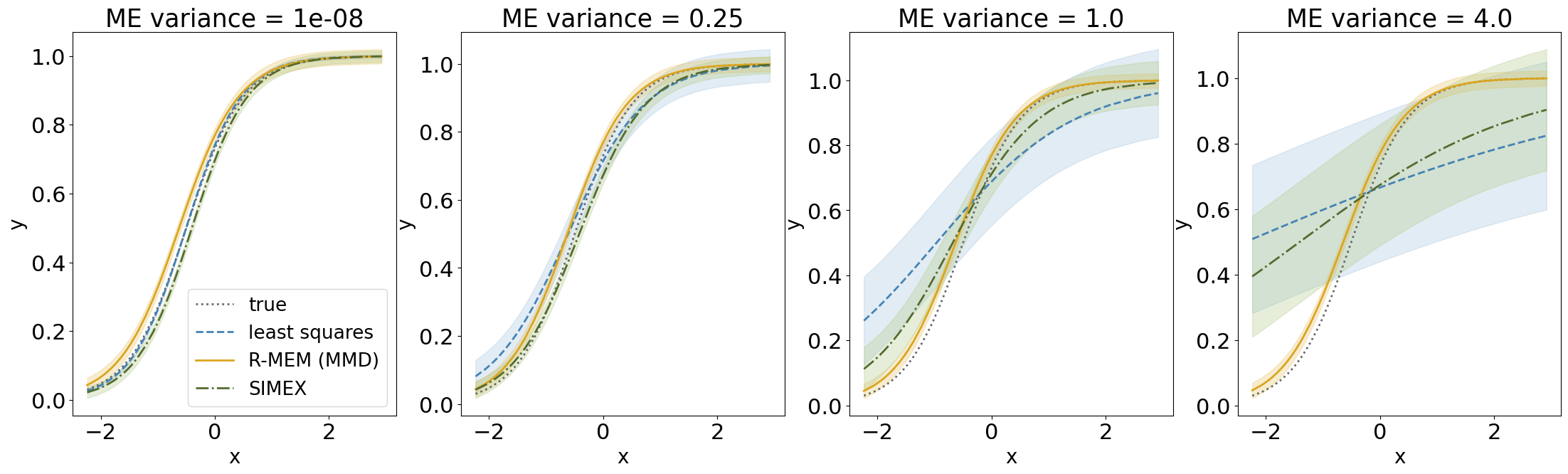}
         \caption{ME variance is correctly specified}
         \label{fig:class-nonlin-wellsp}
     \end{subfigure}
     \hfill
     \begin{subfigure}[b]{\textwidth}
         \centering
         \includegraphics[width=\textwidth]{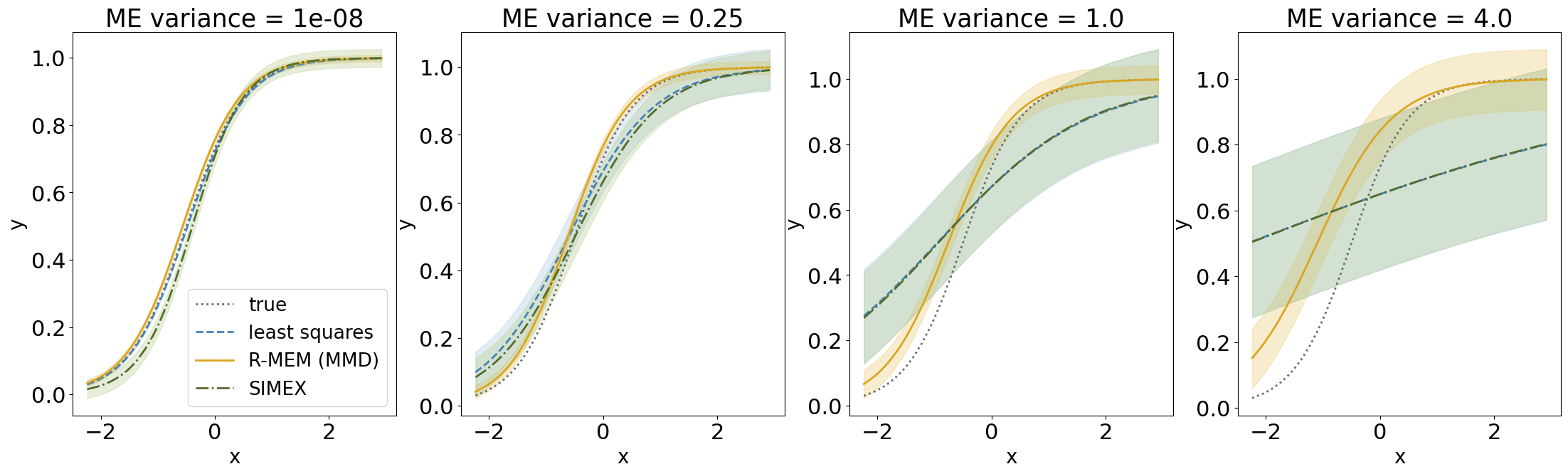}
         \caption{ME variance is misspecified}
         \label{fig:class-nonlin-misp}
     \end{subfigure}
        \caption{Model fit for the sigmoid curve regression function for Classical ME for an increasing ME variance (from left to right) over 100 simulations. The true model fit is shown along with our method (R-MEM (MMD)), nonlinear least squares (Least Squares) and SIMEX.}
        \label{fig:class-nonlin}
\end{figure}
In the SIMEX procedure we use the same a priori values for the ME variance, i.e. the true ME variance in the well-specified case and 0.2 in the misspecified. 
We observe that the Robust-MEM (MMD) remains more robust to ME in both the well-specified and misspecified cases for all degrees of ME. We further assess the average MSE and associated standard deviation over 100 simulations. In Table \ref{tab:class-nonlin-wellsp} we observe the case of well-specified ME variance. As expected, in the absence of ME on the covariates, Robust-MEM (MMD) performs slightly worse however remains the most robust over increasing values of ME. 
\begin{table}[ht!]
\centering
\caption{Mean Squared Error with associated standard deviation for $\theta = (\theta_1, \theta_2)$ in the nonlinear regression example with Classical ME of \textit{known} ME variance (well-specified case). Results are reported for an increasing ME variance ($\sigma^2_{\nu}$) in the observed data, over 100 simulations.}
\label{tab:class-nonlin-wellsp}
\begin{tabular}{llllll}
\hline
 Method &  & \multicolumn{4}{c}{ME variance}  \\
 &  & $1e-08$ & $0.25$ & $1.0$ & $4.0$ \\
\hline
\multirow[t]{2}{*}{LS} & $\theta_1$ & 0.096 (0.179) & 0.077 (0.093) & 0.118 (0.201) & 0.139 (0.127) \\
 & $	\theta_2$ & 0.264 (0.551) & 0.385 (0.314) & 1.455 (0.491) & 2.926 (0.36) \\
\multirow[t]{2}{*}{R-MEM (MMD)} & $\theta_1$ & 0.1 (0.117) & 0.101 (0.118) & 0.106 (0.122) & 0.113 (0.128) \\
 & $	\theta_2$ & 0.059 (0.078) & 0.056 (0.079) & 0.058 (0.087) & 0.06 (0.09) \\
\multirow[t]{2}{*}{SIMEX} & $\theta_1$ & 0.096 (0.179) & 0.111 (0.202) & 0.261 (1.562) & 0.13 (0.127) \\
 & $	\theta_2$ & 0.264 (0.552) & 0.323 (0.663) & 0.748 (1.019) & 2.244 (0.576) \\
 \hline
\end{tabular}
\end{table}
To further study the effect of a misspecified prior we look at the same metrics in the misspecified case in Table \ref{tab:class-nonlin-misp}. The impact of a misspecified ME variance appears minimal when the variance is small, indicating a lesser degree of misspecification. However, it becomes increasingly significant as the variance grows larger. Robust-MEM maintains a smaller overall MSE while also achieving satisfying coverage as reported in Table \ref{tab:class-nonlin-cover} for both well-specified and misspecified cases. 
\begin{table}
\centering
\caption{Mean Squared Error with associated standard deviation for $\theta = (\theta_1, \theta_2)$ in the nonlinear regression example with Classical ME of \textit{unknown} ME variance (misspecified case). Results are reported for an increasing ME variance ($\sigma^2_{\nu}$) in the observed data, over 100 simulations.}
\label{tab:class-nonlin-misp}
\begin{tabular}{llllll}
\hline
 Method &  & \multicolumn{4}{c}{ME variance}  \\
 &  & $1e-08$ & $0.25$ & $1.0$ & $4.0$ \\
\hline
\multirow[t]{2}{*}{LS} & $	\theta_1$ & 0.096 (0.179) & 0.077 (0.093) & 0.118 (0.202) & 0.139 (0.127) \\
 & $	\theta_2$ & 0.264 (0.551) & 0.385 (0.314) & 1.455 (0.491) & 2.926 (0.36) \\
\multirow[t]{2}{*}{Robust-MEM} & $	\theta_1$ & 0.1 (0.12) & 0.137 (0.154) & 0.252 (0.236) & 0.642 (0.408) \\
 (MMD) & $	\theta_2$ & 0.052 (0.063) & 0.066 (0.08) & 0.121 (0.134) & 0.358 (0.295) \\
\multirow[t]{2}{*}{SIMEX} & $	\theta_1$ & 0.121 (0.27) & 0.081 (0.093) & 0.131 (0.31) & 0.139 (0.128) \\
 & $	\theta_2$ & 0.46 (1.275) & 0.337 (0.305) & 1.414 (0.49) & 2.919 (0.362) \\
 \hline
\end{tabular}
\end{table}
\begin{table}
\centering
\caption{Coverage probability of the Robust-MEM (MMD) method for the nonlinear regression example with Classical ME. Coverage probability was calculated as the frequency over 100 simulations that the 90\% posterior credible interval included the true parameter value. \textit{Well-specified} refers to the case of known ME variance whereas \textit{misspecified} refers to the misspecification of the ME variance.}
\label{tab:class-nonlin-cover}
\begin{tabular}{lrrrrr}
\hline
& & \multicolumn{4}{c}{ME variance} \\
& & 1e-08 & 0.5 & 1.0 & 2.0 \\
\hline
\multirow[t]{2}{*}{Well-specified} &
$\theta_1$ & 0.96 & 0.97 & 0.98 & 0.98 \\
& $\theta_2$ & 1 & 1 & 1 & 1 \\
\multirow[t]{2}{*}{Misspecified} &
$\theta_1$ & 0.96 & 0.97 & 0.98 & 0.97 \\
& $\theta_2$ & 1 & 1 & 1 & 1 \\
\hline
\end{tabular}
\end{table}
\subsection{Mental Health Study} \label{app-bcr-model}
In this section we give details of the model used in \cite{harezlak2018semiparametric} for the treatment effect estimation in the Mental Health study which we also adopt in Section \ref{sec-exp-berry}. 

For each observation $i = 1, \dots, n_{placebo}$ in the placebo group we model
\begin{talign*}
    g^0(\theta^0,x) &= \beta_0 + \beta_1 x + \sum_{k=1}^K u_{0k} z_k(x), \\
    y_i \given x_i &\sim \P_{g^0(\theta^0,x_i)} = N(g^0(\theta^0,x_i), \sigma^2_{\epsilon})
\end{talign*}
for parameters $\theta^0 = (\beta_0, \beta_1, u_{01}, \dots, u_{0K})$ and spline basis $\{z_k\}_{k=1}^K$. 
Similarly, for each patient $i = 1, \dots, n_{treatment}$ in the treatment group we model 
\begin{talign*}
    g^1(\theta^1,x) &= \beta_0 + \beta_0^{\text{drug}} + (\beta_1 + \beta_1^{\text{drug}}) x + \sum_{k=1}^K u_{1k} z_k(x), \\
    y_i \given x_i &\sim \P_{g^1(\theta^1,x_i)} = N(g^1(\theta^1,x_i), \sigma^2_{\epsilon}).
\end{talign*}
We define the parameter of interest $\theta = (\beta, u_0, u_1)$ where
\begin{talign*}
    \beta = (\beta_0, \beta_1, \beta_0^{\text{drug}}, \beta_1^{\text{drug}}), u_0 = (u_{01}, \dots, u_{0K}) \quad \text{and} \quad  u_1 = (u_{11}, \dots, u_{1K}).
\end{talign*}
Let $n = n_{treatment} + n_{placebo}$ and define $Y = (y_1, \dots, y_n)$ and the design matrices 
\begin{talign*}
   x = &\begin{pmatrix} 1 & x_1 & 1-I_1 & (1-I_1)x_1 \\
   \vdots & \vdots & \vdots & \vdots \\
   1 & x_n & 1-I_n & (1-I_n)x_n
   \end{pmatrix}, \quad 
   Z_0 = \begin{pmatrix} I_1 z_1 (x_1) & \dots & I_1 z_K(x_1) \\
   \vdots & \ddots & \vdots \\
    I_n z_1 (x_n) & \dots & I_n z_K(x_n)
   \end{pmatrix}  \\
      &\quad \quad Z_1 = \begin{pmatrix} (1-I_1) z_1 (x_1) & \dots & (1-I_1) z_K(x_1) \\
   \vdots & \ddots & \vdots \\
    (1-I_n) z_1 (x_n) & \dots & (1-I_n) z_K(x_n)
   \end{pmatrix}.
\end{talign*}
Based on this notation, the full Bayesian model proposed in \cite{harezlak2018semiparametric} is 
\begin{talign*}
    Y \given x \sim \P_{g(\theta,x)} := N(x \beta + Z_0 u_0 + Z_1 u_1, \sigma_{\epsilon}^2 I).
\end{talign*}
Based on this model, the average treatment effect given a baseline score $x$ is:
\begin{talign*}
    ATE(x) \approx \beta_0^{\text{drug}} + \beta_1^{\text{drug}}x + \sum_{k=1}^K (u_{1k} - u_{0k}) z_k(x). 
\end{talign*}
We use the same spline basis as the one used in the \texttt{HRW} package by \cite{harezlak2018semiparametric}, namely the penalized spline defined by 
\begin{talign*}
    z_k(x) = (x - \kappa_k)_{+}
\end{talign*}
where $x_{+} = \max(x,0)$ and the knots $\{\kappa_k\}_{k=1}^K$ are approximately equally spaced points with respect to the quantiles of the $\{x_i\}_{i=1}^n$. 

In this application we set $c = 1$ and $T = 100$ in the Robust-MEM algorithm. For the optimisation step we used a learning rate of $\eta = 0.001$ and we employ a random restart method for the initialization of the optimisation: 100 points are sampled uniformly at random and the MMD loss is calculated using the observed data. The point with the smallest loss is used as the initial point and the resulting parameter estimate with the smallest loss during optimisation is picked. The uniform intervals are set to $[0,2]$ for the regression parameters $\beta$ and $[-0.1,0.1]$ for the spline parameters $u_0, u_1$. Posterior samples are obtained from 200 bootstrap iterations. Of course, when the optimisation objective is non-convex, there is usually a trade-off between coverage capability and MSE of the posterior estimates or intervals depending on the initialization of the optimisation. A wider uniform interval usually leads to higher uncertainty in our posterior beliefs which can exhibit better coverage probability of estimated intervals. On the contrary, if one is interested in the MSE of the posterior estimations then one can perform the above procedure more than 100 times to get a better initial point of the optimisation or use a narrower uniform distribution based on prior beliefs or historic data.

\end{document}